\newcommand{\defn}[1]{\emph{\textbf{{\boldmath #1}}}\xspace}
\renewcommand{\paragraph}[1]{\vspace{0.1in}\noindent{\bf \boldmath #1}} 
\newtheorem{theorem}{Theorem}
\newtheorem{proposition}{Proposition}
\newtheorem{lemma}{Lemma}
\newtheorem{corollary}{Corollary}
\theoremstyle{definition}
\theoremstyle{remark}
\newtheorem*{remark}{Remark}
\newcommand{\E}{\mathbb{E}}
\renewcommand{\log}{\lg}
\newcommand{\poisson}{\text{Pois}}
\newcommand{\poly}{\operatorname{poly}}
\newcommand{\polylog}{\operatorname{polylog}}
\renewcommand{\epsilon}{\varepsilon}
\title{Linear Probing Revisited:\\ Tombstones Mark the Death of Primary Clustering\\
\vspace{1.2ex}}
\author{Michael A. Bender\footnote{Supported in part by NSF grants 
CCF-2106827, 
CCF-1725543, 
CSR-1763680,  
CCF-1716252, and 
CNS-1938709.   
}\\
{\small Stony Brook}
\and
Bradley C. Kuszmaul\\
{\small Google Inc.}
\and 
William Kuszmaul\footnote{Supported in part by an NSF GRFP fellowship and a Fannie and John Hertz Fellowship. Research was partially sponsored by the United States Air Force Research Laboratory and was accomplished under Cooperative Agreement Number FA8750-19-2-1000. The views and conclusions contained in this document are those of the authors and should not be interpreted as representing the official policies, either expressed or implied, of the United States Air Force or the U.S. Government. The U.S. Government is authorized to reproduce and distribute reprints for Government purposes notwithstanding any copyright notation herein.}\\
{\small MIT}
}
\date{}
\begin{document}
\maketitle

\smallskip

\begin{abstract}
First introduced in 1954, the linear-probing hash table is among the oldest data structures in computer science, and thanks to its unrivaled data locality, linear probing continues to be one of the fastest hash tables in practice.  It is widely believed and taught, however, that linear probing should never be used at high load factors; this is because of an effect known as primary clustering which causes insertions at a load factor of $1 - 1 /x$ to take expected time $\Theta(x^2)$ (rather than the intuitive running time of $\Theta(x)$).  The dangers of primary clustering, first discovered by Knuth in 1963, have now been taught to generations of computer scientists, and have influenced the design of some of the most widely used hash tables in production.

We show that primary clustering is not the foregone conclusion that it is reputed to be. We demonstrate that seemingly small design decisions in how deletions are implemented have dramatic effects on the asymptotic performance of insertions: if these design decisions are made correctly, then even if a hash table operates continuously at a load factor of $1 - \Theta(1/x)$, the expected amortized cost per insertion/deletion is $\tilde{O}(x)$. This is because the tombstones left behind by deletions can actually cause an \emph{anti-clustering} effect that combats primary clustering. Interestingly, these design decisions, despite their remarkable effects, have historically been viewed as simply implementation-level engineering choices.

We also present a new variant of linear probing (which we call graveyard hashing) that completely eliminates primary clustering on \emph{any} sequence of operations: if, when an operation is performed, the current load factor is $1 - 1/x$ for some $x$, then the expected cost of the operation is $O(x)$. Thus we can achieve the data locality of traditional linear probing without any of the disadvantages of primary clustering. One corollary is that, in the external-memory model with a data blocks of size $B$, graveyard hashing offers the following remarkably strong guarantee: at any load factor $1 - 1/x$ satisfying $x = o(B)$, graveyard hashing achieves $1 + o(1)$ expected block transfers per operation. In contrast, past external-memory hash tables have only been able to offer a $1 + o(1)$ guarantee when the block size $B$ is at least $\Omega(x^2)$.

Our results come with actionable lessons for both theoreticians and practitioners, in particular, that well-designed use of tombstones can completely change the asymptotic landscape of how the linear probing behaves (and even in workloads without deletions). 
\end{abstract}

\thispagestyle{empty} 
\newpage
\pagenumbering{arabic}

\section{Introduction}

The linear probing hash table~\cite{Sedgewick83,Sedgewick90, Kruse84, LewisDe91, Weiss00, MainSa01, Standish95, DrozdekSi95, CormenLeRi09, McMillan14, GoodrichTa15, TremblaySo84} is among the most fundamental data structures to computer science. The hash table takes the form of an array of some size $n$, where each \defn{slot} of the array either contains an element or is empty (i.e., is a \defn{free slot}).
To insert a new element $u$, the data structure computes a hash $h(u) \in [n]$, and places $ u $ into the first free slot out of the sequence $h(u), \; h(u)+1, \; h(u)+2, \; \ldots \;$ (modulo $n$).
Likewise, a query for $u$ simply scans through the slots, terminating when it finds either $u$ or a free slot. 

There are two ways to implement deletions: immediate compactions and  tombstones.
An immediate compaction rearranges the neighborhood of elements around a deletion to make it as though the element was never there~\cite{Knuth98Vol3,JimenezMa18}. 
Tombstones, the approach we default to in this paper, replaces the deleted element with a \defn{tombstone}~\cite{Knuth98Vol3,Peterson57}.   Tombstones interact asymmetrically with queries and insertions: queries treat a tombstone as being a value that does not match the query, whereas insertions treat the tombstone as a free slot. 
In order to prevent slowdown from an accumulation of tombstones, the table is occasionally rebuilt in order to clear them out. 

\paragraph{The appeal of linear probing.}
Linear probing was discovered in 1954 by IBM researchers Gene Amdahl, Elaine McGraw, and Arthur Samuel, who developed the hash table design while writing an assembly program for the IBM 701 (see discussion in \cite{Knuth63,Knuth98Vol3}).
The discovery came just a year after fellow IBM researcher Hans Peter Luhn introduced chained hashing (also while working on
the IBM 701) \cite{Knuth98Vol3}.\footnote{Interestingly, Luhn's discussion of chaining may be the first known use of linked lists \cite{Knuth98Vol3, Knuth97Vol1}.
Linked lists are often misattributed~\cite{WikipediaLinkedList} as having been invented by Newell, Shaw, and Simon~\cite{NewellSi56, NewellSh57} during the RAND Corporation's development of the IPL-2 programming language in 1956 (see discussion in~\cite{Knuth97Vol1}).} Linear probing shared the relative simplicity of chaining, while offering improved space utilization by avoiding the use of pointers.

The key property that makes linear probing appealing, however, is its data locality: each operation only needs to access one localized region of memory.
In 1954, this meant that queries could often be completed at the cost of accessing only a single drum track \cite{Knuth63, Peterson57}. 
In modern systems, it means that queries can often be completed in just a single cache miss~\cite{RichterAlDi15}.\footnote{Moreover, even when multiple cache misses occur, since those misses are on adjacent cache lines, hardware prefetching can mitigate the cost of the subsequent misses. Both the Masstree~\cite{MaoKoMo12} and the Abseil B-tree \cite{Abseil17} treat the effective cache-line size as 256 bytes even though the hardware cache lines are 64 bytes.}
The result is that, over the decades, even as computer architectures have changed and as the study of hash tables has evolved into one of the richest and most studied areas of algorithms (see related work in Section~\ref{sec:related}), linear probing has persisted as one of the best performing hash tables in practice~\cite{RichterAlDi15}.

\paragraph{The drawback: primary clustering.}
Unfortunately, the data locality of linear probing comes with a major drawback known as \defn{primary clustering} \cite{Knuth63, KonheimWe66}. Consider the setting in which one fills a linear-probing hash table up to a \defn{load factor} of $1 - 1/x$ (i.e., there are $(1 - 1/x)n$ elements)
and then performs one more insertion. Intuitively, one might expect the insertion to take time $\Theta (x)$, since one in every $x$ slots are free. As Knuth \cite{Knuth63} discovered in 1963\footnote{Knuth's result \cite{Knuth63} was never formally published, and was subsequently rediscovered by Konheim and Weiss in 1966~\cite{KonheimWe66}.}, however, the insertion actually runs much slower, taking expected time $\Theta(x^2)$.

The reason for these slow insertions is that elements in the hash table have a tendency to cluster together into long runs; this is known as primary clustering. 
Primary clustering is often described as a ``winner keeps winning'' phenomenon, in which the longer a run gets,
the more likely it is to accrue additional elements; see e.g.,~\cite{CormenLeRi09, DrozdekSi95, Kruse84, LewisDe91, Morris68, TremblaySo84, WikipediaPrimaryClustering, Smith04}.
As Knuth discusses in \cite{Knuth98Vol3}, however, winner-keeps-winning is not the main cause of
primary clustering.\footnote{This can be easily seen by the following thought experiment. Consider the length $\ell$ of the run at a given position in the hash table,
and then consider a sequence of $(1 - 1/x)n$ insertions where we model each insertion as having probability $(1 + \ell)/n$ of increasing the length of the run by $1$ (here
$\ell$ is the length of the run at the time of the insertion). The expected length $\ell$ of the run at the end of the insertions won't be $\Theta(x^2)$. In fact,
rather than being an asymptotic function of $x$, it will simply be $\Theta(1)$.}
The true culprit is the globbing together of runs: a single insertion can connect together two already long runs into a new run that is substantially longer.

Interestingly, primary clustering is an asymmetric phenomenon, affecting insertions but not queries. Knuth showed that if a query is performed on
a random element in the table, then the expected time taken is $\Theta(x)$~\cite{Knuth63}. This can be made true for all queries (including negative queries)
by implementing a simple additional optimization: rather than placing elements at the end of a run on each insertion, order the elements within each run by their hashes. This technique, known as \defn{ordered linear probing}~\cite{AmbleKn74,CelisLaMu85}, allows a query to terminate as soon as it reaches an element whose hash is sufficiently large.\footnote{Contemporary works sometimes also refer to this as \defn{Robin hood hashing}, but as noted in \cite{CelisLaMu85}, Robin hood hashing is actually a generalization of ordered linear probing
to other open-addressing schemes such as double hashing, uniform probing, etc.}
Whereas insertions must traverse the entire run, queries need not.

Knuth's result predates the widespread use of asymptotic notation in algorithm analysis.
Thus Knuth derives not only the asymptotic formula of $\Theta(x^2)$ for expected insertion time, 
but also a formula of 
\begin{equation}
\frac{1}{2}\left(1 + \frac{1}{x^2}\right),
\label{eq:knuthexact}
\end{equation}
which is exact up to low order terms. 
There has also been a great deal of follow-up work determining detailed tail bounds and other generalizations of Knuth's result; see e.g.,~\cite{FlajoletPoVi98,Viola05,KonheimWe66,MendelsonYe80,JansonVi16, ViolaPo96, Viola10}.

\paragraph{The practical and cultural effects of primary clustering.}
Primary clustering is taught extensively in both theoretical and practical courses~\cite{Fisher01, Kesden07, DemaineLe05, Demaine12-6897-lecture10, GriesJa14, Bauer15, deGreef17, Erickson17, Mount19, Mitra21, Sullivan21, Schwarz21}. Many textbooks not only teach primary clustering~\cite{Sedgewick83,Sedgewick90, Kruse84, LewisDe91, Weiss00, MainSa01, Standish95, DrozdekSi95, CormenLeRi09, McMillan14, GoodrichTa15, TremblaySo84}, but also teach the full formula \eqref{eq:knuthexact} for insertion time\cite{Sedgewick83,Sedgewick90, Kruse84, LewisDe91, Weiss00, MainSa01, Standish95, DrozdekSi95, TremblaySo84}.\footnote{Some books also go through worked examples or tables of \eqref{eq:knuthexact} to give intuition for how badly linear probing scales, e.g.,~\cite{Sedgewick90, LewisDe91, Standish95, TremblaySo84}.} (For a more detailed summary of how courses and textbooks teach linear probing and primary clustering, see Appendix \ref{sec:books}.)


Because \eqref{eq:knuthexact} is exact, it is viewed as representing the full picture of how linear probing behaves at high load factors. 
One consequence is that there has been little empirical work on analyzing the asymptotics of real-world linear probing at high load factors.
(And for good reason, what would be the point of verifying what is already known?)
As Sedgewick observed in 1990~\cite{Sedgewick90}, it is not even clear that the classic formula has 
been empirically verified at high load factors.

A common recommendation~\cite{Weiss00, DrozdekSi95, CormenLeRi09, WikipediaLinearProbing, Fisher01, Kesden07, GriesJa14, Bauer15, deGreef17, Sullivan21} is that, in order to avoid primary clustering,
one should use \defn{quadratic probing} \cite{Maurer68, HopgoodDa72, WikipediaQuadraticProbing} instead.
Whereas linear probing places each element $ x $ in the first available position out of the sequence 
$ h (x),h (x) +1,h (x) +2,h(x) + 3,\ldots$, quadratic probing uses the first available position out of a quadratic sequence  such as
$h(x), h(x) + 1, h(x) + 4, h(x) + 9, \ldots, h(x) + k^2, \ldots$ or $h(x), h(x)+1, h(x)+3, \ldots, h(x)+k(k-1)/2, \ldots$.
By using a more spread out sequence of probes, quadratic probing seems to eliminate primary clustering
in practice. In doing so, quadratic probing also compromises the most attractive trait of linear probing, its data locality. This tradeoff is typically viewed as unfortunate but necessary.

The dangers of primary clustering (and the advice of using quadratic probing as a solution) have been taught to generations of computer scientists over roughly six decades. 
The folklore advice has shaped some of the most
widely used hash tables in production, including the high-performance hash tables authored by both Google~\cite{Abseil17} and Facebook~\cite{BronsonSh19}.\footnote{Both hash tables are implemented using variants of quadratic probing.}
The consequence is that primary clustering---along with the design compromises made to avoid it---has a first-order impact on the performance of hash tables used by millions of users every day. 

\paragraph{This paper: why primary clustering isn't a foregone conclusion.}
In this paper, we reveal that primary clustering is not the fixed and universal phenomenon that it is reputed to be.
When implementing linear probing, there is a small set of design decisions that are typically treated as implementation-level engineering choices. We show that these decisions actually have a remarkable 
 effect on performance: even if a workload operates continuously at a load factor of
$1 - \Theta(1/x)$, if the design decisions are made correctly, then the expected amortized cost per insertion can be decreased all the way to $\tilde{O}(x)$.
Our results come with actionable lessons for practitioners, indicating that implementation-level decisions can have unintuitive asymptotic
consequences on performance.\footnote{The right thing to do may even be the \emph{opposite} of what the literature 
recommends \cite{JimenezMa18,AttractiveChaos19}.} 

We also present a new variant of linear probing, which we call \defn{graveyard hashing}, that completely eliminates primary clustering on any sequence of operations: if, when an operation is performed, the current load factor is $1 - 1/x$ for some $x$, then the expected cost of the operation is $O(x)$. 

Thus we achieve the data locality of traditional linear probing without any of the disadvantages of primary clustering. One corollary is that, in the external-memory model with a data blocks of size $B$, graveyard hashing offers the following remarkably strong guarantee: at any load factor $1 - 1/x$ satisfying $x = o(B)$, graveyard hashing achieves $1 + o(1)$ expected block transfers per operation. In contrast, past external-memory hash tables have only been able to offer a $1 + o(1)$ guarantee when the block size $B$ is at least $\Omega(x^2)$ \cite{JensenPa08}.

\paragraph{What the classical analysis misses.}
Classically, the analysis of linear probing considers the costs of insertions in an insertion-only workload.
Of course, the fact that the final insertion takes expected time $\Theta(x^2)$ doesn't mean that all of
the insertions do; most of the insertions are performed at much lower load factors, and the average cost is only $\Theta(x)$. 

The more pressing concern is what happens for workloads that operate continuously at high load factors, 
for example, the workload in which a user first fills the table to a load factor of $ 1-1/x $,
and then alternates between insertions and deletions indefinitely. Now almost all of the insertions
are performed at a high load factor. Conventional wisdom has it that these insertions must therefore
all incur the wrath of primary clustering.


This conventional wisdom misses an important point, however, which is that the tombstones created by deletions actually substantially change the combinatorial structure of the hash table. Whereas insertions add elements at the ends of runs, deletions tend to place tombstones in the middles of runs.
If implemented correctly, then the anti-clustering effects of deletions actually outpace the clustering effects of insertions. 

We call this new phenomenon \defn{primary anti-clustering}.
The effect is so powerful that, as we shall see, it is even worthwhile to simulate deletions in insertion-only workloads 
by prophylactically adding tombstones. 

Our results flip the narrative surrounding deletions in hash tables: whereas past work on analyzing tombstones  \cite{JimenezMa18,AttractiveChaos19} has focused on showing that tombstones do not \emph{degrade} performance in various open-addressing-based hash tables, we argue that tombstones 
actually \emph{help} performance. By harnessing the power of tombstones in the right way, we can rewrite the asymptotic landscape of linear probing. 

\subsection{Results}
We begin by giving nearly tight bounds on the amortized performance of ordered linear probing. 

There are two design decisions that affect performance: (1) the use of tombstones (which we assume by default) and (2) the frequency with which the hash table is rebuilt and the tombstones are cleared out. 
Thus, in addition to the size parameter $n$ and the load-factor parameter $x$, our analysis uses a \defn{rebuild window size} parameter $R$, which is the number of insertions that occur between rebuilds. 

The parameter $R$ is classically set to be $n / (2x)$, since this means that the number of tombstone cannot affect the asymptotic load factor.
Each rebuild can be implemented in time $\Theta(n)$, so the rebuilds only contribute amortized $\Theta(x)$ time per insertion, which is a low-order term. 


\paragraph{\boldmath A subquadratic analysis of linear-probing insertions.}
Our first result considers the classical setting $R = n/(2x)$ and analyzes a \defn{hovering workload}, i.e., an alternating sequence of
inserts/deletes at a load factor of $1 - 1/x$. 

We prove that the expected amortized cost of each insertion is $\tilde{O}(x^{1.5})$.
This is tight, which we establish with a lower bound of $\Omega (x ^ { 1.5 }\sqrt {\log\log x }) $. A surprising takeaway is that the answer is both $\tilde{O}(x^{1.5})$ and $\omega(x^{1.5})$.

This first result is already substantially faster than the classical $\Theta(x^2)$ bound, but it still has several weaknesses. The first (and most obvious) weakness is that we are still not achieving the ideal bound of $\tilde{O}(x)$.  
The second weakness is that, although the result applies to hovering workloads, it doesn't generalize to \emph{arbitrary} workloads, as can be seen with the following
pathological example: consider a workload in which every rebuild window consists of $R - 1$ insertions followed by $R$ deletions followed by $1$ insertion. The first $R - 1$ insertions in each rebuild window cannot benefit at all from tombstones and thus necessarily incur $\Theta(x^2)$ expected time each.

It turns out that both of these weaknesses can be removed if we simply use a larger rebuild window size $R $.
Intuitively, the larger the $R$, the more time there is for
tombstones to accumulate and the better the insertions perform. On the other hand, tombstone accumulation is \emph{precisely} the reason that $R$ is classically set to be small, since it breaks the classical analysis and potentially tanks the performance of queries.

We show that the sweet spot is to set $R = n/\polylog(x)$. Here, the expected amortized cost per insertion drops all the way to $\tilde{O}(x)$, while queries continue to take expected time $O(x)$. Once again, and somewhat surprisingly, the low-order factors are an artifact of reality rather than merely the analysis: no matter the value of $R$ used, either the average insertion cost or the average query cost must be $\omega(x)$.

The bound of $\tilde{\Theta}(x)$ holds not only for hovering workloads, but also for \emph{any workload} that maintains a load factor of at most $1 - 1/x$. Note that here we are analyzing a table in which the capacity $n$ is fixed, and the load factor is permitted to vary over time. 
It's interesting to see about how the pathological case described above is avoided here. Because the rebuild window is so large, the only way there can be 
a long series of insertions without deletions is if most of them are performed at low load factors, meaning that they are not slow after all.

\paragraph{A surprising lesson: linear probing is already faster than we thought.}
The core lesson of our results is that linear probing is far less affected by primary clustering than the classical analysis would seem to suggest. 
Although the classic $\Theta(x^2)$ bound is mathematically correct, it does not accurately represent the amortized cost of insertions at high load factors. This suggests that  conclusions that are taught in courses and textbooks, namely that linear probing scales poorly to high load factors, and that alternatives with less data locality such as quadratic probing should be used in its place, stem in part from an incomplete understanding of linear probing and warrant revisiting. 

The second lesson is that small implementation decisions
can substantially change performance. From a software engineering perspective,
our results suggest two simple optimizations (the use of tombstones and the use of large rebuild windows)
 that should be considered in any implementation of linear probing. 

Interestingly, tombstones (and even  relatively large rebuild windows) are already present in some hash tables. Thus, one interpretation of primary anti-clustering is as a phenomenon that, to some degree, already occurs around us, 
but that until now has gone apparently unnoticed.

\paragraph{Graveyard hashing: a variant of linear probing with ideal performance.}
Our final result is a new version of linear probing, which we call \defn{graveyard hashing}, that fully eliminates primary clustering on any sequence of operations. The key insight is that, by artificially inserting
extra tombstones (that are not created by deletions), we can ensure that every insertion has good
expected behavior. 

Insertions, deletions, and queries are performed in exactly the same way as for
standard ordered linear probing. The difference is in how we implement rebuilds.
In addition to cleaning out tombstones, rebuilds are now also responsible for 
inserting $\Theta(n / x)$ \emph{new tombstones} evenly spread across the key space. 

Graveyard hashing adapts dynamically to the current load factor of the table,
performing a rebuild every time that $x$ changes by a constant factor.
The running time of each operation is a function of whatever the load factor
is at the time of the operation. If a query/insert/delete occurs at a load factor of $1 - 1/x$, then it takes expected time $O(x)$ (even in insertion-only workloads). 
Graveyard hashing can also be implemented to resize dynamically, so that it is always at some target load factor of $1 - \Theta(1 / x)$.

\paragraph{Coming full circle: improved external-memory hashing.}
As we mentioned at the outset, one of the big advantages of linear probing is its data locality (e.g., good cache or I/O performance). 

Data locality is formalized via the external-memory model,  first introduced by Aggarwal and Vitter in 1988~\cite{AggarwalVi88}: 
a two-level memory hierarchy is comprised of a small, fast \defn{internal memory} 
and a large, slow \defn{external memory}; 
blocks can only be read and modified when they are in internal memory, 
and the act of copying a block from external memory into internal memory is referred to as a \defn{block transfer}.
The model has two parameters, the number $B$ of
of records that fit into each block and the number $M$ of records that fit in internal memory. 
Performance is measured by the number of block transfers that a given algorithm 
or data structure incurs. 
This model can be used to capture an algorithm's I/O performance (internal memory is RAM, external memory is disk, and block transfers are I/Os) or cache performance (internal memory is cache, external memory is RAM, and block transfers are cache misses).

Ideally, a hash table incurs only
amortized expected $1 + o(1)$ block transfers per operation, even when  
supporting 
a  high load factor $1 - 1/x$.\footnote{Many hash tables that are otherwise very appealing perform poorly on this front. For example, if one uses cuckoo hashing, then negative queries require $\geq 2$ block transfers, and insertions at high load factor require $\omega(1)$ block transfers~\cite{FotakisPaSa05,DietzfelbingerWe07}.} 
We call the problem of achieving these guarantees the \defn{space-efficient external-memory hashing problem}.
Standard linear probing is a solution when either $x$ is a (small) constant, or the block size $B$ is very large ($B = \omega(x^2)$), but otherwise, due to primary clustering, it is not \cite{PaghWeYi14}.

For $B \neq \omega(x^2)$, the  state of the art for space-efficient external-memory hashing is due to Jensen and Pagh \cite{JensenPa08}. 
They give an elegant construction showing that, if the block size $B$ is $\Theta(x^2)$, then it is possible to achieve
amortized expected $1 + O(1 / x)$ block transfers per operation, while maintaining a load factor of $1 - O(1/x) $. (In contrast, standard linear probing requires $B = \Omega( x^3)$ to achieve the same $1 + O(1/x)$ result.) However, if $B = o(x^2)$, then no solutions to the problem are known.

Graveyard hashing enables linear probing to be used directly as a solution to the space-efficient external-memory hashing problem, matching Jensen and Pagh's bound for $B = \Theta(x^2)$, and  offering an analogous guarantee for arbitrary block sizes $B > \omega(x)$.
If  $ B =\Theta (x k)$ for some $k > 1$, then the amortized expected 
cost of each operation is $1 + O(1 / k)$ block transfers. This means that, even if the block size $B$ 
is only \emph{slightly} larger than the load factor parameter $ x $, we still get
$1 + o(1)$ block transfers per operation.

Additionally, graveyard hashing is cache oblivious \cite{FrigoLePr99, PaghWeYi14}, meaning that the 
block size $B$ need not be known by the data structure. Consequently, if a system has
a multi-level hierarchy of caches, each of which may have a different set of parameters $B$ and $ M$, then
the guarantee above applies to every level of cache hierarchy.



\section{Notation and Conventions}

We say that an event occurs with probability $1 - 1 / \poly(j)$ for some parameter $j$ if, for any positive constant $c$, the event occurs with probability $1 - O(1 / j^c)$.\footnote{The
constants used to define the event may depend on $c$.}
Throughout the paper, we use standard
interval notation, where $[m]$ means $\{1, 2, \ldots, m\}$, $[i, j]$ means $\{i, i + 1, \ldots, j\}$, $(i, j]$ means $\{i + 1, i + 2, \ldots, j\}$, etc.

When discussing an ordered linear probing hash table, we use $n$
to denote the number of \defn{slots} (i.e., \defn{positions}). We use
$1 - 1/x$ to refer to the \defn{load factor} (i.e., the fraction of slots that are 
taken by elements), and $R$ to refer to the \defn{rebuild-window size} (i.e., the number of insertions that must occur before a rebuild is performed). 
We use $S$ to denote the sequence of operations being performed, and we refer $S$ as the \defn{workload}.

The operations on the hash table make use of a \defn{hash function} $h:U \rightarrow [n]$, where $U$ is the universe of possible \defn{keys} (also known as \defn{records} or \defn{elements}). We shall assume that $h$ is uniform and fully independent, but as we discuss in later sections, our results also hold for natural families of hash functions such as tabulation hashing (and, for the analysis of graveyard hash tables, also 5-independent hashing). 
We can also refer to the \defn{hash} $h(u)$ of either a tombstone $u$ (i.e., the hash of the element whose deletion created $u$) or of an operation $u$ (i.e., the hash of the element on which the operation takes place).

Each slot in the hash table can either contain a key, be empty (i.e., a free slot), or contain a tombstone. 
Any maximal contiguous sequence of non-empty slots forms a \defn{run}.
With ordered linear probing, the keys/tombstones in each run are always stored in order of their hash.

Our analysis will often discuss \defn{sub-intervals} $I = [i, j] \subseteq [n]$ of the slots in the hash table. We say that an element $u$ \defn{hashes to} $I$ if $ h (u)\in I $ (but this does not necessarily mean that $ u $ resides in one of the slots $ I $). We say that an interval $ I $ is \defn{saturated} if it is a subset of a run.

Formally, operations in an ordered linear probing hash table are implemented as follows. A query for a key $u$ examines positions $h(u), h(u) + 1, \ldots$ until it either finds $u$ (in which case the query returns true), finds an element with hash greater than $h(u)$ (in which case the query returns false), or finds a free slot (in which case the query also returns false). 
A deletion of a key $u$ simply performs a query to find the key, and then replaces it with a tombstone. Finally, an insertion of a key $u$ examines positions $h(u), h(u) + 1, \ldots$ until it finds the position $j$ where $u$ belongs in the run; it then inserts $u$ into that position, and shifts the elements in positions $j, j + 1, j + 2, \ldots$ each to the right by one until finding either a tombstone or a free slot.
We say that the insertion \defn{makes use} of that tombstone/free slot. Finally, rebuilds are performed every $R$ insertions, and a rebuild simply restructures the table to remove all tombstones. Throughout the paper, $n$ is fixed and does not get changed during rebuilds, although when we describe graveyard hashing (Section~\ref{sec:graveyard}), we also give a version that dynamically resizes $n$. 

There are two ways to handle overflow off the end of the hash table.
One option is to wrap around, meaning that we treat $1$ as being the position that comes after $n$; the other is to extend the table by $o(n)$ (i.e., it ends in slot $n + o(n)$), so that operations never fall off the end of the table. Both solutions are compatible with all of our results. For concreteness, we  assume that the wrap-around solution is used, but to simplify discussion, we treat the slots that we are analyzing as being sufficiently towards the middle of the table that we can use the $<$-operator to compare slots.




\section{Technical Overview}

In this section, we give a technical overview of our analysis of ordered linear probing; we defer our analysis graveyard hashing to Section~\ref{sec:graveyard} in the main body of the paper. We begin by describing the intuition behind Knuth's classic $\Theta(x^2)$ bound. We then turn our attention to sequences of operations that contain deletions, and show that the tombstones left behind by those deletions have a \emph{primary-anti-clustering effect}, that is, they have a tendency to speed up future insertions. One of the interesting components of the analysis is that we  perform a series of problem transformations, taking us from the question of how to analyze ordered linear probing to a seemingly very different question involving the combinatorics of monotone paths on a grid. By applying geometric arguments to the latter, we end up being able to achieve nearly tight bounds for the former.

\subsection{Understanding the classic bounds: a tale of standard deviations}
Suppose we fill an ordered linear-probing hash table from empty up to a load factor of
$ 1-1/x $. Knuth \cite{Knuth63} famously showed that the final insertion in this procedure
takes expected time $\Theta(x^2)$. As discussed in the introduction,
the fact that the insertion takes time $\omega(x)$ can be attributed to primary clustering.

But why does the running time end up being $\Theta(x^2)$ specifically?
This turns out to be a result of how standard deviations work. Consider 
an interval $I$ of $x^2$ slots in the hash table. The expected number
of items that hash into $ I $ is $(1 - 1 / x) \, x^2 = |I| - x$. On the other hand,
the standard deviation for the number of such items is $\Theta(x)$. It follows that,
with probability $\Omega (1) $, the number of items that hash into $I$ is 
$\Omega(x)$ greater than $|I|$. If we then consider the interval $I'$
consisting of the $x^2$ slots that follow $I$, then this interval $I'$
must handle not only the items that hash into it, but also the overflow elements from $ I $.
The result is that, with probability $\Omega(1)$,
the interval $ I' $ is fully saturated and forms a run of length $x^2$.

The above argument stops working if we consider intervals of size
$\omega(x^2)$, because the standard deviation on the number of items that hash into $I$
stops being large enough to overflow the interval. The result is that runs of length $\Theta(x^2)$
are relatively common, but that longer runs are not. This is why the expected running time of the insertion
performed at load factor $1 - 1/x$ is $\Theta(x^2)$.

Now suppose that, after reaching a load factor of $1 - 1/x$, we perform a query in our ordered linear-probing hash table. Unlike an insertion, which takes expected time $\Theta(x^2)$, the query takes expected time $\Theta(x)$. We can again see this by looking at standard deviations.

If the query hashes to some position $j$ and takes time $t$, then 
there must be at least $t$ elements $u$ that have hashes $h(u) \le j$ but that reside in positions $j$ or larger. 
Hence, there is some sub-interval $I = [j_0, j]$ of the hash table (ending in position $ j $) that has overflowed by at least $t - 1$ elements, that is, the number of elements that hash to $I$ is at least $|I| + t - 1$. As before, the interval that matters most ends up being the one of 
size $\Theta(x^2)$, and the amount by which it overflows in expectation is 
proportional to the standard deviation $\Theta(x)$ of the number of items that hash into the interval.

Thus, the running times of both insertions and queries are consequences of the same two facts: that (a)~for any interval $I$ of size $x^2$, there is probability $\Omega(1)$ that the interval overflows by
$\Theta(x)$ elements; and that (b)~a given interval of size $\omega(x^2)$ most likely doesn't overflow at all.
The only difference is that the running time of an insertion is proportional
to the \emph{size} of the interval that overflows (i.e., $\Theta(x^2)$), but the running time of a query is proportional
to the \emph{amount} by which the interval overflows (i.e., $\Theta(x)$).

\subsection{Analyzing primary anti-clustering with small rebuild windows}
In this subsection, we consider a hovering workload, that is a sequence of operations that alternates between insertions and deletions on a table with load factor $1 - 1/x$,
and we set the size of the rebuild window to be $R = n / (2x)$ (i.e., the value that it is classically set to). 
Our task is to consider a sequence of $R = n / (2x)$ insertion/deletion pairs between two rebuilds, and to analyze the amortized running times.

\paragraph{Analyzing displacement instead of running time.}
Define the \defn{peak} $p_u$ of an insertion $u$ to be either the
hash of the tombstone that the insertion uses (if the insertion makes use of a tombstone)
or the position of the free slot that the insertion uses (if the insertion makes use of a free slot).
Define the \defn{displacement} $d_u$ of an insertion to be $d_u = p_u - h(u)$. 

One of the subtleties of how displacement is defined is that, if an insertion $ u $ uses a tombstone $ v $, then the displacement measures
the difference between $h(u)$ and the \emph{hash} $h(v)$, rather than the difference between $h(u)$ and the \emph{position} of $ v $.
This ends up being important for how displacement is used in the analysis\footnote{The reason for this is actually very simple. Whenever a deletion $ v $ is performed, the value of $h(v)$ is fixed (it depends only on the
hash function) but the position of $v$ is not (it depends on the other elements in the table, and will change over time). Thus, it is cleaner to measure the deletion's effect on future insertions in terms of $h(v)$ (the thing that is fixed) rather than $v$'s position (the thing that is not).}, but it also means that the displacement of an insertion can potentially be substantially smaller than the running time. For example, if the insertion hashes to position $7$ and makes use of a tombstone with hash $13$ that resides in position $54$,
then the displacement is only $13 - 7 = 6$ but the running time is proportional to $54 - 7 = 47$.

Although we skip the proof for now (see Lemma~\ref{lem:displacementerror}), it turns out that one can bound the expected difference between displacement and running time by $O(x)$. Thus, even though displacement is not always the same as running time, any bound on average displacement also results in a bound on average running time. 

\paragraph{Relating displacement to crossing number.}
Rather than analyzing the displacement of each \emph{individual} insertion, we bound the average displacement
over all $R$ insertions in the rebuild window by relating the displacements to another set of quantities that we call 
the \defn{crossing numbers} $\{c_j\}_{j \in [n]}$. The crossing number $c_j$ counts the number of insertions
$u$ in the rebuild window that have a hash $h(u) < j$ but that have a peak $p_u \ge j$ (we consider even the insertions that are subsequently deleted). Each insertion $u$ 
increments $d_u$ different crossing numbers $c_j$. Thus
$$\sum_{u} d_u =\sum_{j\in [n] }^{} c_j.$$
Because we are analyzing the insertions in a rebuild window of size $R$, the summation on the left side has $R = \Theta(n / x)$ terms,
while the summation on the right side has $n$ terms. Thus, if we consider
a random insertion $ u $ and a random position $ j $, then
$$\E[d_u] = \Theta(x \E[c_j]).$$
If our goal is to establish that the average insertion takes time $o(x^2)$,
then it suffices instead to show that the average crossing number $c_j$ is $o(x)$.

Notice that the ratio between $\E[d_u]$ and $\E[c_j]$ is a function of the rebuild window size $R$; this will come into play later when we consider larger rebuild windows.

\paragraph{Capturing the dependencies between past insertions/deletions.}
What makes the analysis of a given crossing number $c_j$ interesting is the way in which insertions and deletions interact over time.
If an insertion $u$ has hash $h(u) < j$, and there is a tombstone $v$ with hash $h(v) \in [h(u), j)$, then $u$ can make use of the tombstone
and avoid contributing to $c_j$. But, in order to determine whether a given tombstone $v$ is present during the insertion $u$, 
we must know whether any past insertions have already used $v$. That, in turn, depends on which tombstones were present during past insertions, 
resulting in a chain of dependencies between operations over time.

One of the insights in this paper is that the interactions between insertions and deletions over time can be reinterpreted as
an elegant combinatorial problem about paths on a two-dimensional grid. We  now give the transformation.

\paragraph{The geometry of crossing numbers.}
For the sake of analysis, define $ Z $ to be the state that our table
would be in if we performed only the insertions in the rebuild window and not the deletions.

Since $R = n / (2x)$, the load factor of $ Z $ is at most $1 - 1/x + R / n = 1 - 1 / (2x)$, 
which by the classic analysis of linear probing means that the expected distance from any position
to the next (and previous) free slot is $O(x^2)$. 

Now consider some crossing number $ c_j $. Let $j'$ be the position of the closest free slot to the left of $j$ in $Z$, that is, the 
largest $j' < j$ such that position $j'$ is a free slot in $Z$. 
Then the only insertions/deletions
that we need to consider when analyzing $c_j$ are those that hash into the interval
$I = [j' + 1, j)$. 

We know from our analysis of $Z$ that $\E[|I|] = O(x^2)$.
Although $|I|$ is a random variable with mean $\Theta(x^2)$, to simplify our discussion 
in this section, we shall treat $|I|$ as simply deterministically equaling $x^2$.
We also treat $I$ as containing no free slots (even at the beginning of the time window being considered).
In particular, we know from the classical analysis of linear probing that any interval $I$ of size $x^2$ has probability
$\Omega(1)$ of containing no free slots, so there is no point in trying to make use of potential free slots in $I$ for our analysis.

We can visualize the insertions and deletions that hash into $I$ by plotting them in a
two-dimensional grid, as in Figure~\ref{fig:path1}. The vertical axis represents time flowing up from $1$ to $2R$,
and the horizontal axis represents the hash locations in the interval $I$. We draw a blue dot in position $(i, t)$
if the $t$-th operation is an insertion with hash $i \in I$, 
and we draw a red dot in position $(i, t)$ if the $t$-th operation is a deletion
with hash $i \in I$. (Note that most operations do not hash to $I$ and thus do not result in any dot.)

In order for a given insertion $u$ to be able to make use of a given deletion
$v$'s tombstone, it must be that (a)~$h(u) \le h(v)$;\footnote{Recall that ordered linear probing only ever moves elements to the right over time, meaning that a given insertion $u$ will only use a tombstone if that tombstone has hash at least $h(u)$.} that (b)~$u$ occurs temporally after $v$;
and that (c)~$v$'s tombstone is not used by any other insertion temporally before $u$.
The first two criteria (a) and (b) tell us that for a given insertion (i.e., blue dot)
in the grid, the insertion can only make use of tombstones from deletions (i.e., red dots)
that are below it and to its right. 

Define the set of \defn{monotone paths} through the grid to be the set of paths
that go from the bottom left to the top right of the grid, and that never travel downward or leftward. 
Define the  \defn{blue-red deviation} of such a path
to be the number of blue dots below the path minus the number of red dots below the path (see Figure~\ref{fig:path1} for an example).

What do these monotone paths have to do with the crossing number $ c_j $?
Monotone paths with large blue-red deviations serve as witnesses for 
the crossing number $c_j$ also being large. 
Suppose that there is a monotone path $\gamma$ with blue-red deviation $r > 0$. 
Since we assume that $I$ is initially saturated, each of the insertions (i.e., blue dots) below $\gamma $ must either make use of the tombstone 
for a deletion (i.e., red dot) that is also below $\gamma $
or contribute 1 to $c_j$. Since there are $r$ more blue dots than red dots below $\gamma $,
it follows that $c_j \ge r$. 

In fact, this relationship goes in both directions (although the other direction requires a 
bit more work; see Lemma~\ref{lem:crossingsurplus}). If the crossing number $c_j$ takes some value $r$, then there must also exist 
some monotone path with blue-red deviation at least $r$. 
The result is that, if we wish to prove either upper or lower bounds on $\E[c_j]$,
it suffices to instead prove bounds on the largest blue-red deviation
of any monotone path through the grid.

\paragraph{Formalizing the blue-red deviation problem.}
Let us take a moment to digest the combinatorial problem that we have reached,
since on the face of things it is quite different from the problem that we started at.

The expected number of blue dots (and also of red dots) in our grid is $R \cdot |I| / n = \Theta(x)$. 
If we break the grid into $\sqrt{x}$ rows and $\sqrt{x}$ columns (see Figure~\ref{fig:path2} for an example), then each cell of the broken-down grid
expects to contain $\Theta(1)$ blue and red dots. To simplify our discussion here, think of each 
cell as independently containing a Poisson random variable $\poisson(1)$ number of blue dots
and a Poisson random variable $\poisson(1)$ number of red dots.\footnote{This allows for us to ignore two minor issues in our discussion here: (1)~the fact that a single key can potentially be inserted, deleted, and reinserted, resulting in blue and red dots whose horizontal coordinates are deterministically equal; and (2)~the fact that the numbers of blue/red dots in each cell are actually very slightly negatively correlated.}

Furthermore, rather than considering all monotone paths through the grid, we can restrict ourselves
exclusively to the paths that stay on the row and column lines that we have drawn (for an example, see Figure~\ref{fig:path2}). With high probability in $x$, this restriction changes the
maximum blue-red deviation of any path by at most $\tilde{O}(\sqrt{x})$. 

In summary, we have a $\sqrt{x} \times \sqrt{x}$ grid where each cell of the grid contains a
Poisson random variable $\poisson(1)$ number of blue points (resp.\ red points). 
Whereas our original grid was much taller than it was wide (its height was $2R$ and its 
width was $x^2$), our new grid is a $\sqrt{x} \times \sqrt{x}$ square.
There are $\binom{2 \sqrt{x}}{\sqrt{x}} = \exp(\Omega(x))$ monotone paths $\gamma $ through the grid,
and we wish to prove bounds on the maximum blue-red deviation $D$ achieved by any such path.

\paragraph{Gaining intuition: how blue-red deviations behave.} 
To gain intuition, let us start by considering the trivial path  $\gamma$ that contains
the entire grid beneath it. The expected number of blue dots (as well as red dots) 
beneath $\gamma$ is $\Theta(x)$, and the standard deviation on the number of blue dots
(as well as red dots) is thus $\Theta(\sqrt{x})$. With probability $\Omega (1) $, there are 
$\Omega(\sqrt{x})$ more blue dots in the grid than red dots, which results in a blue-red deviation
of $\Omega(\sqrt{x})$ for $\gamma$.

Of course, that's just the blue-red deviation of a \emph{single fixed path}. What should we expect
the \emph{maximum} blue-red deviation $D$ over all paths to be? On one hand, there are exponentially many paths that we must consider, but on the other hand,
the blue-red deviations of the paths are closely correlated to one another.
The result, it turns out, is that $ D $ ends up being an $x^{o(1)}$ factor larger than $\sqrt{x}$.

We will show that, with probability $1 - o(1)$, the maximum blue-red deviation $D$
is between $\Omega(\sqrt{x \log \log x})$ and $\tilde{O}(\sqrt{x})$. If we backtrack to our original problem (i.e., we relate the blue-red deviations to the crossing numbers, the crossing numbers to the displacements, and the displacements to the running times), we get that the amortized cost of insertions is between $\Omega(x^{1.5} \sqrt{\log \log x})$ and $\tilde{O}(x^{1.5})$. 

\paragraph{An upper bound of \boldmath $\tilde{O}(\sqrt{x})$ on the maximum blue-red deviation $ D $.}
The first step in bounding $ D $ is to prove a general result about decompositions of monotone paths. Let $k = 4\sqrt{x}$ be the perimeter of the grid. We claim that for any monotone path $\gamma $ through the grid,
it is always possible to decompose the area
under $\gamma $ into disjoint rectangles $R_1, R_2, \ldots$
such that the sum of the perimeters of the rectangles is at most
$k \log k$. 

Such a decomposition can be constructed recursively as follows:
(1) find the point $ q $ halfway along the path, and drop a rectangle
from $ q $ to the bottom-right-most point in the grid; (2) 
then recursively construct a rectangular decomposition for the
portion of the path prior to $ q $, and recursively construct a rectangular decomposition for the portion of the path
after $ q $. (See Figure~\ref{fig:path3} for an example.)

The recursive decomposition is designed so that, in the $ i$-th
level of recursion, each recursive subproblem takes place
on a grid with perimeter exactly $k / 2^i$. Since there are at most $2^i$ 
subproblems in each level of recursion, each of which contributes a rectangle with perimeter at most $k / 2^i$,
the sum of all the rectangle perimeters over all levels of recursion is at most $k \log k$. 

The next step in the analysis is to consider the maximum amount that any given
rectangle can contribute to the blue-red deviation of a path. 
If a given rectangle has area $a$, then the expected number of
blue/red dots in the rectangle is $\Theta(a)$, and
with high probability in $x$, the rectangle as a whole has
blue-red deviation $O(\sqrt{a} \log x)$.
This, in turn, means that if a rectangle has perimeter $p$,
then with high probability in $x$, it has blue-red deviation at most
$O(p \log x)$. Finally, since there are only $O(x^2)$ possible rectangles in the entire grid, and this property holds for each of them with probability $1 - 1/\poly(x)$,
the property also holds simultaneously for all of them with probability $1 - 1 / \poly(x)$.

Putting the pieces together, we know that every path $\gamma $ has
a rectangular decomposition such that the sum of the rectangle perimeters is
$O(\sqrt{x} \log x)$. We further know that, if a rectangle in the decomposition has perimeter
$p$, then it contributes at most $O(p \log x)$ to the blue-red deviation of $\gamma$.
It follows that the total blue-red deviation of any path $\gamma $ is at most $O(\sqrt{x} \log^2 x)$.

\paragraph{A lower bound of \boldmath $\Omega(\sqrt{x\log \log x})$ on the maximum blue-red deviation $ D $.}
Now we turn our attention to proving a lower bound on $ D $.
We wish to find a monotone path $\gamma$
whose blue-red deviation is $\Omega(x^{1.5} \sqrt{\log \log x})$.

Call a $ j\times j $ square within the grid \defn{high-value} if it has blue-red
deviation at least $\Omega(j\sqrt{\log \log x})$. The definition is designed so that 
every square has probability at least $1 / \sqrt{\log x}$ of being high-value. 

We construct a monotone path $\gamma $ recursively as follows.
First break the grid into quadrants, and check whether the top left quadrant is 
a high-value square. If so, then return the trivial path
that contains the entire grid below it.  Otherwise, recursively construct a path through the
bottom-left quadrant, recursively construct a path through the top-right quadrant,
and set $\gamma $ to be the concatenation of the two paths.\footnote{When we recurse,
we do not change the threshold $\Omega(j \sqrt{\log \log x})$ that dictates whether a given $j \times j$ square is special;
that is, $x$ acts as a global variable setting this threshold.} (For an example, see Figure~\ref{fig:path4}.)

There are two types of base cases in the recursion. The first type is when
a subproblem finds a high-value square; we call this a \defn{successful base case}. The second
type is when a subproblem terminates because it is on a $ 1\times 1 $ grid;
we call this a \defn{failed base case}. 

If a successful base case takes place on a sub-grid of width $w$,
then the high-value square that it discovers contributes $\Omega(w \sqrt{\log \log x})$
to the total blue-red deviation of $\gamma $. In order to establish
a lower bound of $\Omega(\sqrt{x\log \log x})$ on the blue-red deviation,
it therefore suffices to show that the sum of the widths of the successful base cases is 
$\Omega(\sqrt{x})$.\footnote{There is also a large portion of the area underneath $\gamma$ that is not contained in any of the high-value squares of the subproblems. Technically, we must also ensure that the blue/red dots in this unaccounted-for area do not substantially change the blue-red deviation, but this follows from a straightforward Chernoff bound.}

By construction, the sum of the widths of both the successful base cases and the failed base cases is 
exactly $\sqrt{x}$. Moreover, each failed base case has width exactly $ 1 $. Thus our task reduces
to bounding the number of failed base cases by $o(\sqrt{x})$ with probability $ 1 - o (1) $.

In order for a given failed base case to occur, there is a recursion path of $\Theta(\log x)$
sub-problems that must all fail to find a high-value square. Each of these failures occurs with
 probability at most $1 - 1 / \sqrt{\log x}$, so the probability of all of them occurring is
$$\left(1 - 1 / \sqrt{\log x}\right)^{\phantom{x} \Theta(\log x)} = o(1).$$
Since the probability of any given failed base case occurring is $o(1)$, 
the expected number of failed base cases that occur is $o(\sqrt{x})$.
By Markov's inequality, the number of failed base cases is $o(\sqrt{x})$ with probability $ 1 - o (1) $,
as desired.

\subsection{Stronger primary anti-clustering with larger rebuild windows}
\label{sec:stonger-primary-clustering}

In this section, we consider what happens if a larger rebuild window 
size  $R = n / \polylog (x)$ is used. As discussed in the introduction, this allows
for us to improve our amortized insertion time from $\tilde{\Theta}(x^{1.5})$ to 
$\tilde{\Theta}(x)$, while still achieving average query time $\Theta(x)$.

Remarkably, these bounds hold not just for hovering workloads, but also for arbitrary workloads that
stay below a load factor of $1 - 1/x$. To simplify discussion in this section,
however, we continue to focus on the hovering case.

There are two main technical challenges
that our analysis must overcome. The first challenge is obvious: we must quantify the degree to which
tombstones left behind by deletions improve the performance of subsequent insertions.
The second challenge is a bit more subtle: in order to support large rebuild-window sizes $R$,
our analysis must be robust to the fact that tombstones can accumulate over time, increasing the effective
load factor of the hash table. 
This latter challenge is further exacerbated by the fact that the choice of which tombstones are in the table
at any given moment is a function not only of the sequence of operations being performed, but also of the
randomness in the hash table. 
This means that, even if the cumulative load factor from the elements and tombstones can be bounded (e.g., by $ 1 -\Theta(1/x)$), we still cannot analyze the tombstones as though they were normal elements; a consequence of this is that we cannot even apply the classic analysis to deduce an $O(x^2)$-time bound for insertions or an $O(x)$-time bound for queries. 

\paragraph{Using crossing numbers to rescue the queries.}
Let us consider the time that it takes to query an element whose hash is $ j $.
The time is proportional to the number of elements and tombstones that have hashes
smaller than $j$ but that reside in positions $j$ or larger.\footnote{Technically, we must also
consider elements that hash to exactly $j$ but the expected number of such elements is $O(1)$.}
Call the slots containing these elements \defn{$j$-crossed}. 
Right after a rebuild is performed, the number $ s $ of $j$-crossed
slots has expected value $O(x)$. Over the course of the time window between consecutive rebuilds, however,
the quantity $ s $ gradually increases. 

Fortunately, the amount by which $ s $ increases is \emph{precisely}
the crossing number $c_j$. Indeed, $ c_j $
gets incremented exactly whenever a formerly non-$ j $-crossed slot
becomes $ j $-crossed.

Thus, if we can bound $c_j$ to be small then we hit two birds with one stone:
we are able to bound the running times of both queries and insertions.

\paragraph{But how do we rescue the crossing numbers?}
Large rebuild windows also break the analysis of crossing numbers, however.
In the original analysis, we argued that there is most likely some position $j' < j$
satisfying $j - j' = O(x^2)$ such that position $j'$ remains an empty slot throughout
the entire rebuild window. This meant that, when considering $c_j$,
we only had to analyze insertions and deletions that hash into the interval $I = [j' + 1, j)$
of size $\Theta(x^2)$.

We can no longer argue that such a $j'$ necessarily exists, however, since the accumulation of tombstones over time might eliminate all of the free slots near position $j$. 
Thus we must 
extend our analysis to consider intervals $I$ of size $\omega(x^2)$.

Fortunately, if we consider any interval $I$ of size at least $x^2 \polylog x$,
then we can argue that the interval most likely \emph{initially} contains
$\Omega(|I| / x)$ free slots. Define the \defn{insertion surplus} of an interval $ I $
to be the maximum blue-red deviation of any monotone path through the grid representing $I$,
\emph{minus} the number of free slots initially in $I$. We prove that the crossing number
$c_j$ is exactly equal to the maximum insertion surplus of any interval $I$ of the form $[j' + 1, j)$.
Since large intervals $I$ have a $\Theta(1/x)$-fraction of their slots initially empty,
it is very unlikely that they end up determining the crossing number $c_j$. The result is that we
can again focus primarily on intervals of size $O(x^2)$, and perform the analysis of $ c_j $ as before.

\paragraph{Putting the pieces together for \boldmath $R = n / \polylog (x)$.}
We now analyze the case of
$R = n / \polylog (x)$. As before, we use the displacement $d_u$ for an insertion as a proxy for insertion time
(although bounding the difference between the two requires a more nuanced argument
than  before, see Lemmas~\ref{lem:runningtimedecomposition} and~\ref{lem:positionaloffset}). We can then relate the displacements to the crossing numbers by
$$\sum_u d_u = \sum_{j \in [n]} c_j.$$
Now, however, both sums consist of between $n / \polylog x$ and $n$ terms.
This means that for a random insertion $ u $ and a random $ j\in [n] $,
$$\E[d_u] \le \E[c_j \polylog(x)].$$
As before, we can transform the problem of bounding $ c_j $ into the 
problem of bounding the maximum blue-red deviation of any monotone path
in a certain grid. The expected number of blue/red dots in the grid is 
now $x^2 / \polylog(x)$ (rather than $\Theta(x)$). Thus our bound on blue-red deviation
comes out as $\tilde{O}(x / \polylog (x)) = O(x)$ (rather than $\tilde{O}(\sqrt{x})$).
This means that $\E[c_j] = O(x)$, which implies that the
 expected time taken by any query is also $O(x)$ and that the average time taken by 
each insertion is $\tilde{O}(x)$. 

We can now also see why $ n/\polylog (x) $ is the right rebuild window size to use. In particular, if we make $ R $ smaller than $ n/\poly (x) $, then insertion times suffer, but if we let $ R $ get too close to $ n $ (or exceed $ n $) then the crossing numbers $ c_j $ become $\omega (x) $ (thanks to our lower bound construction on blue-red deviations), and thus queries take time $\omega(x)$. Thus it is impossible to select a value for $R$ that achieves expected time $O(x)$ for all operations, and if we want $O(x)$-time queries, we must make $R = o(n)$.

\paragraph{Analyzing arbitrary workloads.}
Finally, we generalize these results to \emph{any} sequence of operations that stays below a load factor of $1 - 1/x$. 
We argue that, within any rebuild window, it is possible to re-organize the operations in such a way that (a)~none of the crossing numbers decrease, 
and (b)~the operations consist of a series of insertions, followed by a series of alternating insertions/deletions, followed by a series of deletions (see Proposition~\ref{prop:amortized}).
The alternating insertions/deletions can be analyzed as above, and because $R$ is large, the crossing-number cost of the initial insertions can be amortized away.
The fact that the re-organization of operations does not decrease any crossing numbers ends up being easy to prove using our characterization of crossing numbers in terms of insertion surpluses of intervals.

\subsection{Graveyard hashing}

Can we achieve a bound of $O(x)$ for both query and insertion times? We have already seen that standard ordered linear probing cannot, regardless of the choice of $R$. 

The guarantee can be achieved, however, by a slightly modified version of ordered linear probing
that we call graveyard hashing. The basic idea of graveyard hashing is to
insert extra tombstones artificially during each rebuild. This extra injection of tombstones ensures that, during the next rebuild window, tombstones for insertions to make use of remaining plentiful at all times. Thus, we are able to put all insertions in a ``best-case scenario'', guaranteeing that each operation has expected time $O(x)$. For details, see 
Section~\ref{sec:graveyard}.

\begin{figure}[h]
        \centering
        \begin{subfigure}[t]{0.3\textwidth}
            \centering
            \includegraphics[trim={9cm 42cm 24cm 12cm},clip,width=\textwidth]{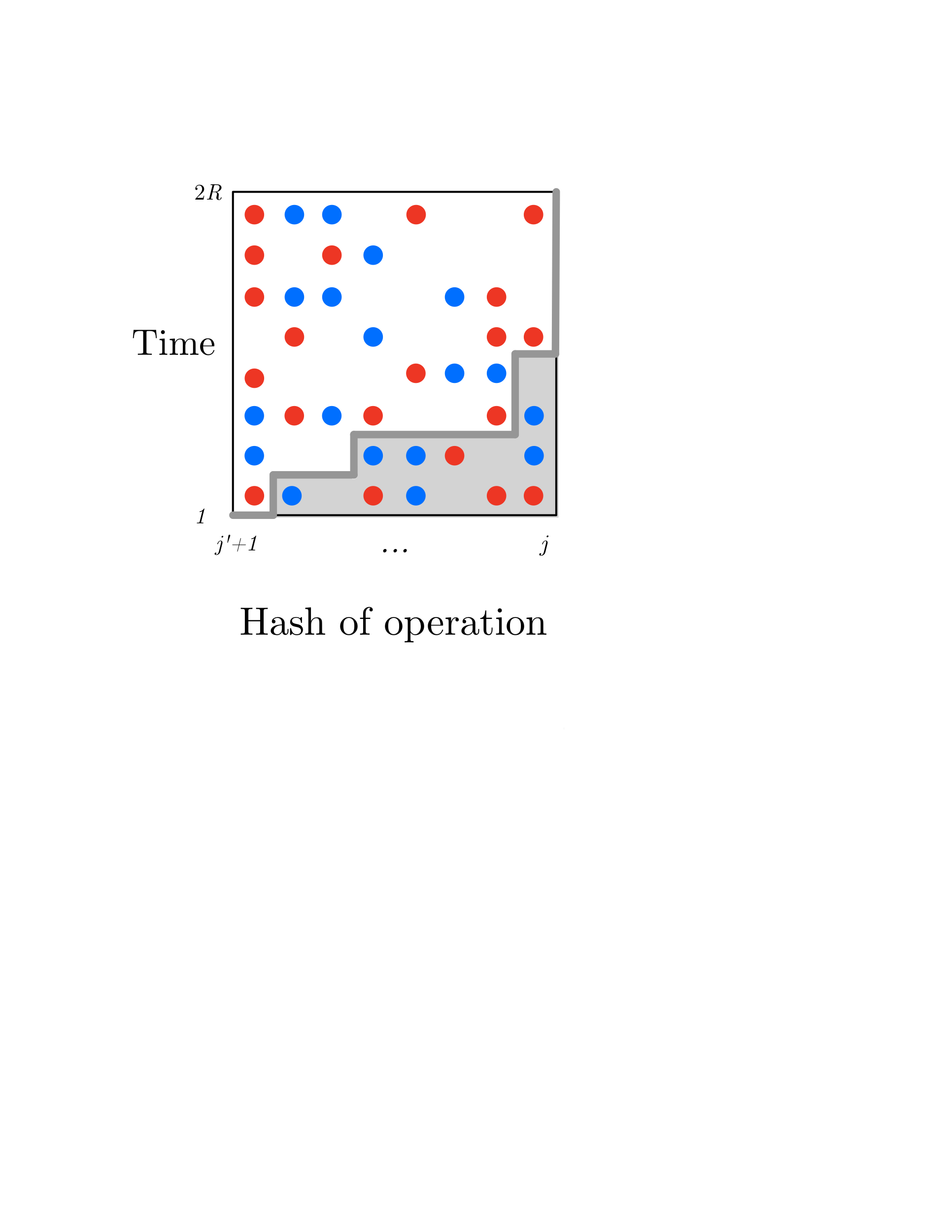}
            \caption{An example of representing insertions/deletions in $I = [j' + 1, j)$ graphically over time.
            Blue dots are insertions and red dots are deletions. (To simplify this picture, we plot each point in one of eight $y$-coordinates, but in reality, each dot would have a distinct $y$-coordinate.) An example monotone path is also given, and the 
            blue-red deviation is $6 - 4 = 2$.}    
            \label{fig:path1}
        \end{subfigure}
        \hfill
        \begin{subfigure}[t]{0.3\textwidth}  
            \centering 
            \includegraphics[trim={9cm 42cm 24cm 12cm},clip,width=\textwidth]{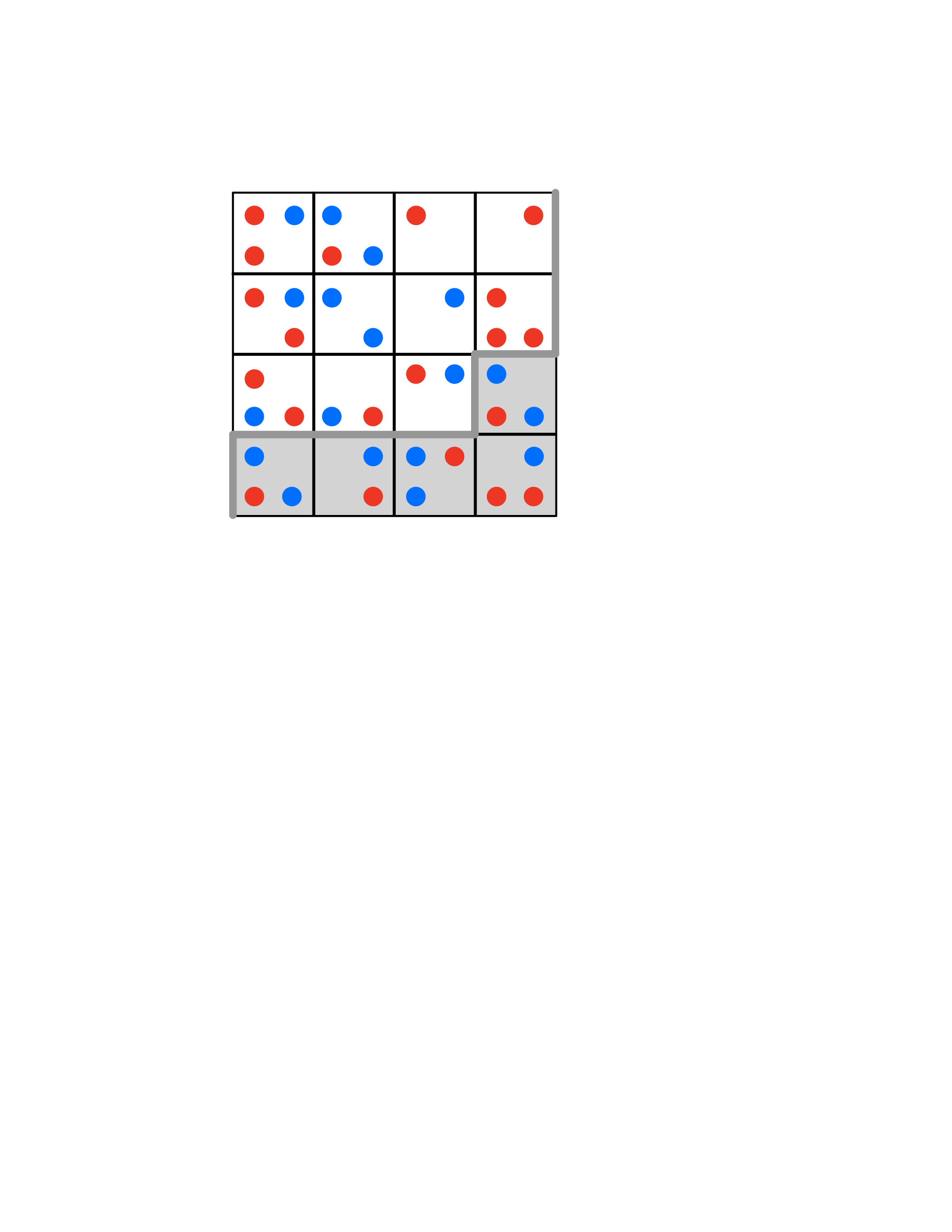}
            \caption{To simplify the problem, we draw a $\sqrt{x} \times \sqrt{x}$ grid, and consider
            only paths that go along the drawn grid lines. Here $x = 16$.}    
            \label{fig:path2}
        \end{subfigure}
        \hfill
        \begin{subfigure}[t]{0.3\textwidth}   
            \centering 
            \includegraphics[trim={10cm 42cm 24cm 12cm},clip,width=\textwidth]{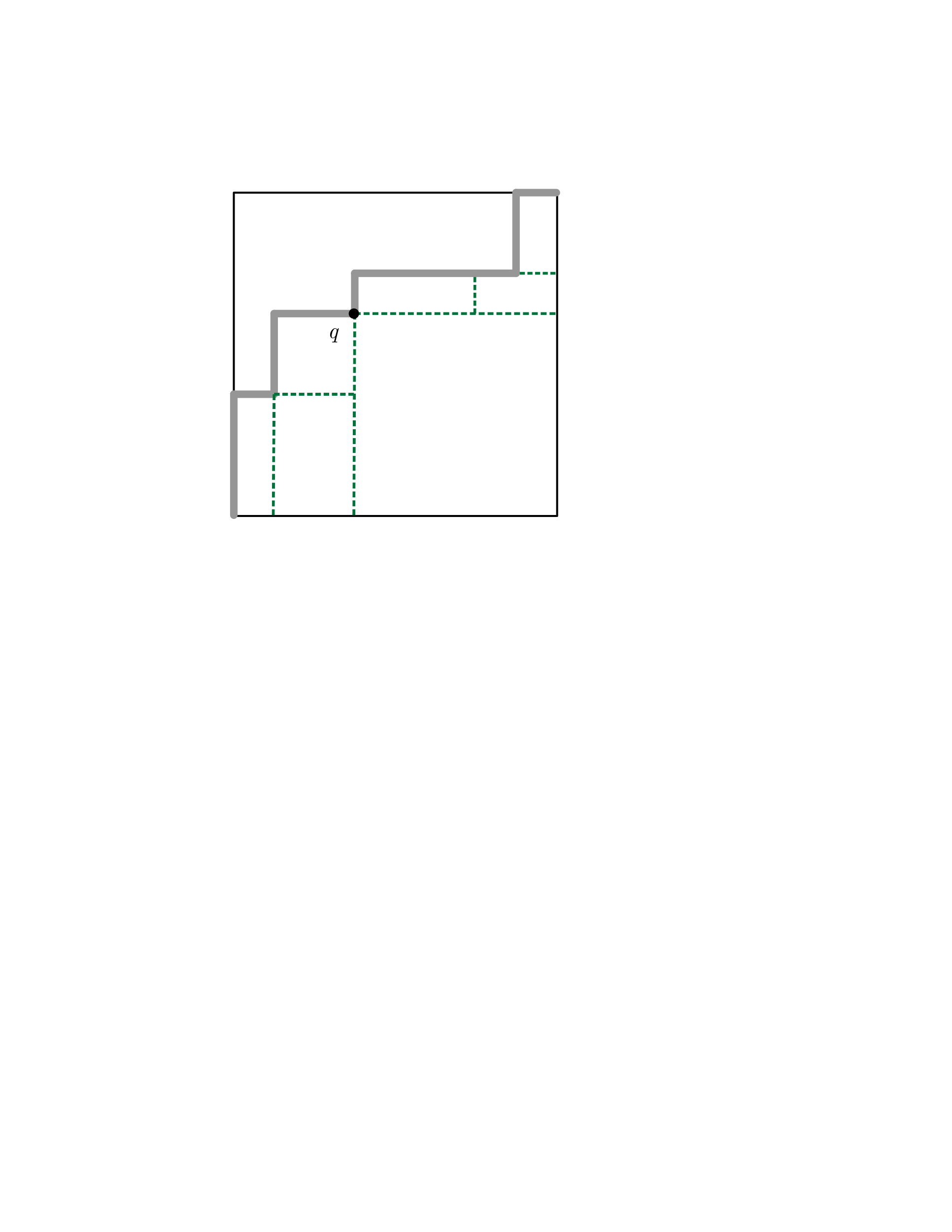}
            \caption{An example of a recursively constructed rectangular decomposition for a path. The point $q$
            used in the top level of recursion is also labeled.}
            \label{fig:path3}
        \end{subfigure}
        \newline
        \hfill
        \begin{subfigure}[t]{0.8\textwidth}   
            \centering 
            \includegraphics[trim={10cm 42cm 24cm 12cm},clip,width=0.4 \textwidth]{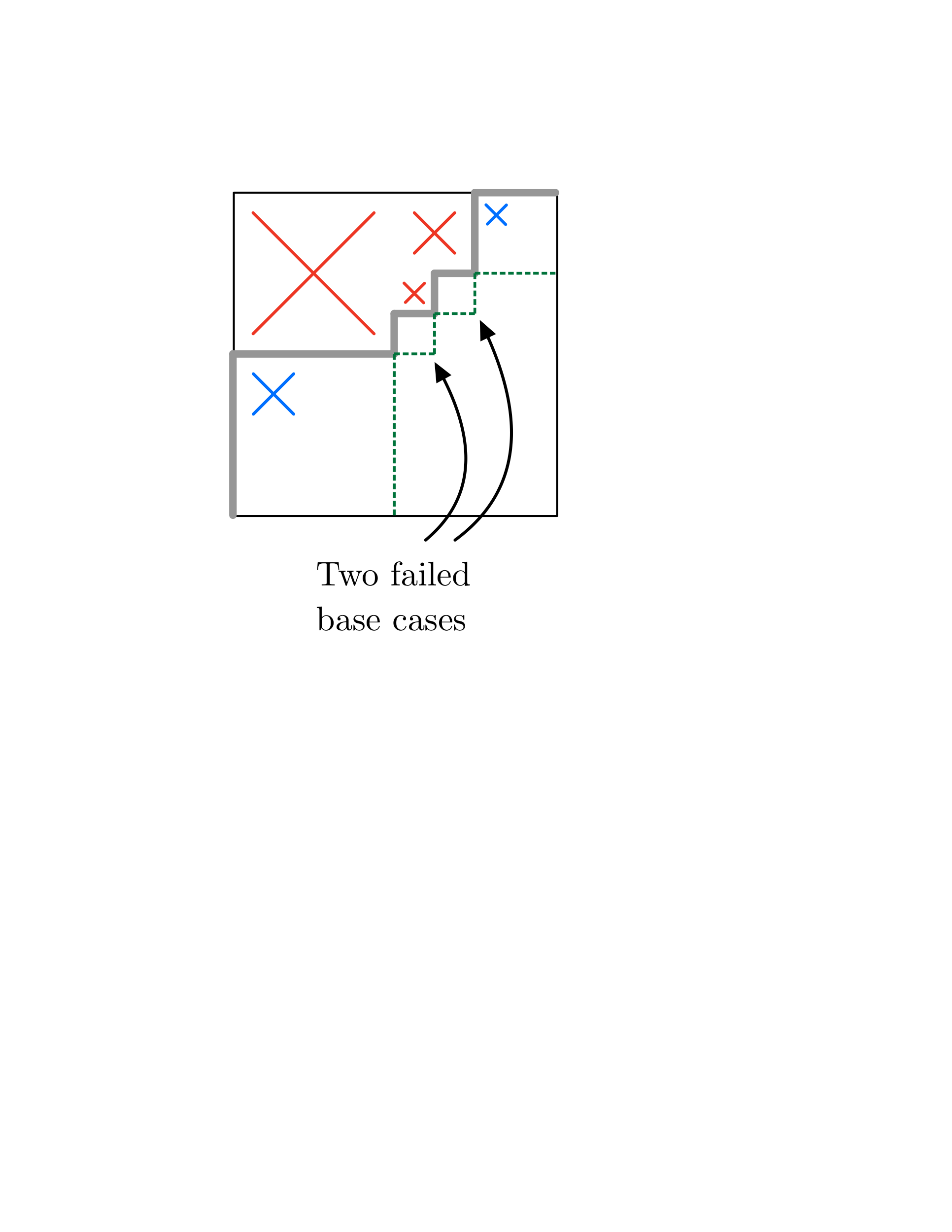}
            \caption{An example of the construction for a path with high blue-red deviation. 
            For each recursive subproblem whose top-left quadrant is not high-value, we place a 
            red X through the quadrant; for each recursive subproblem whose top-left quadrant is high-value, we
            place a blue X through the quadrant; the only subproblems not to have an X drawn in them are the failed base cases, each of which takes place on a $1 \times 1$ grid. There are four base cases, two of which are failed.}    
            \label{fig:path4}
        \end{subfigure}
        \caption{}
        \label{fig:paths}
    \end{figure}



\section{Some Basic Balls-and-Bins Lemmas}

We begin by proving several basic lemmas having to do with balls and bins. What makes these lemmas different from standard balls-and-bins bounds is that they consider all prefixes of a sequence of bins, bounding the probability of any prefix behaving abnormally. 
In order to demonstrate how these lemmas relate to linear probing, we will also use them to reprove
the classic bounds on the performance of insertions and queries for ordered linear probing.

Throughout the rest of the section, consider the setting in which
 we place $ m =\Theta (n) $ balls randomly into $ n $ bins.
Let $\mu = n/m $ be the expected number of balls in each bin.

\begin{lemma}
Let $x > 1$ and $ k \ge 1 $. With probability $ 1-2 ^ {-\Omega (k) } $, for every $ i \ge x^{2 } k $,
the number of balls in the first $ i $ bins is between $ (1 - 1/x) i\mu $ and $ (1 + 1/x) i\mu $.
\label{lem:saturated}
\end{lemma}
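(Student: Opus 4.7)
The plan is a standard discretization-plus-Chernoff argument. For any fixed $i$, the count $B(i)$ of balls falling into bins $1, \ldots, i$ is a sum of $m$ independent $\mathrm{Bernoulli}(i/n)$ indicators (one per ball), so the multiplicative Chernoff bound yields, for $\epsilon \in (0, 1)$,
\[
\Pr\bigl[|B(i) - i\mu| > \epsilon \, i\mu\bigr] \le 2\exp\bigl(-\epsilon^2 i\mu / 3\bigr).
\]
Taking $\epsilon = 1/(2x)$ and $i \ge x^2 k$ gives a per-point failure probability of $2\exp(-\Omega(k))$ at $i = x^2 k$, and exponentially smaller at larger $i$.

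To promote this into a uniform statement over all $i \ge x^2 k$, I would union-bound over a geometric sequence of checkpoints $i_0 < i_1 < \ldots < i_J$ with $i_0 = x^2 k$, $i_{j+1} = \lceil (1 + 1/(8x)) i_j \rceil$, and $i_J \ge n$. There are $J = O(x \log n)$ such checkpoints. The Chernoff exponents grow with $i_j$, so the per-checkpoint failure probabilities form a rapidly decaying geometric-type tail; their sum is $2^{-\Omega(k)}$, with the $O(x \log n)$ prefactor absorbed into the hidden $\Omega$-constant.

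The final step is an interpolation between checkpoints using the monotonicity of $i \mapsto B(i)$. Conditioned on the checkpoint bounds, for any $i \in [i_j, i_{j+1}]$,
\[
B(i) \le B(i_{j+1}) \le \bigl(1 + 1/(2x)\bigr) i_{j+1}\,\mu \le \bigl(1 + 1/(2x)\bigr)\bigl(1 + 1/(8x)\bigr)\, i\mu \le (1 + 1/x)\, i\mu,
\]
for $x$ above a small constant, and symmetrically $B(i) \ge B(i_j) \ge (1-1/(2x))\, i_j \mu \ge (1 - 1/x)\, i\mu$ since $i_j \ge i/(1 + 1/(8x))$. Small values of $x$ are handled by adjusting absolute constants in the Chernoff bound.

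The main obstacle is calibrating the checkpoint spacing. Too coarse a spacing (e.g., dyadic) would prevent interpolation from closing the gap to the required $1/x$ tolerance, because doubling $i$ between checkpoints could let $B(i)/(i\mu)$ drift by a constant; too fine a spacing would inflate the union-bound cost past what $2^{-\Omega(k)}$ can absorb. Geometric spacing with ratio $1 + \Theta(1/x)$ is the sweet spot: it makes the interpolation slack $O(1/x)$ and keeps the number of checkpoints polynomially bounded, which is comfortably swallowed by the exponential decay in $k$.
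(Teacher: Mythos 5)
Your approach is the natural discretize-and-union-bound strategy, and most of the calibration is right (the geometric checkpoint spacing with ratio $1+\Theta(1/x)$ is indeed what interpolation forces). But there is a genuine gap in the union-bound step. You claim that the $O(x\log n)$ number of checkpoints "is absorbed into the hidden $\Omega$-constant," but this absorption is not valid. The per-checkpoint failure probabilities near $i_0 = x^2k$ are each $\exp(-\Theta(k))$, and roughly the first $\Theta(x/k)$ checkpoints all contribute at that same order (because the Chernoff exponent only grows by a factor $1+\Theta(1/(8x))$ per step, so you need $\Theta(x/k)$ steps before the exponent increases by even an additive constant). The resulting sum is $\Theta(x/k)\cdot 2^{-\Theta(k)}$, and $O(x\log n)\cdot 2^{-ck}$ equals $2^{-\Omega(k)}$ only when $\log(x\log n)=O(k)$. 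The lemma is stated for all $x>1$ and $k\ge 1$ with no such coupling; for instance $k=1$ and $x=2^{100}$ makes your bound vacuous, while the lemma's conclusion (failure probability a fixed constant $<1$) is still a nontrivial claim that the paper does establish.

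The paper sidesteps the union bound entirely. Instead of summing failure probabilities over checkpoints, it shows directly that $\Pr[X_i \text{ for some } i\ge\ell] = O(\Pr[X_\ell])$: condition on $j$ being the largest index for which the prefix is overfull with $B\ge(1+1/x)j\mu$ balls; each of those $B$ balls lands in the first $\ell$ bins independently with probability $\ell/j$, so the count in the first $\ell$ bins is a binomial with mean $\ge(1+1/x)\ell\mu=\Omega(1)$, and a binomial with $\Omega(1)$ mean exceeds its mean with probability $\Omega(1)$. Thus whenever \emph{any} prefix $i\ge\ell$ is overfull, the base prefix $\ell$ is also overfull with constant conditional probability, giving the desired bound at the cost of only a constant factor rather than a factor of $x\log n$. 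Your proof would need a qualitatively different device of this kind (or an additional hypothesis relating $x$, $k$, and $n$) to close the gap.
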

\begin{proof}
Let $\ell = x^{2 } k$. We wish to show that, with probability at least $ 1-2 ^ {-\Omega (k) } $,
there is no $i \ge \ell$ such that the first $ i $ bins contain either fewer than $ (1 - 1/x) i\mu $ balls
or more than $ (1 + 1/x) i\mu $ balls. We will focus on the more-than-$ (1 + 1/x) i\mu $ case,
since the fewer-than-$ (1 - 1/x) i\mu $ case follows by a symmetric argument.

Let $ X_i $ be the indicator random variable for the event that the first $ i $ bins contain at least $ (1 + 1/x) i\mu $ balls.
By a Chernoff bound, 
$$\Pr[X_\ell] \le 2 ^ {-\Omega (k) }.$$

It is tempting to apply a similar Chernoff bound to every $  i \ge \ell$, and then to take a union bound;
but this would yield a bound of $\Pr[X_i \text{ for some } i \ge \ell] \le \ell 2^{-\Omega(k)}$ rather than the desired
bound of $\Pr[X_i \text{ for some } i \ge \ell] \le 2^{-\Omega(k)}$. Thus a slightly more delicate approach is needed.

To complete the proof, we prove directly that
\begin{equation}
\Pr[X_i \text{ for some } i \ge \ell] \le O(\Pr[X_{\ell}]),
\label{eq:nounion}
\end{equation}
which we have already shown to be $2 ^ {-\Omega (k) }.$

Suppose that $X_i$ holds for some $ i \ge \ell$ and let $ j $ be the largest such $ i $.
The number $B$ of balls in the first $ j $ bins must satisfy $B \ge (1 + 1/x) j\mu $. Each of these $B$ balls independently has
probability $\ell /j $ of being in one of the first $\ell$ bins. Conditioning on a given value of $B$, the number of balls in the first
$\ell$ bins is a binomial random variable with expected value at least $(1 + 1/x) \ell \mu$. 
With probability $\Omega(1)$ (and using the fact that $(1 + 1/x) \ell \mu = \Omega(1)$), this binomial random variable takes a value greater than or equal to its mean $(1 + 1/x) \ell \mu$,
which implies that $X_\ell$ occurs. Having established \eqref{eq:nounion}, the proof is complete. 
\end{proof}

\begin{corollary}
Consider the largest $ i $ such that the first $i$ bins contain at least $ (1 + 1/x) i\mu $ balls.
Then $\E[i] \le O(x^{2})$.
\end{corollary}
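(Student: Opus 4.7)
The plan is straightforward tail integration using the lemma just proved. Let $I$ denote the random variable equal to the largest $i$ such that the first $i$ bins contain at least $(1+1/x)i\mu$ balls (set $I = 0$ if no such $i$ exists). The goal is to bound $\E[I]$.

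First I would rewrite the expectation using the layer-cake formula $\E[I] = \int_0^\infty \Pr[I \ge t]\, dt$. I would split this integral at $t = x^2$: the contribution from $[0, x^2]$ is trivially bounded by $x^2$ since $\Pr[I \ge t] \le 1$, so the entire task reduces to bounding $\int_{x^2}^\infty \Pr[I \ge t]\, dt$.

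For the tail, I would apply Lemma~\ref{lem:saturated} with the substitution $k = t/x^2$. The lemma states that with probability $1 - 2^{-\Omega(k)}$, \emph{every} $i \ge x^2 k$ has the first $i$ bins containing between $(1-1/x)i\mu$ and $(1+1/x)i\mu$ balls; in particular, on this event, no $i \ge x^2 k$ satisfies the condition defining $I$, so $I < x^2 k$. Therefore $\Pr[I \ge x^2 k] \le 2^{-\Omega(k)}$. Substituting back, $\Pr[I \ge t] \le 2^{-\Omega(t/x^2)}$ for $t \ge x^2$, and so
\[
\int_{x^2}^\infty \Pr[I \ge t]\, dt \;\le\; \int_{x^2}^\infty 2^{-\Omega(t/x^2)}\, dt \;=\; x^2 \int_1^\infty 2^{-\Omega(k)}\, dk \;=\; O(x^2).
\]
Combining with the $[0, x^2]$ portion yields $\E[I] = O(x^2)$, which is the claim.

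I do not expect a real obstacle here: the lemma already provides the exponentially decaying tail at the correct scale $x^2$, and the corollary is a direct moment-from-tail computation. The only thing to be mildly careful about is ensuring the $k = 1$ threshold in the lemma is respected (since the lemma assumes $k \ge 1$), which is why splitting the integral precisely at $x^2$ is the natural choice.
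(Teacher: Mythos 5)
Your proof is correct and is exactly the tail-integration argument the paper intends; the paper states the corollary without a written proof, treating it as an immediate consequence of Lemma~\ref{lem:saturated}, and your derivation (split the layer-cake integral at $t = x^2$, then use $\Pr[I \ge x^2 k] \le 2^{-\Omega(k)}$ from the lemma on the tail) is the standard way to turn such a tail bound into a first-moment bound.
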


\begin{lemma}
Let $x > 1 $ and $ k \ge 1 $. With probability $ 1-2 ^ {-\Omega (k) } $, there does not exist any $ i $ such that
the first $ i $ bins contain at least $(1 + 1/x) i\mu + k x $ balls.
\label{lem:bolus}
\end{lemma}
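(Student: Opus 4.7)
The plan is to combine Lemma~\ref{lem:saturated} (which already handles large indices) with a separate conditioning argument for small indices, mirroring the conditioning trick inside Lemma~\ref{lem:saturated}'s proof. Set $\ell = x^2 k$ and let $T_i$ denote the excess count $(\text{balls in first } i \text{ bins}) - i\mu$. The bad event at index $i$, namely ``first $i$ bins contain $\geq (1+1/x)i\mu + kx$ balls,'' is exactly $T_i \geq i\mu/x + kx$, which in particular implies $T_i \geq kx$. Lemma~\ref{lem:saturated} already rules out this event for every $i \geq \ell$, so the task reduces to showing
\[
\Pr\bigl[\exists\, i \leq \ell : T_i \geq kx\bigr] \leq 2^{-\Omega(k)}.
\]

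To establish this, I would introduce the stopping time $\tau = \min\{i : T_i \geq kx\}$ and prove $\Pr[\tau \leq \ell] = O(\Pr[T_\ell \geq kx/2])$ by a conditioning argument that goes \emph{upward} (from the triggering index $\tau = j$ to the larger index $\ell$), rather than \emph{downward} as in Lemma~\ref{lem:saturated}'s proof. Conditioning on $\tau = j$ and $T_j = c$ for some $j \leq \ell$ and $c \geq kx$, and letting $B := j\mu + c$: regardless of how these $B$ balls are spread among the first $j$ bins, the remaining $m-B$ balls are i.i.d.\ uniform over bins $j+1, \ldots, n$, so the number of balls in bins $j+1, \ldots, \ell$ is $\mathrm{Bin}(m-B,\, (\ell-j)/(n-j))$. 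A one-line calculation gives $\E[T_\ell \mid \tau = j, T_j = c] = c(n-\ell)/(n-j) \geq c/2 \geq kx/2$, using $\ell \leq n/2$. Since $\mu = \Theta(1)$, a binomial meets or exceeds its mean with probability $\Omega(1)$, so $\Pr[T_\ell \geq kx/2 \mid \tau = j, T_j = c] = \Omega(1)$; taking expectations over $(j,c)$ yields the reduction. A direct Chernoff bound on $T_\ell$ (deviation $kx/2$, variance $\Theta(kx^2)$) then gives $\Pr[T_\ell \geq kx/2] \leq 2^{-\Omega(k)}$, which finishes the small-$i$ case.

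The main obstacle is the validity of this conditioning step: $\{\tau = j\}$ is a nontrivial event depending on the whole prefix $(T_1, \ldots, T_j)$, so it does not obviously decouple from the later bins. The key observation that makes the argument work is a renewal-like property of the multinomial distribution: conditioned on the total count $B$ in the first $j$ bins, the distribution of balls in bins $j+1, \ldots, n$ depends only on $B$, and not on how those $B$ balls are distributed internally or (a fortiori) on the stopping-time event. This is what collapses $\E[T_\ell \mid \tau = j, T_j = c]$ to a one-line binomial calculation and lets the rest of the proof go through cleanly. A minor edge case is $\ell > n/2$; there one can instead take $\ell = \min(x^2 k,\, n/2)$ without changing any of the subsequent bounds.
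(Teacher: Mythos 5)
Your approach is genuinely different from the paper's, and the overall plan is sound, but there is a gap in the reduction $\Pr[\tau \le \ell] = O(\Pr[T_\ell \ge kx/2])$. You need the conditional probability $\Pr[T_\ell \ge kx/2 \mid \tau = j, T_j = c]$ to be $\Omega(1)$ \emph{uniformly} over all $(j,c)$ with $j \le \ell$ and $c \ge kx$, and you justify this by the claim that ``a binomial meets or exceeds its mean with probability $\Omega(1)$.'' That claim is false when the binomial's mean is small. Writing $X$ for the number of balls that land in bins $j+1,\ldots,\ell$, conditioned on $\tau = j, T_j = c$ you have $X \sim \mathrm{Bin}\bigl(m - B,\ (\ell - j)/(n - j)\bigr)$ with $B = j\mu + c$, so its conditional mean $\E[X] = \bigl((n-j)\mu - c\bigr)(\ell - j)/(n - j)$ can be arbitrarily small when $c$ is close to $(n-j)\mu$ (i.e.\ nearly all balls have already landed in the first $j$ bins). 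In that regime $X = 0$ with probability close to $1$, and $T_\ell = c - (\ell-j)\mu + X \approx c - (\ell - j)\mu$, which you have no control over beyond $\E[T_\ell] \ge kx/2$; indeed, one can check that whenever the threshold $kx/2 - c + (\ell - j)\mu$ is positive, $T_\ell < kx/2$ is the likely outcome if $\E[X]$ is tiny. The invocation ``since $\mu = \Theta(1)$'' is a non sequitur here: it is $\E[X]$, not $\mu$, that needs to be bounded away from zero. (The $\ell > n/2$ edge case is also not as cheap as claimed: truncating to $\ell' = n/2$ leaves a window $[n/2, x^2 k]$ not covered by Lemma~\ref{lem:saturated}.)

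This is precisely the difficulty the paper's proof is engineered to avoid. The paper conditions \emph{downward} from the triggering index $j$: given $B \ge j\mu + kx$ balls in the first $j$ bins, the count in the first $2^q$ bins (where $2^q$ is the largest power of two $\le j$) is $\mathrm{Bin}(B, 2^q/j)$, whose mean $B \cdot 2^q/j \ge B/2 \ge kx/2 > 1/2$ is bounded away from zero \emph{for every} admissible $(j,c)$, so the ``binomial exceeds its mean with probability $\Omega(1)$'' step is sound there. The price the paper pays is a union bound over the $O(\log r)$ dyadic scales $2^q$, handled by a geometric series plus a tail-dominated series of Chernoff bounds. Your single-tail-bound reduction is cleaner in spirit and would work if you could force the conditional mean of $X$ up, but as written it fails on the extreme $(j,c)$ pairs. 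One way to repair it: aim for the weaker target $T_\ell \ge kx/4$, note that $\E[T_\ell] - kx/4 \ge kx/4$, and use a \emph{lower-tail} Chernoff bound on $X$ (rather than the median claim) to show $\Pr[X < \E[X] - kx/4] \le \exp(-\Omega(k))$ uniformly; this handles the small-mean case automatically because the threshold is then nonpositive. Alternatively, split off the extreme $(j,c)$ pairs (those with $\E[X] < 1$) and bound their total probability directly, which is $2^{-\Omega(n)}$ and hence negligible.
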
 
\begin{proof}
Let $Y_i$ be the indicator random variable for the event that the first $ i $ bins contain at least $(1 + 1/x) i\mu + k x $ balls.
Let $ r = x^{2} k$. By Lemma \ref{lem:saturated},
$$\Pr[Y_i \text{ for some }i \ge r] \le 2^{-\Omega(k)}.$$
Thus it suffices to argue that
$$\Pr[Y_i \text{ for some }i \le r] \le 2^{-\Omega(k)}. $$
As in the previous proof, taking a Chernoff bound and then summing over $ i $ will not give the
result that we are aiming for. Thus a more refined approach is again needed.

Suppose that $ Y_i $ occurs for some $ i \le  r$, and let $ j $ be the largest such $ i $.
Let $B$ satisfying $B \ge j \mu + k  x$ be the number of balls in the first $ j $ bins, and let
 $2 ^ q $ be the largest power of two satisfying $2^q \le j$. Each of the $ B $ balls in the first $ j $ bins
has probability $ 2 ^ q/j $ of being in the first $ 2 ^ q $ bins. Conditioning on a given value of $ B $,
we therefore have that the number of balls that land in the first $ 2 ^ q $ bins is a binomial random variable with expected value
at least $2^q \mu + k x / 2$. With probability $\Omega (1) $, this random variable takes a value at least as large as
its mean. That is, if we condition on $ Y_i $ occurring for some $ i \le r $, then with constant probability there is some power 
of two $2^q \le r$ such that at least $2 ^ q\mu + kx / 2 $ balls land in the first $ 2 ^ q $ bins.

For $ q \in \{0, 1, \ldots, \log r\}$, let $ Z_q $ denote the indicator random variable for the event that at least
$2^q\mu + kx / 2$ balls land in the first $ 2 ^ q $ bins. So far, we have shown that
$$\Pr[Y_i \text{ for some }i \le r] = O(\Pr[Z_q \text{ for some }q \le \log r]). $$
By a Chernoff bound, if we consider a given $Z_q$, and we define $s$ such that  $2^{q + s} = kx$, then
$$
\Pr[Z_q] \le 
\begin{cases}
2^{-\Omega(skx)} & \text{ if } s > 0, \\
2^{-\Omega\left(k^2x^2 / 2^q \right)} & \text{ if } s \le 0.\\
\end{cases}
$$
Recalling that $r = x^2k$, it follows that
\begin{align*}
    \Pr[Z_q \text{ for some }q \le \log r] & \le \sum_{s > 0} 2^{-\Omega(skx)} +
\sum_{q = \log (kx)}^{\log (kx^2)} 2^{-\Omega\left(k^2x^2 / 2^q \right)}.
\end{align*}
The first sum is a geometric series summing to $2^{-\Omega(kx)}$. The second sum is dominated
by its final term which is $2^{-\Omega(k)}$. Thus the lemma is proven.
\end{proof}

To demonstrate how these results relate to linear probing, we now use them to re-create the classic upper bounds on insertion performance (Proposition \ref{prop:classic}) and query performance (Proposition \ref{prop:classic2}) for ordered linear probing.
\begin{proposition}
Starting with an empty linear-probing hash table of $ n $ slots, suppose that we perform $(1 - 1/x) n$ insertions.
The running time $ T $ of the final insertion satisfies
$$\Pr[T \ge kx^{2 }] \le 2 ^ {-\Omega (k) }. $$
\label{prop:classic}
\end{proposition}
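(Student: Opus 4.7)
The plan is to reduce the event $T \geq kx^2$ to a statement about prefix counts in a pure balls-and-bins process, and then to apply Lemma~\ref{lem:saturated}. I condition on the hash $h$ of the final insertion, which is independent of the hashes of the first $m - 1 := (1-1/x)n - 1$ balls. Since a linear-probing insertion probes $h, h+1, \ldots$ until it meets an empty slot, $T \geq kx^2$ occurs exactly when each of the positions $h, h+1, \ldots, h + kx^2 - 2$ is already occupied just before the final insertion.

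The second step is a parking-function style reduction. Suppose these positions are all occupied and let $[a, b]$ be the maximal run containing them, so $a \leq h$ and position $a-1$ is empty. In the insertion-only regime, once a slot is empty it remains empty, so position $a - 1$ was empty throughout the entire insertion history; consequently no ball whose hash lies in $[1, a-1]$ could ever have landed in $[a, b]$ (it would have settled at $a-1$ or earlier). Combined with the fact that balls never probe leftward, this forces every one of the $h + kx^2 - 1 - a$ balls now occupying $[a, h + kx^2 - 2]$ to have hashed into $[a, h + kx^2 - 2]$. Hence, among the first $m - 1$ balls, some interval of length $\ell := h + kx^2 - 1 - a \geq kx^2 - 1$ ending at position $h + kx^2 - 2$ receives at least $\ell$ hashes.

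By translation invariance of the uniform hash function, the joint distribution of the counts in the nested family of right-anchored intervals $\{[i, h+kx^2-2]\}_{i \le h+kx^2-2}$ equals the joint distribution of the prefix counts $C_\ell := \#\{j \leq m-1 : h_j \in [1, \ell]\}$. It therefore suffices to bound
\[
\Pr\bigl[\exists\, \ell \geq kx^2 - 1 \,:\, C_\ell \geq \ell\bigr] \;\leq\; 2^{-\Omega(k)}.
\]
I finish by applying Lemma~\ref{lem:saturated} to the first $m-1$ balls with parameter $\Theta(k)$ (absorbing the additive $-1$ into the implicit constant): with probability $1 - 2^{-\Omega(k)}$, every $\ell \geq kx^2 - 1$ satisfies
\[
C_\ell \;\leq\; (1+1/x)\,\ell\,\mu \;\leq\; (1 - 1/x^2)\,\ell \;<\; \ell,
\]
since $\mu \leq 1 - 1/x$ forces $(1+1/x)\mu \leq 1 - 1/x^2 < 1$. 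This rules out the bad event and yields $\Pr[T \geq kx^2] \leq 2^{-\Omega(k)}$.

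I do not expect a substantive obstacle, since both ingredients---the parking-function observation and Lemma~\ref{lem:saturated}---are already in hand. The one step meriting a bit of care is ensuring that the overflow witness can be chosen so that its right endpoint is exactly $h + kx^2 - 2$; this is what lets me convert the overflow event into a statement about a single anchored family of intervals, and hence a single family of prefix counts to which the lemma applies.
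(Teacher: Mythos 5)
Your argument is correct and is essentially the same as the paper's: both use the parking-function observation that every occupant of the maximal run containing $h$ must have hashed into that run, and both then invoke Lemma~\ref{lem:saturated} on a right-anchored family of intervals (the paper anchors at $p + kx^2 - 1$ and counts all $m$ balls; you anchor at $h + kx^2 - 2$ and count the first $m-1$; the off-by-ones are immaterial). One small slip in wording: ``once a slot is empty it remains empty'' is stated backwards --- the monotonicity you need, and correctly use, is that a slot once \emph{occupied} stays occupied, so a slot that is empty just before the final insertion was empty throughout; the conclusion you draw about position $a-1$ is correct, only the stated direction is inverted.
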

\begin{proof}
If the final insertion hashes to some position $ p $, and takes time $ T \ge kx^{2 }$,
then the insertion must have been inserted into a run of elements going from position $p - i$
 to position $ p + T - 1 \ge P + kx^2 - 1$ for some $ i $. Consequently, the number of elements $u$ that satisfy $h(u) \in [p - i , p + kx^2  - 1]$ must be at least $i + kx^2$, even though the expected number of such elements is $(1 - 1/x) (i + kx^2)$.
By treating the positions $p + kx^2 - 1, p + kx^2 - 2, \ldots$ as bins, we can apply Lemma \ref{lem:saturated}
to deduce that the probability of such an $ i $ existing is at most $2 ^ {-\Omega (k) }. $
\end{proof}

\begin{proposition}
Consider an ordered-linear-probing hash table with $ n $ slots that contains $(1 - 1/x) n$ elements and no tombstones,
and suppose we perform a query.
The running time $ T $ of the query satisfies
$$\Pr[T \ge kx] \le 2 ^ {-\Omega (k) }. $$
\label{prop:classic2}
\end{proposition}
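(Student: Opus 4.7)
The plan is to mirror the proof of Proposition~\ref{prop:classic}: show that a long query running time forces some interval of bins to be unusually overfull, and then invoke one of the balls-and-bins lemmas to control the probability. The crucial distinction from Proposition~\ref{prop:classic} is that a slow insertion witnesses an interval whose overflow is \emph{multiplicative} in its length (so Lemma~\ref{lem:saturated} was the right tool), whereas a slow query of length $T$ witnesses an interval that exceeds its expected count by only an \emph{additive} $\Omega(T)$, independent of the interval's length. Lemma~\ref{lem:bolus} is tailored to exactly this additive regime and will deliver the desired $2^{-\Omega(k)}$ bound.

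Suppose the query hashes to $p$ and runs for $T \ge kx$ steps. Because the table has no tombstones and each run is stored in hash-sorted order, every one of the positions $p, p+1, \ldots, p+T-2$ that the query scans must hold an element with hash in $[j_0, p]$, where $j_0 \le p$ is the left endpoint of the run containing $p$ (otherwise the query would have terminated earlier). Moreover, the insertion-only invariant of ordered linear probing guarantees that every element residing at a position $q \in [j_0, p-1]$ also has hash in $[j_0, p-1] \subseteq [j_0, p]$, since position $j_0 - 1$ has been free throughout. Combining these two disjoint sets of elements yields at least $(p - j_0) + (T - 1)$ elements whose hashes lie in the length-$(p - j_0 + 1)$ interval $[j_0, p]$. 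Setting $i = p - j_0$, I have exhibited an $i \ge 0$ for which the $i+1$ consecutive bins ending at position $p$ contain at least $(i + 1) + (T - 2)$ balls.

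To finish, I would relabel the bins so that they run leftward from $p$ (a measure-preserving permutation of the uniform hash) and apply Lemma~\ref{lem:bolus}. With $\mu = 1 - 1/x$, the threshold in the lemma becomes $(1 + 1/x)(i + 1)\mu + k' x = (1 - 1/x^2)(i + 1) + k' x \le (i + 1) + k' x$, so any $k' \le k/2$ (together with $k$ larger than a small absolute constant; smaller $k$ is trivial, since $2^{-\Omega(k)} = \Theta(1)$) makes the actual count $(i+1) + (T - 2) \ge (i + 1) + (kx - 2)$ exceed the threshold. Lemma~\ref{lem:bolus} thus caps the probability of the bad event by $2^{-\Omega(k')} = 2^{-\Omega(k)}$. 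I do not foresee any substantive obstacle, as the argument is a short reduction; the only spot to tread carefully is verifying that, under insertion-only ordered linear probing with no tombstones, the elements at $[j_0, p-1]$ really do all have hashes in $[j_0, p-1]$, which is immediate from the fact that position $j_0 - 1$ is and has always been empty.
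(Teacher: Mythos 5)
Your argument is correct and follows essentially the same route as the paper's proof: identify the left endpoint of the run containing the probed position, observe that all occupants of $[j_0,\, p+T-2]$ must hash into $[j_0, p]$ (with the no-tombstone and $j_0-1$-is-free observations making this precise), and then invoke Lemma~\ref{lem:bolus} on bins relabeled leftward from $p$. The small differences (tracking $T$ rather than truncating to $kx$, and splitting the run at $p$ before recombining the counts) are cosmetic and do not change the argument.
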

\begin{proof}
Notice that we do not have to distinguish between positive and negative queries, since even for a negative query
we only need to perform a linear scan until we find a key with a larger hash than the one we are querying.

If the key that we are querying hashes to some position $ p $, and takes time $ T  \ge kx$,
then all of the elements in positions $p, \ldots, p + T - 2 \ge p + kx - 2$ must have hashes
at most $p$. Consider the largest $ i $ such that all of positions $p - i, \ldots, p + kx - 2$
contain elements. All of the elements in positions $p - i, \ldots, p + kx - 2$ must have hashes between $p - i$ and $p$. Thus the total number of elements that hash to positions $[p - i,  p]$
must be at least $i + kx - 1$, even though the expected number of elements that hash to those positions is $(1 - 1/x) (i + 1)$.
By treating the positions $p, p - 1, p - 2, \ldots$ as bins, we can apply Lemma \ref{lem:bolus} to deduce that
the probability of any such $ i $ existing is at most $2 ^ {-\Omega (k) }$.
\end{proof}

The proof of Proposition \ref{prop:classic2} comes with a corollary that will be useful to reference later.
\begin{corollary}
Consider an ordered-linear-probing hash table with $ n $ slots that contains $(1 - 1/x) n$ elements and no tombstones. Consider a position $ i $, and let $T$ be the number of elements $u$ that reside in positions $i$ or greater, but that have hashes $h(u) < i$. Then 
$$\Pr[T \ge kx] \le 2 ^ {-\Omega (k) }. $$
\label{cor:over}
\end{corollary}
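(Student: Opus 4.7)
My plan is to reduce the corollary directly to Lemma~\ref{lem:bolus}, essentially by extracting the combinatorial core of the proof of Proposition~\ref{prop:classic2}. The key observation is that the event $T\geq kx$ forces a long saturated block just to the left of position~$i$ to have an abnormal overflow of hashes, which is exactly what Lemma~\ref{lem:bolus} rules out.

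Concretely, the first step is a structural claim about the occupied positions. Suppose $T$ is the number of elements residing at positions $\geq i$ with hash $<i$. Because we are in an ordered-linear-probing table with no tombstones, elements within each run are sorted by hash and the first element of each run has hash equal to its own position. Therefore, if some element at position $j\ge i$ has hash $<i$, then by the ordering every element at positions $i,i+1,\ldots,j$ in the same run also has hash $<i$. It follows that the $T$ elements counted by $T$ occupy \emph{exactly} the positions $i,i+1,\ldots,i+T-1$ (a contiguous prefix of a run).

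The second step is to extend this block leftward. Let $s\ge 0$ be the largest integer such that positions $i-s,i-s+1,\ldots,i+T-1$ are all occupied. By maximality position $i-s$ starts a run, so its element has hash exactly $i-s$; and since the element at position $i+T-1$ has hash $<i$, ordered probing forces every element in the block to have hash in $[i-s,i-1]$. Thus the interval of hash values $[i-s,i-1]$ (of length $s$) receives at least $s+T$ hashes. The third step is to feed this into Lemma~\ref{lem:bolus} after reversing orientation: treat the positions $i-1,i-2,\ldots$ as the ``first, second, \ldots'' bins. Then on the event $T\ge kx$ there exists $s$ such that the first $s$ bins contain at least $s+T\ge (1-1/x^2)s+kx=(1+1/x)s\mu+kx$ balls, with $\mu=1-1/x$. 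Lemma~\ref{lem:bolus} says that such an $s$ exists with probability at most $2^{-\Omega(k)}$, which is the desired bound.

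The only mild subtlety, which I expect to be the main thing to write carefully, is the structural lemma in Step~1: namely justifying that in an ordered-linear-probing table without tombstones the rightmost element with hash $<i$ lying at a position $\geq i$ pins down a contiguous saturated prefix $i,\ldots,i+T-1$, and that the enclosing run extends leftward to some position $i-s$ whose element has hash exactly $i-s$. Once this is in place, the probability estimate is immediate from Lemma~\ref{lem:bolus}, exactly as in the proof of Proposition~\ref{prop:classic2}; no additional probabilistic analysis is needed.
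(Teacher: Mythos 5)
Your proof is correct and takes essentially the same route as the paper: the paper gives no separate proof for Corollary~\ref{cor:over}, stating only that it ``comes with'' the proof of Proposition~\ref{prop:classic2}, and your argument is an explicit transcription of that proof (identify the saturated block of $T$ elements at positions $i,\ldots,i+T-1$, extend leftward to the free slot at $i-s-1$, observe that the hash interval $[i-s,i-1]$ of size $s$ then receives at least $s+T$ hashes, and invoke Lemma~\ref{lem:bolus}). The structural step you flag as the delicate part---that in an insertion-only ordered-linear-probing table the run boundary pins down the hash range---is handled correctly.
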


\section{Bounds on Insertion Surplus}\label{sec:surplus}

In this section, we introduce two core technical propositions that will be used in subsequent sections 
for our analysis of ordered linear probing. 

Consider a sequence $S$ of operations which alternate between insertions and deletions. (Think of $S$ as the operations
between two rebuilds.)
Let $n$ be the number of slots in the hash table. Let $ P $ be a sub-interval of $[n]$ and 
let $S_P$ denote the subset $\{u \in S \mid h(u) \in P\}$.

We say that a subset $ S' \subseteq S_P$ is \defn{downward-closed} (with respect to $S_P$) if it satisfies the following
property: for every insertion or deletion $ u \in S'$, every $ v \in S_P$ that occurs temporally before $ u $
and satisfies $h(v) \ge h(u)$ is also in $ S' $. Define the \defn{insertion surplus}  of a 
downward-closed set $ S' $ to be the number of insertions in $ S' $
minus the number of deletions in $ S' $, if there are more insertions than deletions in $ S' $,
and $ 0 $ otherwise. 

The purpose of this section is to prove upper and lower bounds on
the maximum insertion surplus of any downward-closed subset $ S' \subseteq S_P $. We will parameterize these bounds by $\mu := \E[|S_P|] / 2$, which is the expected number of insertions (and also the expected number of deletions) in $S_P$.

Notice that $ S_P $ itself is downward-closed. If we assume that all of the operations in $S_P$ are on distinct keys, then the expected insertion surplus of $ S_P $ will be
$\Omega (\sqrt { \mu }) $ (since the number of insertions and the number of deletions in $ S_P$ both have standard deviation $\Theta (\sqrt { \mu }) $).
A natural question is whether there exists any downward-closed subset $ S' \subseteq S_P$ with a significantly larger insertion surplus. 

This section proves two propositions. 
\begin{proposition}
Suppose that $|P| \ge \sqrt{\mu}$. 
With probability $ 1 - 1/\poly (\mu) $, every downward-closed subset $ S'\subseteq S_P $ has insertion surplus
$\tilde { O } (\sqrt { \mu }) $.
\label{prop:surplus}
\end{proposition}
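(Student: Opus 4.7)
The plan is to adapt the grid-path framework outlined in the technical overview. I would first set up a correspondence between downward-closed subsets $S' \subseteq S_P$ and monotone paths through a two-dimensional grid whose horizontal axis indexes the hash positions in $P$ and whose vertical axis indexes time. Each insertion in $S_P$ becomes a blue dot and each deletion a red dot at coordinates $(h(u), t)$. A subset $S'$ is downward-closed precisely when there is an up-and-to-the-right monotone path $\gamma$ such that $S'$ consists of the operations weakly below $\gamma$; the insertion surplus of $S'$ then equals the blue-red deviation of $\gamma$ (the number of blue dots below $\gamma$ minus the number of red dots, truncated at $0$). Thus it suffices to prove that, with probability $1 - 1/\poly(\mu)$, the maximum blue-red deviation over all monotone paths is $\tilde{O}(\sqrt{\mu})$.

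Next I would discretize by overlaying a $k \times k$ lattice with $k = \lceil \sqrt{\mu} \rceil$, so that each cell expects $\Theta(1)$ blue and $\Theta(1)$ red dots. The hypothesis $|P| \ge \sqrt{\mu}$ is exactly what lets us choose $k$ vertical grid lines inside $P$. A Chernoff bound plus union bound shows that every cell contains at most $O(\log \mu)$ dots with probability $1 - 1/\poly(\mu)$, so an arbitrary monotone path can be replaced by one aligned to the lattice at an additive cost of at most $\tilde{O}(\sqrt{\mu})$. After this reduction, I would prove the key geometric fact that every monotone lattice path $\gamma$ admits a decomposition of its underlying region into disjoint axis-aligned rectangles whose perimeters sum to $O(k \log k)$. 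This is built recursively by cutting $\gamma$ at its midpoint $q$, dropping the maximal rectangle with $q$ as its upper-left corner, and recursing on the two remaining sub-paths (each inside a sub-grid of perimeter $k/2$). The recurrence $T(k) \le k + 2 T(k/2)$ yields the claimed $O(k \log k) = O(\sqrt{\mu} \log \mu)$ total perimeter.

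Finally, for each axis-aligned rectangle of area $a$ (hence perimeter $p \ge 2\sqrt{a}$), a Chernoff bound on the blue count minus the red count gives blue-red deviation $O(\sqrt{a \log \mu}) \le O(p \sqrt{\log \mu})$ with probability $1 - 1/\poly(\mu)$. A union bound over the $O(\mu^2)$ lattice-aligned rectangles makes this simultaneous for all of them. Summing $O(\sqrt{\log \mu}) \cdot p$ over the rectangles in the decomposition of any fixed lattice path $\gamma$, the total blue-red deviation is $O(\sqrt{\mu} \log^{3/2} \mu)$, which together with the $\tilde{O}(\sqrt{\mu})$ discretization slack completes the bound. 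The main obstacle will be handling the dependencies between the blue and red dots: the operations on a re-inserted key produce dots whose horizontal coordinates are literally identical rather than independent, and $|S_P|$ is itself random. I would handle this by first conditioning on $|S_P|$ (so the number of blue and red dots is fixed) and noting that the remaining negative correlations in dot locations only tighten Chernoff-type bounds; shared-key artifacts can be absorbed into the $\tilde{O}(\sqrt{\mu})$ slack since the expected number of multiply-touched keys in $S_P$ is $O(\mu/|P|) = O(\sqrt{\mu})$.
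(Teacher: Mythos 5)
Your overall architecture matches the paper's proof almost exactly: reinterpret downward-closed subsets as monotone paths, pass to a coarse $\sqrt{\mu}\times\sqrt{\mu}$ lattice (using $|P|\ge\sqrt{\mu}$), do the recursive midpoint rectangular decomposition with total perimeter $O(\sqrt{\mu}\log\mu)$, and Chernoff-plus-union-bound over rectangles. The one place where you diverge, and where there is a real gap, is the treatment of re-inserted keys.

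Your proposed dismissal --- ``the expected number of multiply-touched keys in $S_P$ is $O(\mu/|P|) = O(\sqrt{\mu})$'' --- is wrong. The expected number of operations hashing into $P$ is $2\mu$, and if each key in $S$ is touched $\Theta(1)$ times (say three ops per key, which is a perfectly legal alternating workload), then the expected number of multiply-touched keys hashing into $P$ is $\Theta(\mu)$, not $O(\sqrt{\mu})$. So you cannot simply throw these keys into the $\tilde{O}(\sqrt{\mu})$ slack. The paper handles this inside the per-rectangle Chernoff argument: for the time window $T$ spanned by a rectangle, call an insertion \emph{serious} if the key was not previously deleted within $T$, and a deletion \emph{serious} if the key is not subsequently reinserted within $T$; non-serious insertions and non-serious deletions pair up bijectively (same key, hence same hash, hence same column), so they cancel exactly in the blue-red count of every rectangle whose time range is $T$. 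The serious operations are on distinct keys, so their hashes are independent and Chernoff applies cleanly. An equivalent phrasing closer to your framing: the operations of a single key that fall inside a rectangle's time range form a contiguous alternating insert/delete subsequence, so each key's net contribution to a rectangle's blue-red count lies in $\{-1,0,1\}$; with per-key contributions bounded by one in absolute value and hashes independent across keys, Hoeffding gives the needed concentration. Either fix works; merely bounding the number of multiply-touched keys does not.

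One smaller quantitative issue: the per-rectangle bound $O(\sqrt{a\log\mu}) \le O(p\sqrt{\log\mu})$ is too strong for small rectangles. A single lattice cell ($a=1$) can deviate by $\Theta(\log\mu/\log\log\mu)$ with probability $1/\poly(\mu)$, so the correct uniform statement is $O(\sqrt{a}\log\mu) = O(p\log\mu)$, yielding a final bound of $O(\sqrt{\mu}\log^2\mu)$ rather than your $O(\sqrt{\mu}\log^{3/2}\mu)$. Both are $\tilde{O}(\sqrt{\mu})$, so this does not affect the statement being proved, but the stated constant in the exponent is optimistic.
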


\begin{proposition}
Suppose that the insertions/deletions in $ S_P$ are all on different keys. Suppose that $|S| \le n$, that $n$ is sufficiently large as a function of $ \mu $ and $|P|$, and that $|P| \ge \sqrt{\mu}$. 
Then with probability $ 1 - o (1) $,
there exists a downward-closed subset $ S'\subseteq S_P $ with insertion surplus
$\Omega (\sqrt { \mu\log\log \mu }) $.
\label{prop:surpluslower}
\end{proposition}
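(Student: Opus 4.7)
The plan is to follow Section~3.2: recast the insertion-surplus question as a blue-red deviation problem for monotone paths, and then construct such a path recursively. Set up a grid whose horizontal axis represents the hash positions in $P$ and whose vertical axis represents the times $1,\dots,|S|$ at which the operations of $S$ occur, plotting a blue dot at $(h(u),t(u))$ for each insertion $u\in S_P$ and a red dot for each deletion. For any monotone (right/up) staircase path $\gamma$ from the bottom-left to the top-right of the grid, the set $S'_\gamma\subseteq S_P$ of operations whose dot lies strictly below $\gamma$ is automatically downward-closed in the sense of the proposition: if $(h_u,t_u)$ lies below $\gamma$ and $v\in S_P$ satisfies $h_v\ge h_u$ and $t_v\le t_u$, the monotonicity of $\gamma$ forces $(h_v,t_v)$ also to lie below $\gamma$. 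Because the keys in $S_P$ are distinct, the insertion surplus of $S'_\gamma$ equals the blue-red deviation of $\gamma$, so it suffices to exhibit a monotone $\gamma$ with blue-red deviation $\Omega(\sqrt{\mu\log\log\mu})$ with probability $1-o(1)$.

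To do so, subdivide the grid into a $\sqrt\mu\times\sqrt\mu$ array of equal cells, each of width $|P|/\sqrt\mu\ge 1$ (using $|P|\ge\sqrt\mu$) and height $|S|/\sqrt\mu$. Because $\mu=|P|\,|S|/(2n)$ and the hashes of operations are independent uniform on $[n]$, each cell contains in expectation exactly one blue dot and one red dot, and for disjoint cells the counts are (essentially) independent; the hypothesis that $n$ is large as a function of $\mu$ and $|P|$ lets us make this precise through Poissonization, or equivalently via a standard multinomial negative-correlation argument. For $j$ a power of two, call a $j\times j$ block of cells \emph{high-value} if the number of blue dots inside exceeds the number of red dots by at least $c\,j\sqrt{\log\log\mu}$; since this blue-minus-red difference has mean $0$ and standard deviation $\Theta(j)$, a standard Gaussian/Poisson tail lower bound shows that, for a suitably small constant $c>0$, every such block is high-value with probability at least $1/\sqrt{\log\mu}$.

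Construct $\gamma$ recursively starting on the whole $\sqrt\mu\times\sqrt\mu$ grid. On a current $j\times j$ sub-grid, inspect its top-left $(j/2)\times(j/2)$ quadrant: if that quadrant is high-value (\emph{successful base case}), return the trivial right-and-up path that places the entire current sub-grid below it; otherwise recurse on the bottom-left and top-right quadrants and concatenate the two returned paths through the center point. Recursion bottoms out at $1\times1$ sub-grids that were never declared successful (\emph{failed base cases}), each of width $1$. A direct coordinate check shows that the top-left quadrants inspected in different sub-problems are pairwise disjoint sub-rectangles of the grid, so under the Poissonization their high-value events are mutually independent. A specific failed base case at depth $\Theta(\log\mu)$ therefore requires $\Theta(\log\mu)$ independent non-high-value outcomes and occurs with probability at most $(1-1/\sqrt{\log\mu})^{\Theta(\log\mu)}\le e^{-\Omega(\sqrt{\log\mu})}=o(1)$. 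Since there are at most $\sqrt\mu$ base cases, the expected total width of failed base cases is $o(\sqrt\mu)$, and Markov's inequality gives that with probability $1-o(1)$ the total width of successful base cases is $\Omega(\sqrt\mu)$.

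It remains to lower bound the blue-red deviation of $\gamma$. Each successful base case of side length $w$ contributes at least $c\,w\sqrt{\log\log\mu}$ through its high-value top-left quadrant, which is entirely engulfed below the trivial path; summing over successful base cases using the bound on their total width gives $\Omega(\sqrt{\mu\log\log\mu})$ just from the high-value quadrants. The remainder of the region below $\gamma$ is a collection of $O(\log\mu)$-levels' worth of axis-aligned sub-rectangles (the bottom-right quadrants that each recursive call engulfs, plus the non-top-left portions of successful base cases); it has mean-zero blue-red count, and a Chernoff bound combined with a union bound over the $\poly(\mu)$ candidate rectangles shows that its absolute contribution is $O(\sqrt\mu\,\polylog\mu)$ with probability $1-o(1)$, which is asymptotically dominated by the high-value contribution. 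The main obstacle is the probabilistic heart of the third paragraph: justifying that the high-value events across the disjoint top-left quadrants are sufficiently independent to make the $(1-1/\sqrt{\log\mu})^{\Theta(\log\mu)}$ calculation go through under the (slightly) multinomial rather than genuinely Poisson joint distribution of the cell counts, while simultaneously controlling the unaccounted-for area without weakening the success probability beyond $1-o(1)$.
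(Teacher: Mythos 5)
Your construction mirrors the paper's closely, and the first three paragraphs are essentially correct: the grid-and-monotone-path reformulation, the Poissonization step (which the paper handles with a coupling lemma), the definition of high-value quadrants with threshold $\Theta(j\sqrt{\log\log\mu})$ tuned so that the success probability is at least $1/\sqrt{\log\mu}$, the recursion on the bottom-left and top-right quadrants, and the observation that the inspected top-left quadrants are pairwise disjoint (so the high-value events along a recursion path are independent under the Poisson coupling, giving each failed leaf probability at most $(1-\Omega(1/\sqrt{\log\mu}))^{\Theta(\log\mu)} = o(1)$) all match the paper.

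The gap is in your final paragraph. You propose to bound the blue-red contribution of the region below $\gamma$ that lies outside the high-value quadrants by a Chernoff bound plus a union bound over $\poly(\mu)$ candidate rectangles, yielding $O(\sqrt{\mu}\,\polylog\mu)$, and you assert this is ``asymptotically dominated by the high-value contribution'' of $\Omega(\sqrt{\mu\log\log\mu})$. That inequality runs the wrong way: $\sqrt{\mu}\,\polylog\mu = \omega(\sqrt{\mu\log\log\mu})$, so the error term you have obtained swamps the signal and the argument does not close. The paper avoids the union bound entirely: the path $\gamma$ is a deterministic function of the contents of the inspected top-left quadrants alone, so \emph{conditionally on those quadrants} (which fix $\gamma$), the blue and red counts $X,Y$ below $\gamma$ outside the quadrants are independent Poisson variables with means at most $\mu$. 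A single Chernoff bound, targeting failure probability $o(1)$ rather than $1/\poly(\mu)$, then gives $Y-X \le O(\sqrt{\mu\log\log\log\mu})$ with probability $1-o(1)$, and $\sqrt{\mu\log\log\log\mu} = o(\sqrt{\mu\log\log\mu})$, so the high-value contribution genuinely dominates. To fix your writeup, replace the union-bound step with this conditional-independence observation; the rest stands.
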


Whereas Proposition~\ref{prop:surplus} tells us that no downward-closed subset $ S' $ has significantly larger insertion surplus
then the expected insertion surplus of $ S_P $, Proposition~\ref{prop:surpluslower} tells us that there most likely is some downward-closed subset $ S' $ whose insertion surplus
is (at least slightly) asymptotically larger than the expected insertion surplus of $ S_P$.

\paragraph{Reinterpreting insertion surplus in terms of paths on a grid.}
Before proving the propositions, we first reinterpret the propositions in terms of paths on a grid.
Consider a grid $G$ with height $|S|$ and width $|P|$. Define $ r $ so that $ r +1 $
is the beginning of interval~$P$. Color the cell $(i, j)$ in the grid $ G $
blue if the $i$-th operation in $S$ is an insertion whose hash is $ r + j $;
and color the cell $ (i, j) $ red if the $i$-th operation in $S$ is a deletion whose hash is
$ r + j $. (Many cells will be neither red nor blue.) 
The blue and red cells in the grid $G$ correspond to the insertions and deletions in $ S_P$.

Now consider the set of \defn{monotone paths through $G$}, that is, the set of paths that begin in the bottom left corner,
 and then walk along grid lines to the top right corner, only ever moving up or to the right.
Define the \defn{blue-red differential } of a path to be the number of blue cells that reside below the path minus the number of red cells that
reside below the path (we say that the path \defn{covers} these cells).

A subset $ S' \subseteq S$ is downward-closed if and only if there is a monotone path $\gamma$ 
through $ G $ such that the blue and red cells covered by $\gamma $ in $ G $ are precisely 
the insertions and deletions in $ S' $. 
Thus, rather than considering downward-closed subsets $S'$ of $ S_P$, we can consider monotone paths $\gamma $
through $ G $, and rather than considering the insertion surplus of each subset $ S' $, we can consider the blue-red differential
of each monotone path $\gamma $. Although this distinction may at first seem superficial, we shall see later that
the geometry of monotone paths makes them amenable to clean combinatorial analysis.

\paragraph{Considering a more coarse-grained grid \boldmath $G'$.}
One aspect of $ G $ that makes it potentially unwieldy is that it will likely be sparse, 
meaning that the vast majority of cells are neither red nor blue. 
Thus, in our proofs, it will also be useful to define a more coarse-grained grid $ G' $
that is laid on top of $G$.
The grid $ G' $ has width and height $\sqrt{\mu}$, meaning that each cell in $ G' $
corresponds to a sub-grid of $ G $ with width $|P| / \sqrt{\mu}$ and height $|S|/ \sqrt{\mu}$.
To avoid confusion, we will refer to the blue/red cells in $ G $ as blue/red dots in $ G' $.

Note that, since $G'$ is a $\sqrt{\mu} \times \sqrt{\mu}$ grid, we are implicitly assuming
that $\sqrt{\mu}$ is a positive integer; this assumption is w.l.o.g.. 
To simplify our discussion (so that
we can treat all of the cells of $G'$ as having uniform widths and heights),
 we will further assume that $|P|$ and $|S|$ are divisible by $\sqrt{\mu}$.
These assumptions can easily be removed by rounding all of the quantities to
powers of four, and performing the analysis using the rounded quantities.\footnote{Importantly, both propositions assume $|P| \ge \sqrt{\mu}$, so if we round both $|P|$ and $\mu$ to powers of $4$, then the rounded value for $|P|$ will be a multiple 
of the rounded value for $\mu$ (and $\mu$ will be a square number).}

The grid $G'$ is parameterized so that the expected number of blue dots (resp. red dots)
in each cell is exactly $1$. As terminology, we say that for each cell in $G'$, the insertions and deletions that \defn{pertain}
that cell are the ones that have blue/red dots in that cell; and the keys that \defn{pertain} to the cell are the keys
that have at least one insertion/deletion pertaining to the cell.  The expected number
of distinct keys that pertain to a given cell in $G'$ is at most $1$. Moreover, by a Chernoff bound, 
and with probability $1 - 1/\poly(\mu)$, each cell has at most $O(\log \mu)$ keys that pertain to it.

\subsection{Proof of Proposition \ref{prop:surplus}}

We wish to show that, with probability $1 - 1/\poly(\mu)$,
every monotone path $\gamma $ through $ G $ has blue-red differential at most $\tilde{O}(\sqrt{\mu})$. 
The next lemma establishes that, rather than considering monotone paths in $ G $,
it suffices to instead consider monotone paths in $ G' $. 
\begin{lemma}
With probability $1 -1 /\poly (\mu)$, the following holds. For every monotone path $\alpha $ with blue-red differential $ a $ in $ G $, 
there is a monotone path $\beta $ with blue-red differential $ b $ in $ G' $ 
such that $a - b = O(\sqrt{\mu} \log \mu)$.
\label{lem:Gprime}
\end{lemma}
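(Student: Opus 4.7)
The plan is to define $\beta$ explicitly as a rounding of $\alpha$ to the coarser grid $G'$, and then to bound the blue-red discrepancy between $\alpha$ and $\beta$ by the total number of blue/red cells in a thin ``staircase'' strip lying between them.

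First I would set cell dimensions $w = |P|/\sqrt{\mu}$ and $h = |S|/\sqrt{\mu}$ in $G$. Given a monotone path $\alpha$ in $G$, described as a non-decreasing step function $y = f_\alpha(x)$, I would define $\beta$ to be the $G'$-monotone path whose height on each $G'$-column $[iw, (i+1)w]$ is $\lfloor f_\alpha(iw)/h \rfloor \cdot h$. By construction $\beta$ is monotone, lies on $G'$-grid lines, and sits weakly below $\alpha$. Let $R$ denote the region between $\beta$ and $\alpha$. Because $\alpha$ is monotone from the bottom-left corner of $G$ to the top-right, it passes through at most $2\sqrt{\mu}-1$ cells of $G'$, and the set of $G'$-cells that meet $R$ is contained in the set of $G'$-cells that $\alpha$ crosses. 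Therefore $R$ is contained in a union $\mathcal{C}$ of at most $2\sqrt{\mu}$ cells of $G'$.

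Next I would bound the blue and red dots in $\mathcal{C}$. Every blue cell of $G$ covered by $\alpha$ but not by $\beta$ is contained in $R \subseteq \bigcup \mathcal{C}$, and likewise for red cells, so
\[
a - b \;\le\; (\text{blue dots in } \mathcal{C}) \;-\; (\text{red dots in } \mathcal{C}) \;\le\; \text{blue dots in } \mathcal{C}.
\]
The expected number of blue dots (resp.\ red dots) in any single cell of $G'$ is exactly $1$. By a Chernoff bound, each fixed cell contains at most $O(\log \mu)$ blue dots and at most $O(\log \mu)$ red dots with probability $1 - 1/\poly(\mu)$. Taking a union bound over all $\mu$ cells of $G'$, the same property holds simultaneously for every cell of $G'$ with probability $1 - 1/\poly(\mu)$. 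Conditioning on this event, every union of $O(\sqrt{\mu})$ cells contains $O(\sqrt{\mu}\log\mu)$ blue and red dots, so $a - b \le O(\sqrt{\mu}\log\mu)$, as required.

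I expect the main obstacle to be minor rather than conceptual: I need to verify that $\beta$ as defined is indeed a valid $G'$-monotone path (monotonicity of the floor of a monotone function) and that the staircase region $R$ is genuinely contained in the $G'$-cells crossed by $\alpha$ (rather than leaking into neighboring cells because of how $G'$ grid-line pieces connect). Once those geometric checks are in hand, everything else is a standard Chernoff-plus-union-bound calculation using the already-noted fact that each cell has expected blue/red count $1$.
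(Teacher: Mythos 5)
Your construction of $\beta$ (round $\alpha$ down to $G'$ grid lines) and the observation that the discrepancy region lives in the $O(\sqrt{\mu})$ cells of $G'$ that $\alpha$ passes through both match the paper's proof. The gap is in the concentration step. You bound the per-cell contribution to $a-b$ by the raw count of blue dots in that cell, and then claim ``by a Chernoff bound, each fixed cell contains at most $O(\log\mu)$ blue dots with probability $1 - 1/\poly(\mu)$.'' But the sequence $S$ is only required to alternate between insertions and deletions; it is \emph{not} required to use distinct keys (that hypothesis appears in Proposition~\ref{prop:surpluslower} but not in Proposition~\ref{prop:surplus}, which this lemma serves). A single key $u$ can be inserted, deleted, reinserted, and so on many times inside the time window of one $G'$-cell, and all of these insertions hash to the same column. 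So the number of blue dots in a cell is a sum of \emph{dependent} contributions of the form $k_u \cdot \mathbf{1}[h(u)\text{ lands in this column}]$, and a Chernoff bound does not apply: a single heavy key can put $\omega(\log\mu)$ blue dots in a cell with probability far larger than $1/\poly(\mu)$.

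The paper's proof sidesteps this in two moves, both of which your argument is missing. First, it does not charge each partially-covered cell by its raw blue count but by its \emph{risk potential} -- the maximum blue-red differential of a monotone sub-path within the cell -- which is exactly the quantity $a-b$ picks up from that cell. Second, it observes that a fixed key contributes at most $+1$ to the blue-red differential of any monotone sub-path (its insertions and deletions alternate and share a column), so the risk potential of a cell is at most the number of \emph{distinct keys} pertaining to the cell. Distinct keys have independent hashes, so the Chernoff bound legitimately gives $O(\log\mu)$ distinct keys per cell with probability $1 - 1/\poly(\mu)$, and everything else goes through. (A minor additional wrinkle in your write-up: the inequality $a-b \le (\text{blue in }\mathcal{C}) - (\text{red in }\mathcal{C})$ is not justified, since the red dots in $\mathcal{C}$ need not all lie in $R$; only the weaker bound $a-b\le \text{blue in }\mathcal{C}$ follows from your setup, and that is what you ultimately use.) If you replace ``blue dots in the cell'' by ``distinct keys pertaining to the cell'' and justify the per-cell bound via the risk-potential observation, your proof becomes correct and is essentially identical to the paper's.
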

\begin{proof}
Define $\beta $ to be the same as $\alpha $, except that the path is rounded to the grid lines of $ G' $ as follows:
for any cell in $ G' $ that the path $\alpha $ goes through the interior of, we round the path so that $\beta $ does 
not cover any of the points from that cell.

Every cell in $ G' $ that is entirely covered by $\alpha $ is also entirely covered by $\beta $; similarly, every cell in $ G' $
that is entirely not covered by $\alpha $ is also entirely not covered by $\beta $.
Thus the only difference between $\alpha $ and $\beta $ is that there are $O(\sqrt{\mu})$ cells in $G'$
that are partially covered by $\alpha $ but that are not covered by $\beta $.

For each cell in $ G' $, define the \defn{risk potential} of that cell to be the maximum blue-red differential of any
monotone path in $ G $ from the bottom left corner of that cell to the top right corner of that cell. 
To complete the proof, it suffices to show that every cell in $ G' $ has risk potential at most $O(\log \mu)$.
Notice, however, that the risk potential of each cell is at most as large as the number of distinct keys that pertain to the cell.
Thus the risk potentials of the cells are all $O(\log \mu)$ with probability $1 - 1 / \poly(\mu)$.
\end{proof}

To complete the proof of Proposition \ref{prop:surplus},
it suffices to show that, with probability $1 - 1/\poly(\mu)$, every monotone path through $ G' $ has
blue-red differential at most $\tilde{O}(\sqrt{\mu})$. 

The rest of the proof is completed in two pieces. The first piece is to show that for every monotone path $\gamma $,
it is possible to decompose the area under the path into rectangles where the sum of the perimeters of the rectangles is 
$O(\mu \log \mu)$. The second piece is to show that, for each rectangle in $ G' $, the blue-red differential of that rectangle 
is at most the perimeter of the rectangle times $O(\log \mu)$. Combining these two facts, we get a bound on the
blue-red differential of any monotone path $\gamma $.

\begin{lemma}
Consider any monotone path $\gamma$ through an $a \times b $ grid, where $\ell = a + b $ is a power of two. The area under the path can be decomposed into disjoint rectangles
such that the sum of the perimeters of the rectangles is $O(\ell \log \ell)$.
\label{lem:perimeters}
\end{lemma}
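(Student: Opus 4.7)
The plan is to execute the recursive ``drop the midpoint rectangle'' construction already sketched around Figure~\ref{fig:path3}, and then sum perimeters level by level. Place coordinates so that the grid is $[0, a] \times [0, b]$ with $\gamma$ going from $(0,0)$ to $(a, b)$. A single step of the recursion picks the point $q = (q_x, q_y)$ on $\gamma$ at path-distance $\ell/2$ from $(0,0)$, so that $q_x + q_y = \ell/2$; drops the rectangle $R_q$ with opposite corners $(q_x, 0)$ and $(a, q_y)$; and then recurses on the portion of $\gamma$ from $(0,0)$ to $q$ inside the subgrid $[0, q_x] \times [0, q_y]$ and on the portion from $q$ to $(a, b)$ inside the subgrid $[q_x, a] \times [q_y, b]$.

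First I would verify a single step. Monotonicity of $\gamma$ forces its continuation from $q$ to $(a, b)$ to stay at height at least $q_y$, so the strip $[q_x, a] \times [0, q_y]$ lies entirely below $\gamma$; this strip is exactly $R_q$. The two recursive subgrids and $R_q$ are pairwise disjoint, and together they partition the region below $\gamma$ in the original grid. Hence, by induction on $\ell$, the recursion outputs a valid disjoint rectangle decomposition of the area under $\gamma$.

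Next I would track perimeters by recursion depth. The left subgrid has $q_x + q_y = \ell / 2$ and the right subgrid has $(a - q_x) + (b - q_y) = \ell / 2$, so at depth $i$ every surviving subproblem sits inside an $a_i \times b_i$ grid with $a_i + b_i = \ell / 2^i$. Because $\ell$ is a power of two, this quantity remains a positive integer throughout, and so the midpoint in each subproblem continues to land on the integer lattice. The rectangle dropped at that depth has width at most $a_i$ and height at most $b_i$, so its perimeter is at most $2(a_i + b_i) = 2\ell / 2^i$.

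Finally I would sum over the binary recursion tree, whose depth is at most $\log_2 \ell$ (it terminates once a subgrid becomes degenerate). At depth $i$ there are at most $2^i$ subproblems, each contributing a rectangle of perimeter at most $2\ell / 2^i$, so the grand total is at most $\sum_{i=0}^{\log_2 \ell} 2^i \cdot (2\ell / 2^i) = O(\ell \log \ell)$, as required. The main obstacle is just bookkeeping: one has to confirm that $q$ really is a lattice point (ensured by the power-of-two hypothesis), handle degenerate steps where $q$ coincides with a corner of the subgrid so that one child is empty, and verify that $R_q$ together with the two recursive subgrids form an exact disjoint tiling below $\gamma$. None of this requires any real ingenuity beyond careful case analysis.
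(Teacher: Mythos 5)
Your proof is correct and follows the same recursive midpoint-rectangle decomposition used in the paper; the extra care you take in verifying that $q$ is a lattice point, that degenerate subproblems are harmless, and that the three pieces tile the region disjointly just makes explicit what the paper leaves implicit.
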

\begin{proof}
We construct the rectangular decomposition through the following recursive process. 
Break the path $ \gamma $ into two pieces of length $ \ell /2 $ which we call $ \gamma_1 $ and $ \gamma_2 $,
and let $ q $ be the point at which the two pieces meet. Let $ R $ be the rectangle whose top left corner is $ q $,
and whose bottom right corner is the bottom right point of the grid. Then we define our rectangular decomposition
 to consist of the rectangle $ R $, along with a recursively constructed rectangular decomposition for the path $ \gamma_1 $, 
and a recursively constructed rectangular decomposition for the path $ \gamma_2 $. The two recursive decompositions take place 
in the grids containing $ \gamma_1$ and $ \gamma_2$, and the base case of the recursion is when we get to  path that
 is either entirely vertical or entirely horizontal (meaning that there is no area to decompose into rectangles).

By design, the $ i $-th level of recursion takes place on a grid with perimeter $O(\ell / 2^i)$. 
It follows that the rectangles added in the $ i $-th level of recursion each have perimeters $O(\ell / 2^i)$.
On the other hand, the $ i $-th level of recursion has at most $2^i$ recursive subproblems,
so the total perimeter of the rectangles added in those subproblems is at most $O(\ell)$. 
There are at most $O(\log \ell)$ levels of recursion, so the sum of the perimeters 
of all of the rectangles in the decomposition is at most $O(\ell \log \ell)$. 
\end{proof}

\begin{lemma}
Consider any rectangle $ X $ in grid $ G' $. If $ X $ has perimeter $ k $, then with probability $ 1 - 1 / \poly(\mu) $, 
the number of blue dots minus the number of red dots
in $ X $ is $ O (k\log \mu). $
\label{lem:rectangle}
\end{lemma}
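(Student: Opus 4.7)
The plan is to establish the bound via a Chernoff concentration argument on $B$ and $R$ (the numbers of blue and red dots in $X$) separately, and then combine them using the triangle inequality. The crucial geometric ingredient is that any rectangle of perimeter $k$ has area at most $k^2/16$ by AM--GM: if $X$ has cell-dimensions $a \times b$, then $2(a+b) = k$ implies $ab \le ((a+b)/2)^2 = k^2/16$.

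First I would set up $E[B] = E[R] = A$, where $A$ denotes the area of $X$ in cells of $G'$; this follows immediately from the parameterization of $G'$ in which each cell expects exactly one blue dot and one red dot. Since insertions contribute blue dots independently across distinct keys (the hash function is fully independent), $B$ is essentially a sum of independent bounded random variables, and the same holds for $R$ over deletions. The mild subtlety of repeated insertions/deletions of the same key can be absorbed using the paper's earlier observation that, with probability $1 - 1/\poly(\mu)$, each cell of $G'$ has at most $O(\log \mu)$ keys pertaining to it, so the per-key contribution to each sum is bounded by $O(\log \mu)$.

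Next, I would apply a Bernstein/Chernoff inequality: for a sum $Y$ of independent contributions of magnitude at most $M = O(\log \mu)$ with mean $\mu_Y$,
\[
\Pr\bigl[|Y - \mu_Y| \ge t \bigr] \;\le\; 2 \exp\!\left(-\,\Omega\!\Bigl(\min\bigl(t^2/\mu_Y,\; t/M\bigr)\Bigr)\right).
\]
Choosing $t = c\,k \log \mu$ for a sufficiently large constant $c$, I split into two regimes. If $A \le ck\log\mu$, the linear-exponent term dominates and the bound becomes $\exp(-\Omega(k \log \mu / \log \mu)) \cdot \text{polylog}$ corrections, which (after absorbing the $M$) is at most $1/\poly(\mu)$ for $c$ large. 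If $A > ck\log\mu$, using $A \le k^2/16$ gives $t^2/A \ge 16 c^2 \log^2 \mu$, so the bound is $\exp(-\Omega(\log^2 \mu))$, which is stronger than any polynomial. Thus $|B - A| = O(k \log \mu)$ with probability $1 - 1/\poly(\mu)$, and the same holds for $|R - A|$. The triangle inequality $|B - R| \le |B - A| + |A - R|$ then yields the lemma.

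The main obstacle is getting the Chernoff case split to work uniformly in $k$: for the narrow-rectangle regime where $A$ is small, the deviation $t = k \log \mu$ can exceed the mean, so we need the linear-tail portion of Bernstein to give an $\mu^{-\Omega(c)}$ bound, which requires controlling $M$ (the per-key contribution) via the $O(\log \mu)$-keys-per-cell fact; for the square-like regime, the key is the geometric inequality $A \le k^2/16$, which is exactly what upgrades a $\sqrt{A \log \mu}$ Gaussian deviation into the desired $k \log \mu$ bound. Everything else is a routine calculation.
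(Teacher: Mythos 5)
There is a genuine gap in the handling of repeated keys, which is precisely the subtlety the paper's ``serious''/``non-serious'' decomposition exists to address. Your plan bounds $|B - A|$ and $|R - A|$ separately and combines them via the triangle inequality, but this loses a crucial cancellation. If a single key $v$ is inserted and deleted many times within the time window of $X$, then $v$ can contribute $\Theta(|S|/\sqrt{\mu})$ blue dots to $B$ (and similarly many red dots to $R$) whenever $h(v)$ happens to land in the hash-range of $X$. In the extreme case where $S$ repeatedly inserts and deletes a single key, $B$ is essentially a scaled Bernoulli taking values $0$ or $\Theta(|S|/\sqrt{\mu})$, so $|B-A|$ has no useful concentration at the $O(k\log\mu)$ scale. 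Yet $B - R$ is well-behaved, because the alternation of insert/delete for a fixed key in a fixed time window forces that key's net contribution to $B - R$ to be at most $1$ in magnitude. One must therefore analyze $B - R$ directly, not $B$ and $R$ individually.

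Relatedly, the step where you invoke ``each cell of $G'$ has at most $O(\log\mu)$ keys pertaining to it'' to conclude a per-term magnitude bound $M = O(\log\mu)$ is a misreading. That fact bounds the number of \emph{distinct keys} hashing into a cell's column; it says nothing about how many times a single key is re-inserted (and hence how many blue dots a single key can contribute). Without key repetition the correct per-term magnitude would be $M = 1$, not $O(\log\mu)$; with key repetition it is unbounded. The paper avoids this entirely by discarding ``non-serious'' insert/delete pairs (which cancel exactly inside $X$) so that what remains is a sum over \emph{distinct} keys, each contributing at most one serious insertion and at most one serious deletion. At that point the Chernoff-on-independent-indicators computation you sketch---with mean $O(k^2)$ via the perimeter-to-area bound, and the case split between $t^2/A$ and $t$---goes through essentially as you describe. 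So the geometric ingredient and the concentration calculation in your proposal are sound, but they need to be applied to the serious-only version of $B - R$, not to $B$ and $R$ separately.
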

\begin{proof}
Let $ T $ be the time window that $ X $ covers on its vertical axis. The total number of insertions that occur in $ T $ is the same as the number of deletions that occur in $ T $,
up to $\pm 1$. Call an insertion in $ T $ \defn{serious} if the key being inserted has not previously been deleted in the same time window, and call a deletion in $ T $ \defn{serious} 
if the key being deleted is not subsequently reinserted in the same time window. Notice that the number of blue dots from non-serious insertions in $ X $ is the same as the number of red dots from
non-serious deletions in $ X $, so we can ignore both. Since the number of non-serious insertions equals the number of non-serious deletions,
the number of serious insertions in $ T $ is the same (up to $\pm 1$) as the number of serious deletions in $ T $.

Since $ X $ has perimeter $ k $, it can contain at most $k^2 $ cells in $ G' $. Thus the expected number of distinct keys
that pertain to the cells in $ X $ is $ O (k^2) $. It follows that the expected number of serious insertions (and similarly, the expected number of serious deletions) 
of keys that pertain to $ X $ is $ O (k^2) $. In order for the number of blue dots minus the number of red dots in $ X $ to exceed $D = \Omega (k\log \mu) $,
we would need that either (a) the number of serious insertions pertaining to $X$ is at least $ D/2 $ greater than its mean; or (b) the number of serious deletions pertaining to
$ X $ is at least $ D/2 $ smaller than its mean. By a Chernoff bound, the probability of either (a) or (b) occurring is that most
$1 / \poly (\mu) $.\footnote{Note that all of the serious insertions (resp. serious deletions) are on distinct keys, meaning that their hashes are independent, hence the Chernoff bound.}
\end{proof}

\begin{proof}[Proof of Proposition \ref{prop:surplus}]
By Lemma \ref{lem:Gprime}, it suffices to show with probability $1 - 1/\poly(\mu)$ that every monotone path $\gamma $ through $ G' $ has blue-red differential $\tilde{O}(\sqrt{\mu})$. 
By Lemma \ref{lem:perimeters}, every such path has a rectangular decomposition where the sum of the perimeters of the rectangles is $\tilde{O}(\sqrt{\mu})$. 
By Lemma \ref{lem:rectangle}, with probability $1 - 1/\poly(\mu)$, the contribution of each of the rectangles to the blue-red differential of $\gamma $
is that most $O(\log \mu)$ times the perimeter of the rectangle. Thus, with probability $1 - 1/\poly(\mu)$, every monotone path $\gamma$ through
 $ G' $ has blue-red differential $O(\sqrt{\mu} \log^2 \mu)$.
\end{proof}

\subsection{Proof of Proposition \ref{prop:surpluslower}}

We wish to construct a monotone path through $G'$ that has blue-red deviation $\Omega(\sqrt{\mu \log \log \mu})$.

Let $a_{i, j}$ be the number of insertions that pertain to cell $(i, j)$ and let $b_{i, j}$
the number of deletions that pertain to cell $(i, j)$ in $G'$. 
Each $ a_{ i, j } $ and each $b_{i, j}$ is a binomial random variable with mean $1$.
However, the random variables are not completely independent (they are slightly negatively correlated), which makes them a bit unwieldy to work with.
To handle this, the following lemma Poissonizes the random variables in order to show that they are $o(1)$-close to being independent.

\begin{lemma}
Let $A$ be the random variable $\langle a_{i, j}, b_{i, j} \mid i,j \in [\sqrt{\mu}] \rangle$
and let $\overline{A}$ be a random variable $\langle \overline{a}_{i, j}, \overline{b}_{i, j} \mid i, j \in [\sqrt{\mu}] \rangle$,
where each $\overline{a}_{i, j}$ and each $\overline{b}_{i, j}$ is an independent Poisson random variable with mean
$ 1 $. Then it is possible to couple the random variables $A$ and $\overline{A}$
such that they are equal with probability $1 - o(1)$. 
\label{lem:coupling}
\end{lemma}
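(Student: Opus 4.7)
The plan is to prove this by standard Poissonization, coupling the per-time-slice multinomial count vectors to independent Poissons one slice at a time. First I would exploit the independence structure of the setup. Under the distinct-key assumption inherited from Proposition~\ref{prop:surpluslower}, every operation in $S$ has an independent uniform hash in $[n]$, and because the temporal position of an operation deterministically fixes whether it is an insertion or a deletion, the insertion count vectors $(a_{i,\cdot})$ across distinct time slices $i$ are mutually independent; the same holds for the $(b_{i,\cdot})$, and insertions are independent of deletions. Consequently it suffices to build a coupling for a single insertion time slice (and then for a single deletion slice by symmetry) and paste them together.

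Next I would set up the single-slice problem. Write $m := |S|/(2\sqrt{\mu})$ for the number of insertions in a fixed time slice $i$ and $p := |P|/(n\sqrt{\mu})$ for the probability that any one of them hashes into a specified cell $(i,j)$. The identity $\mu = |S||P|/(2n)$ yields $mp = 1$, so the vector $(a_{i,j})_{j \in [\sqrt{\mu}]}$ is the multinomial obtained from $m$ trials that land in one of $\sqrt{\mu}$ equally-weighted cells or in an ``overflow'' category. The standard Poissonization fact I would invoke is that if the per-slice total $T_i := \sum_j a_{i,j}$ were replaced by a $\poisson(\sqrt{\mu})$ random variable, the bin counts would become independent $\poisson(1)$ variables, since conditioning on the total produces the same multinomial in either case.

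I would then carry out the coupling as follows. Sample independent $\overline{a}_{i,j} \sim \poisson(1)$, set $\overline{T}_i := \sum_j \overline{a}_{i,j} \sim \poisson(\sqrt{\mu})$, and couple $T_i \sim \operatorname{Bin}(m, |P|/n)$ to $\overline{T}_i$ using any maximal (optimal total-variation) coupling. Whenever $T_i = \overline{T}_i$ we declare $a_{i,j} := \overline{a}_{i,j}$ for all $j$; otherwise we re-sample the $a_{i,j}$'s from their conditional multinomial. The Barbour--Hall form of the Chen--Stein bound gives
\[
d_{\mathrm{TV}}\bigl(\operatorname{Bin}(m, |P|/n),\,\poisson(\sqrt{\mu})\bigr) \;\le\; m\,(|P|/n)^2 \;=\; \frac{\sqrt{\mu}\,|P|}{n},
\]
where the final equality uses $|S||P| = 2n\mu$.

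Finally I would union-bound over the $\sqrt{\mu}$ insertion slices and the $\sqrt{\mu}$ deletion slices, so that the total mismatch probability is at most $2\mu|P|/n$, which is $o(1)$ under the lemma's ``$n$ sufficiently large as a function of $\mu$ and $|P|$'' hypothesis (concretely any $n = \omega(\mu|P|)$ suffices). The only step that warrants real care is the parameter bookkeeping in the third paragraph: verifying $mp = 1$, confirming that $mp^2$ evaluates to $\sqrt{\mu}|P|/n$, and checking that the resulting growth rate for $n$ is genuinely covered by the hypothesis. Everything else is a textbook Poissonization argument combined with the independence provided by the distinct-key assumption.
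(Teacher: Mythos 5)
Your proof is correct, and it takes a genuinely different route from the paper's. The paper Poissonizes the \emph{raw} operation counts: it replaces the number of insertions and of deletions in each time slice $T_i$ with independent $\poisson(|S|/(2\sqrt{\mu}))$ random variables, obtaining a modified sequence $\overline{S}$, and then argues that the $O(\sqrt{n\mu})$ expected operations in the symmetric difference of $S$ and $\overline{S}$ have total probability $O(|P|\sqrt{\mu/n}) = o(1)$ of hashing into $P$ at all, so $A = \overline{A}$ with probability $1 - o(1)$. You instead localize to $P$ first: for each slice you take the count $T_i \sim \operatorname{Bin}(m, |P|/n)$ of operations hitting $P$, couple it to $\poisson(\sqrt{\mu})$ by a maximal total-variation coupling with Barbour--Hall error $m(|P|/n)^2 = \sqrt{\mu}|P|/n$, and then use the standard multinomial-split equivalence so that when the totals agree the entire cell vector agrees. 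Both arguments rely on the distinct-key assumption to get independence across rows and between insertions and deletions, so the per-slice couplings paste together. Your route is arguably cleaner and yields a sharper bound ($2\mu|P|/n$ versus the paper's $O(|P|\sqrt{\mu/n})$), though both are $o(1)$ under the lemma's hypothesis. One small notational slip: at the end you describe the quantity being verified as ``$mp^2$,'' but with your declared $p := |P|/(n\sqrt{\mu})$ that is not the Barbour--Hall bound; the displayed inequality correctly uses $m(|P|/n)^2$, so the intended $p$ there is $|P|/n$, not the per-cell probability.
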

\begin{proof}
Recall that, by assumption in Proposition \ref{prop:crossinglower}, all of the operations in $ S $ are on different keys. We can think of each insertion/deletion as
being performed on a random hash in $[n]$ (rather than being performed on any specific key). As part of our construction of $\overline{A}$,
we will end up modifying $S$ (i.e., adding and removing some operations) to get a new operation sequence $\overline{S}$. When adding new operations
to $S$, we will need not bother associating the new operations with actual keys, and will instead think of the new operations is simply
each being associated with a random hash. 

We now describe our construction of $\overline{A}$. Let $T_1, T_2, \ldots, T_{\sqrt{\mu}}$ be the time windows that correspond to the rows of $ G' $.
 Let $y_1, y_2, \ldots, y_{\sqrt{\mu}}, z_1 , z_2, \ldots, z_{\sqrt{\mu}}$ be independent Poisson random variables
with mean $|S| / (2\sqrt{\mu})$. Define $\overline{S}$ to be $S$, except that the operations in each time window $T_i$
are modified so that the number of insertions is $y_i$ and the number of deletions is $z_i$; note that this may require us to either add or remove
operations to the time window. Then define $\overline{A}$ in exactly the same way as
$A$, except using $\overline{S}$ in place of $ S $. That is, we let $\overline{a}_{i, j}$ be the number of insertions in $\overline{S}$ that pertain to cell $(i, j)$ in $G'$ and we let $\overline{b}_{i, j}$
the number of deletions in $\overline{S}$ that pertain to cell $(i, j)$ in $G'$. 

To understand the distribution of $\overline{A}$, we make use of an important fact about Poisson random variables: 
if $J$ balls are placed at random into $K$ bins, and $J$ is a Poisson random variable with mean $\phi$, then for each bin the number of balls in the bin is an independent Poisson random variable with mean $\phi / K$ (see Chapter 5.4 of \cite{MitzenmacherUp17}).
This implies that the $\overline{a}_{i, j}$'s and $\overline{b}_{i, j}$'s are independent Poisson random variables each of which has mean
$ 1$. 

To complete the proof, we must establish that $\Pr[A \neq \overline{A}] = o(1)$. For each time window $ T_i $, the expected number of operations that are in one of $ S $ or $\overline { S } $ but not the other
is 
$$O(\sqrt{|T_i|}) = O\left(\sqrt{|S| / \sqrt{\mu}}\right) = O\left(\sqrt{n / \sqrt{\mu}}\right) = O(\sqrt{n}).$$
In total over all $\sqrt{\mu}$ time windows $T_i$, the expected number of operations that are in one of $ S $ or $\overline { S } $ but not the other
 is therefore $O(\sqrt{n \mu})$.
The probability that any of these operations are on keys that hash to $ P $ is
$$O\left(\frac{|P|}{n} \cdot \sqrt{n \mu}\right) = O(|P|\sqrt{\mu / n}),$$
which by the assumptions of Proposition \ref{prop:crossinglower} is $o(1)$.
\end{proof}

Throughout the rest of the proof, we will treat the $a_{i, j}$s and $b_{i, j}$s as independent Poisson random variables each of which has mean $ 1$.  Our proof will make use of the fact that for any Poisson random variable $J$ with mean $\phi \ge 1$, 
\begin{align}
\Pr[J > \phi + \sqrt{\phi} t] \ge \exp(-\Omega(t^2)).
\label{eq:p1}
\end{align} 
For a derivation of \eqref{eq:p1}, see Theorem 1.2 of \cite{Pelekis16}.
Set $D = \sqrt{\log \log \mu} / c$, where $c$ is a sufficiently large constant. We will be making use of \eqref{eq:p1} in the case where $t = 2D$, that is, 
\begin{align}
\Pr[J > \phi + 2 \sqrt{\phi} D] & \ge \exp(-\Omega((\sqrt{\log \log \mu} / c)^2)) \nonumber \\
& = \exp(-\Omega(\log \log \mu / c^2)) \nonumber \\
& \ge 1 / \sqrt{\log \mu}.\label{eq:poissondeviate}
\end{align}

We now construct a monotone path $\gamma $ through the grid $ G' $, and show 
that $\gamma $ has blue-red deviation $\Omega(\sqrt{\mu \log \log \mu})$ with probability $1 - o(1)$.
The construction of $\gamma $ is recursive, with different levels of recursion operating on squares grids of different sizes.
To avoid ambiguity, we will use $k$ to refer to the height (or width) of the grid in the current recursive sub-problem, and we will use $\sqrt{\mu}$
to refer to the height (or width) of the grid in the top-level sub-problem.

The construction of $\gamma $ in a $k \times k$ subproblem is performed as follows. 
Break the $k \times k$ grid into four $k /2 \times k / 2$ quadrants. If the top left quadrant
has blue-red deviation at least $k D$ (that is, it contains at least  $k D$ more blue dots than red dots)
then we say that the subproblem \defn{successfully terminates}, and we 
set $\gamma$ to be the path that consists of $k$ vertical steps followed by $k$ horizontal steps.
Otherwise, we construct $\gamma$ by concatenating together a path recursively constructed through the bottom left quadrant and a path
recursively constructed through the top right quadrant. If a recursive subproblem is on a $1 \times 1$ grid, then we return the path
consisting of a vertical step followed by a horizontal step, and we call the subproblem a \defn{failed leaf}.

\begin{lemma}
Each subproblem with $ k >1 $ has probability at least $\Omega(1 / \sqrt{\log \mu})$ of successfully terminating.
\label{lem:term}
\end{lemma}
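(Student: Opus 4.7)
The plan is to use the independence of blue and red dot counts (granted by Lemma~\ref{lem:coupling}) in the top-left quadrant, combined with the Poisson lower-tail bound~\eqref{eq:poissondeviate}, to directly lower-bound the probability that the quadrant's blue--red deviation exceeds $kD$.

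Let $B$ and $R$ denote the total numbers of blue and red dots, respectively, in the top-left $(k/2)\times(k/2)$ quadrant of the current subproblem. By Lemma~\ref{lem:coupling}, I may treat each $a_{i,j}$ and each $b_{i,j}$ as an independent $\mathrm{Pois}(1)$ random variable (the $o(1)$ coupling error is absorbed at the end without changing the asymptotics). Consequently $B$ and $R$ are independent Poisson random variables, each with mean $\phi := k^2/4$. First, by independence,
\[
\Pr[B - R \ge kD] \;\ge\; \Pr[B \ge \phi + kD]\cdot\Pr[R \le \phi].
\]
Next, since $\sqrt{\phi} = k/2$, the threshold decomposes as $\phi + kD = \phi + 2\sqrt{\phi}\,D$, so applying~\eqref{eq:poissondeviate} with $t = 2D$ gives $\Pr[B > \phi + kD] \ge 1/\sqrt{\log \mu}$. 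Finally, since the lemma hypothesis $k > 1$ forces $k \ge 2$ and hence $\phi \ge 1$, a standard fact about Poisson random variables (the median is at most the mean) yields $\Pr[R \le \phi] = \Omega(1)$. Multiplying the two lower bounds gives the desired $\Omega(1/\sqrt{\log \mu})$ success probability.

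The argument itself is only a two-line probability calculation, so the main technical obstacle is really just verifying that everything fits into the valid range of the cited tail bound: the smallest case $k = 2$ gives $\phi = 1$ and a deviation threshold $kD = 2D = O(\sqrt{\log\log\mu})$, which lies inside the regime where~\eqref{eq:poissondeviate} was derived; and the slight negative correlation between the $a_{i,j}$'s and $b_{i,j}$'s is absorbed by the $o(1)$-coupling error of Lemma~\ref{lem:coupling}, which cannot change the $\Omega(1/\sqrt{\log \mu})$ asymptotic since that quantity is itself $\omega(o(1))$ in $\mu$.
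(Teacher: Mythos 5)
Your proof is correct and follows essentially the same route as the paper: treat the blue and red dot counts $B, R$ in the top-left quadrant as independent $\mathrm{Pois}(k^2/4)$, apply the lower-tail bound~\eqref{eq:poissondeviate} to get $\Pr[B > \phi + kD] \ge 1/\sqrt{\log\mu}$, bound $\Pr[R \le \phi] = \Omega(1)$, and multiply. One tiny nit: the Poisson median can actually exceed the mean by up to $1/3$, so "median $\le$ mean" isn't quite the right justification for $\Pr[R \le \phi] = \Omega(1)$, but the conclusion itself is standard and the paper asserts it without proof anyway.
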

\begin{proof}
The number of blue dots and the number of red dots in the top left quadrant of the subproblem are both Poisson random variables
with means $k^2 / 4$. It follows by \eqref{eq:poissondeviate} that the number of blue dots has probability at least
$1 / \sqrt{\log \mu}$ of exceeding its mean by $k D$. Since the number of red dots has probability at least $\Omega(1)$
of being less than or equal to its mean, it follows that the blue-red deviation of the quadrant is at least $kD$
with probability at least $\Omega(1 / \sqrt{\log \mu})$.
\end{proof}

\begin{lemma}
With probability $ 1 - o (1) $, the number of failed leaves is less than $\sqrt{\mu}/2 $.
\label{lem:failedleaves}
\end{lemma}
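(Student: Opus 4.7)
The plan is to bound $\E[N]$, where $N$ denotes the number of failed leaves, by $o(\sqrt{\mu})$ and then apply Markov's inequality. First I would set up the binary recursion tree: each non-terminating $k\times k$ subproblem spawns two children of size $(k/2)\times(k/2)$ (the bottom-left and top-right quadrants), so the tree is a complete binary tree of depth $\tfrac{1}{2}\log \mu$ with exactly $\sqrt{\mu}$ leaves, each corresponding to a distinct $1\times 1$ position of $G'$. A leaf $\ell$ is a failed leaf iff every subproblem along the root-to-$\ell$ path fails to terminate successfully, i.e.\ iff each such subproblem's top-left quadrant has blue-red deviation less than $kD$.

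The crucial structural observation is that the top-left quadrants tested along a single root-to-$\ell$ path are pairwise disjoint subregions of $G'$: once we recurse into the bottom-left or top-right child, no descendant subproblem ever touches cells of the top-left quadrant we just tested. Combined with Lemma~\ref{lem:coupling}, which lets us treat the $\overline{a}_{i,j}$ and $\overline{b}_{i,j}$ across disjoint cells as independent Poissons, this makes the failure events along any fixed root-to-$\ell$ path mutually independent. Invoking Lemma~\ref{lem:term} at each of the $\tfrac{1}{2}\log \mu$ non-trivial levels on the path yields
\[
\Pr[\ell\text{ is a failed leaf}] \le \left(1-\Omega(1/\sqrt{\log \mu})\right)^{\Theta(\log \mu)} \le \exp\!\left(-\Omega(\sqrt{\log \mu})\right) = o(1).
\]
Summing this bound over the $\sqrt{\mu}$ candidate leaf positions gives $\E[N] \le \sqrt{\mu}\cdot o(1) = o(\sqrt{\mu})$, and then Markov's inequality yields $\Pr[N \ge \sqrt{\mu}/2] \le 2\E[N]/\sqrt{\mu} = o(1)$, as desired. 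The $o(1)$ failure probability from Lemma~\ref{lem:coupling}'s coupling is absorbed into this bound.

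The main obstacle is justifying the independence of the failure events along a root-to-leaf path. A priori, conditioning on all ancestor subproblems having failed could appear to bias the distribution of dots in the descendant quadrants. The resolution is purely geometric: each ancestor's failure is an event on its top-left quadrant, and that region is spatially disjoint from the bottom-left and top-right children in which all subsequent activity occurs. Once Poissonization (Lemma~\ref{lem:coupling}) grants independence across disjoint cells of $G'$, the chain of failure events decouples and the calculation above goes through.
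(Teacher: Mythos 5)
Your proof is correct and follows essentially the same route as the paper's: bound the expected number of failed leaves by $o(\sqrt{\mu})$ using Lemma~\ref{lem:term} at each level of the recursion path, then apply Markov. The only addition is that you explicitly justify the independence of the failure events along a root-to-leaf path via spatial disjointness of the tested top-left quadrants together with the Poissonization from Lemma~\ref{lem:coupling}, a point the paper asserts without elaboration; that is a worthwhile clarification but not a different argument.
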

\begin{proof}
We will prove that the expected number of failed leaves is $ o (\sqrt{\mu}) $, after which the lemma follows by Markov's inequality.

There are $\sqrt{\mu}$ potential failed leaves in the recursion tree, so it suffices to show that each of them has $ o (1) $ probability of occurring.
In order for a given failed leaf to occur, the recursion path of depth of $\log \sqrt{\mu}$ that must occur.
By Lemma \ref{lem:term}, each of the subproblems in the recursion path (except for the leaf) independently has at least a $\Omega(1 / \sqrt{\log \mu})$
probability of successfully terminating.
Thus each potential failed leaf in the recursion tree has probability at most
$$\left( 1 - \Omega(1 / \sqrt{\log \mu})\right)^{\log \sqrt{\mu} - 1} \le o(1)$$
of occurring.
\end{proof}

We can now analyze the blue-red deviation of $\gamma $ to prove Proposition \ref{prop:surpluslower}.
\begin{proof}[Proof of Proposition \ref{prop:surpluslower}]
By Lemma \ref{lem:failedleaves}, we may assume that the number of failed leaves is less than $\sqrt{\mu} / 2$.

Say that the \defn{width} of a subproblem is width of the grid in which it takes place. 
The sum of the widths of the leaf subproblems is $\sqrt{\mu}$. Each failed leaf has width 1, so if
the number of failed leaves is less than $\sqrt{\mu}/2 $, then the sum of the widths of the 
leaves that successfully terminate must be at least $\sqrt{\mu} / 2$. 

For each leaf subproblem with width $k$ that successfully terminates, its top left quadrant
contributes $\Omega(k \sqrt{\log \log \mu})$ to the blue-red deviation of $\gamma $.
Summing over the leaf subproblems that successfully terminate, the top left quadrants of all of them
contribute at least $\Omega(\sqrt{\mu \log \log \mu})$ to the blue-red deviation.

Additionally, we must consider the effect of the blue and red dots below $\gamma $ that are
not contained in any of these aforementioned top-left quadrants. 
Once the path $\gamma $ is determined, the number $X$ of such blue dots
and the number $Y$ of such red dots are independent Poisson random variables
satisfying $\E[X] = \E[Y] \le \mu$. By a Chernoff bound, we have that with probability $1 - o(1)$,
$$Y - X \le \sqrt{\mu \log \log \log \mu}.$$
Thus, with probability $ 1 - o (1) $, the blue-red deviation of $\gamma $ is at least
$$\Omega(\sqrt{\mu \log \log \mu}) -  \sqrt{\mu \log \log \log \mu} = \Omega(\sqrt{\mu \log \log \mu}).$$
\end{proof}

\section{Relating Insertion Surplus to Crossing Numbers}\label{sec:cross}

In this section we use our insertion-surplus bounds from Section \ref{sec:surplus} to 
obtain bounds on a different set of quantities $\{c_j\}_{j \in [n]}$ that we call the crossing numbers; later, in 
Section \ref{sec:time}, our bounds on crossing numbers will allow for us to analyze
the amortized costs of insertions/deletions/queries in ordered linear probing.

Consider an ordered linear probing hash table with $ n $ slots, and 
suppose that the hash table is initialized to have load factor $ 1 - 1/x $ or smaller.
Consider a sequence of insertion and deletion operations $ S $ such that the 
load factor never exceeds $1 - 1/x$. (Note that, unlike in Section \ref{sec:surplus}, the lemmas in this section will
not all require that $S$ alternates between insertions and deletions.)

Based on the initial state of the hash table and on the sequence $ S $ of operations,
define the \defn{crossing numbers} $c_1, c_2, \ldots, c_n$ so that $c_i$ is the number of times
that an insertion with a hash smaller than $ i $ either (a) uses a tombstone left
by a key that had hash at least $ i $; or (b) uses a free slot in a position
greater than or equal to $ i $.

The purpose of this section is to prove nearly tight bounds on $\E[c_i]$. 
Subsequent sections will then show how to use these bounds in order to analyze
the performance of linear probing.

We will need the following additional definitions. Define the \defn{insertion surplus} of a subinterval $P \subseteq [n]$ to be the
maximum insertion surplus of any downward-closed subset of $\{u \in S \mid h(u) \in P\}$,
minus the number of free slots that are initially in the range $ P $. 
Define the \defn{peak} $p_u$ of an insertion $u$ to be the hash of the tombstone that
the insertion uses (if it uses a tombstone)
or the position of the free slot that the insertion uses (if it uses a free slot).

The following lemmas characterize the crossing numbers in terms of the insertion surpluses of intervals.

\begin{lemma}
For each $s \in [n]$, there exists an interval $P = [r, s - 1]$ whose insertion surplus is at least $ c_s $.
\label{lem:crossingsurplus}
\end{lemma}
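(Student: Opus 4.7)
Let $u^{*}$ denote the temporally last insertion counted by $c_{s}$ (if $c_{s} = 0$, take $P = \emptyset$), at time $t^{*}$, and write $t^{**}$ for the instant just before $u^{*}$. The plan is to set $r-1$ to be the position of the rightmost free slot of the table at time $t^{**}$ that lies in $[0, s-1]$ (with $r = 1$ if no such slot exists), and to put $P = [r, s-1]$.

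The structural backbone is that within a rebuild window no slot ever transitions from non-free back to free: insertions only fill free slots or tombstones, and deletions only replace elements with tombstones. Consequently position $r - 1$ is free throughout $[0, t^{*}]$, which yields two facts that drive the rest of the argument. (i)~At every time in $[0, t^{*}]$ no element or tombstone with hash less than $r$ can reside at any position $\geq r$, because such an object would have had to cross the perpetually free slot at $r-1$ during its placement or a subsequent shift. (ii)~Every one of the $c_{s}$ crossings must hash into $[r, s-1]$, since an insertion with hash below $r$ has its shift terminate at or before position $r-1 < s$ and therefore cannot cross $s$.

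The quantitative core is a conservation identity. Let $G^{t}$ be the number of elements plus tombstones with hash in $[r, s-1]$ residing at positions $\geq s$ at time $t$. A case analysis of each operation type, together with (i) and (ii), shows that $G^{t}$ increases by exactly one per crossing insertion and is unaffected by every other operation, so $G^{t^{*+}} - G^{0} = c_{s}$. Combined with the saturation identity $E_{1}^{t^{**}} + T_{1}^{t^{**}} = |P|$ (since $[r, s-1]$ is contained in the run that $u^{*}$ traverses at $t^{**}$) and the initial identity $E_{1}^{0} + F = |P|$ (no tombstones at time~$0$, where $F$ is the initial free-slot count in $P$), this gives, for the naive downward-closed choice $S' = \{u \in S_{P} : t(u) \leq t^{*}\}$, the relation $\mathrm{ins}(S') - \mathrm{del}(S') - F = c_{s} - (T_{1}^{t^{*+}} + T_{2}^{t^{*+}})$.

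The main obstacle is the case where tombstones with hash in $P$ remain at $t^{*+}$: the naive $S'$ then falls short of $c_{s}$ by exactly $T_{1}^{t^{*+}} + T_{2}^{t^{*+}}$. To close the deficit I plan to refine $S'$ using a non-constant monotone cutoff $T(h)$: for each column $h \in [r, s-1]$, pick the latest time $\leq t^{*}$ that maximizes the cumulative $\mathrm{ins} - \mathrm{del}$ walk in column $h$, and then enforce non-decreasingness of $T$ by a right-to-left cumulative maximum. The delicate step will be showing that the monotonicity enforcement does not dissolve the column-wise gains below $c_{s} + F$; the argument pairs each surviving tombstone-creating deletion with the time-hash rectangle of operations it forces out of $S'$ and, using the absorbing-tombstone matching of ordered linear probing together with facts (i) and (ii), shows that each such rectangle contributes non-positively to the net $\mathrm{ins} - \mathrm{del}$ of the excluded region, so the resulting surplus is at least $c_{s}$.
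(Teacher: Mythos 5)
Your setup is sound and the preliminary facts you establish are correct: taking $r-1$ to be the rightmost free slot just before the last crossing insertion $u^*$ gives (i) nothing of hash $<r$ ever occupies a position $\ge r$ during the relevant prefix of the rebuild window (since a slot cannot become free again mid-window), hence (ii) every crossing insertion hashes into $[r,s-1]$; and the bookkeeping then yields that the time-prefix set $S' = \{u \in S_P : t(u) \le t^*\}$, which is indeed downward-closed, satisfies $\mathrm{ins}(S')-\mathrm{del}(S')-F = c_s - T$, where $T$ is the number of tombstones with hash in $P$ still present at time $t^{*+}$. (The invariance of your quantity $G^t$ under non-crossing operations is fussier than the write-up suggests---e.g., a non-crossing insertion can consume a tombstone of hash $<s$ that physically resides at a position $\ge s$, and one must check the $\pm 1$'s cancel---but it does hold.)

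The gap is the last step, which is the actual content of the lemma. The time-prefix set can fall short of $c_s$ by exactly $T$, and your plan to recover that deficit---a column-wise running-max cutoff smoothed into a monotone $T(h)$, plus an ``absorbing-tombstone matching''---is explicitly left unproved (``I plan to\ldots'', ``The delicate step will be showing\ldots''). It is not clear it can be carried out as sketched: forcing $T(h(v))$ below the time of a surviving-tombstone deletion $v$ pushes out, by downward-closedness, a whole hash-time rectangle of later lower-hash operations, and the claim that each such excluded rectangle has non-positive net $\mathrm{ins}-\mathrm{del}$ is asserted, not argued. The paper avoids the refinement entirely by building the witness set in one shot: call an insertion \emph{special} if it crosses $s$ or some temporally later special insertion has hash at most its peak, and a deletion \emph{special} if some temporally later special insertion has hash at most its hash; with $r$ the smallest hash of any special operation, a short ``take the last special insertion with small enough hash'' contradiction shows every special deletion's tombstone and every initial free slot of $P$ is consumed by a special insertion, giving the bound directly. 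If you want to salvage your route you essentially need to rediscover this recursively defined set; trimming down from the full prefix is the harder direction, and as written the proof is incomplete at precisely the step that matters.
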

\begin{proof}
Call an insertion $u$ with hash smaller than $s$ \defn{special} if it satisfies the following recursive property: 
either (a) $p_u \ge s$; or (b) there is another special insertion $v$ that occurs temporally after $u$ such 
that $p_u \ge h(v)$. Call a deletion $ v $ \defn{special} if $h(v) < s$ and there exists a special insertion $u$ that occurs temporally after
$ v $ and satisfies $ h (u) \le h (v) $.

Let $r$ be the smallest hash of any special insertion/deletion, and define $ P = [r, s -1] $. 
We will prove that the insertion surplus of $ P $ is at least $ c_s $. Towards this end,
define $ A $ to be the number of special insertions, define $ B $ to be the number of special
deletions, and define $ C $ to be the number of free slots initially in $ P $. The set of special operations
is downward-closed by design, and its insertion surplus is $A - B - C$. Thus we wish to show that
$$A - B - C \ge c_s.$$

In order for an insertion $u$ to contribute to the crossing number $ c_s $, the insertion must have peak $p_u \ge s$ and thus must also be special. To complete the proof, we will show that there are at least $ B + C $ special insertions with peaks $ p_u <s $ (and thus at most $A - B - C$ special insertions have $p_u \ge s$). That is, we will show that every
tombstone created by a special deletion and every free slot initially in $ P $ is used by some special insertion.

Consider a tombstone that is created by a special deletion $v$. Since $ v $ is special, there must exist a special insertion $ u $ that occurs after $ v $
and satisfies $ h (u) \le h (v) $. Let $u$ be the last such insertion. We must have that $ p_u  > h(v) $, since otherwise, we could arrive at a contradiction as follows: in order so that $ u $ could be special,
there would have to be a subsequent special insertion $ z $ with $h(z) \le p_u$; but this would imply that $h(z) \le h(v)$, which would contradict the fact that $ u $ is 
the final special insertion satisfying $ h (u) \le h (v) $. Since $ p_u  > h(v) $, it must be that the tombstone created by the deletion $ v $ has already been used by the time that
insertion $ u $ is performed. The insertion $w$ that used the tombstone must have occurred before the insertion $ u $ and must have had peak
$p_w = h(v) \ge h(u)$. This means that $ w $ is itself a special insertion. Thus the tombstone created by $ v $ is used by a special insertion, as desired. 

Now consider a free slot $j$ that is initially present in $ P $. By the definition of $ P $, there must exist a special insertion $ u $ that satisfies $ h (u) \le j $. 
Let $u$ be the last such insertion. We must have that $ p_u  > j$, since otherwise, we could arrive at a contradiction as follows: in order so that $ u $ could be special,
there would have to be a subsequent special insertion $ z $ with $h(z) \le p_u$; but this would imply that $h(z) \le j$, which would contradict the fact that $ u $ is 
the final special insertion satisfying $ h (u) \le j$. Since $ p_u  > j$, it must be that the free slot $j$ has already been used by the time that
insertion $ u $ is performed. The insertion $w$ that used slot $j$ must have occurred before the insertion $ u $ and must have had peak
$p_w = j \ge h(u)$. This means that $ w $ is itself a special insertion. Thus the free slot is used by a special insertion, as desired. 
\end{proof}

The converse of the previous lemma is also true.
\begin{lemma}
If there exists an interval $ P = [r, s -1] $ with insertion surplus $ q $, then $ c_s\ge q $.
\label{lem:crossingsurpluslower}
\end{lemma}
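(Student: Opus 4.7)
The plan is to mirror the structure of Lemma~\ref{lem:crossingsurplus}, but in the reverse direction: take a downward-closed subset $S' \subseteq \{u \in S \mid h(u) \in P\}$ witnessing the insertion surplus, and show directly that every insertion in $S'$ either has its peak land outside $P$ (and hence contributes to $c_s$) or else can be charged injectively to one of the resources that the insertion surplus already ``pays for'' (a deletion in $S'$ or an initial free slot in $P$). The pigeonhole will then yield at least $q$ contributions to $c_s$.

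Concretely, I would let $S'$ be a downward-closed subset achieving the witnessing surplus, with $A$ insertions and $B$ deletions, and let $C$ be the number of initial free slots in $P$, so that $A - B - C \ge q$. Every insertion $u \in S'$ satisfies $h(u) \in P = [r, s-1]$, and because ordered linear probing never moves an element to the left, its peak satisfies $p_u \ge h(u) \ge r$. The case split is then whether $p_u \ge s$ or $p_u \in [h(u), s-1] \subseteq P$. In the first case, $u$ contributes directly to $c_s$ by the definition of the crossing number, since $h(u) < s$ and either a tombstone of hash $\ge s$ or a free slot of position $\ge s$ is used.

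The heart of the argument is to bound the second case by $B + C$. If $u$ uses a tombstone with $p_u \in P$, then there is a unique earlier deletion $v$ with $h(v) = p_u$ whose tombstone is the one consumed (distinct insertions consume distinct tombstones). Since $h(v) = p_u \in P$ and $h(v) \ge h(u)$, downward-closedness of $S'$ forces $v \in S'$, giving an injection from tombstone-using insertions in this case into the $B$ deletions of $S'$. If instead $u$ uses a free slot at position $p_u \in P$, then because free slots are never created during the workload (deletions only produce tombstones), that slot must have been one of the $C$ initial free slots in $P$, and again distinct insertions consume distinct slots. So at most $B + C$ insertions in $S'$ fall into the second case.

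Combining the two cases, at least $A - B - C \ge q$ insertions in $S'$ fall into the first case and thus each contribute one to $c_s$, giving $c_s \ge q$. The main thing to be careful about is the bookkeeping in the injective charging: one needs to check that the ``tombstone of hash $p_u$ used by $u$'' is really produced by a deletion in $S'$ rather than, say, by an earlier deletion whose tombstone was consumed and then re-created by yet another deletion, but this is handled cleanly by tracking each tombstone to the specific deletion event that produced the version consumed by $u$, and then invoking downward-closedness on that event. No other step requires more than routine verification.
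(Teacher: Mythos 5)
Your proof is correct and takes essentially the same approach as the paper: select the witnessing downward-closed subset $S'$, show that each insertion in $S'$ either contributes to $c_s$ or is charged injectively to a deletion in $S'$ or to an initial free slot of $P$, and conclude by pigeonhole. You spell out more carefully than the paper does why this trichotomy is exhaustive (the downward-closedness argument placing the consumed tombstone's deletion inside $S'$, and the observation that any peak below $s$ must lie in $P$), but the underlying counting argument is identical.
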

\begin{proof}
Let $ S' $ be the downward-closed subset of $\{u \in S \mid h(u) \in P\}$ with the largest insertion surplus.
Every insertion in $ S' $ must either (a) use a tombstone created by a deletion in $ S' $, (b) use a free slot
initially present in $ P $, or (c) have peak at least $ s $. 
It follows that if $ A $ is the number of insertions in $ S' $, $ B $ is the number of deletions in $ S' $, and $ C $ is the number of free slots initially
in $ P $, then we must have that
$$ c_s \ge A - B - C.$$
Since the quantity on the right side is exactly the insertion surplus of $ P $, the proof is complete.
\end{proof}

The previous lemmas tells us that, in order to understand the crossing numbers $ c_s $, it suffices to
understand the insertion surplus of each interval $ P $. This insertion surplus, in turn, depends on two quantities:
the maximum insertion surplus of any downward-closed subset of $\{u \in S \mid h(u) \in P\}$; and the number of free slots
initially in $ P $. We have already achieved a good understanding of the first quantity in the previous section.
The next two lemmas analyze the second quantity.

\begin{lemma}
Suppose that the hash table initially has load factor $ 1 -\epsilon $. Consider any interval $P \subseteq[n]$ of size at least
$c \, \epsilon^{-2}\log \epsilon^{-1}$, where $ c $ is taken to be a sufficiently large constant. With probability $ 1 -1 / \poly |P|$,
the interval $ P $ initially contains at least $\Omega(\epsilon |P|)$ free slots. 
\label{lem:manyfreeslots}
\end{lemma}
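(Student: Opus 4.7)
My plan is to count free slots in $P = [a,b]$ by comparing the number of elements residing in $P$ to the number of elements hashing into $P$, and then bounding the slack using a Chernoff bound together with Corollary~\ref{cor:over}. Since the statement refers to the \emph{initial} state of the hash table (i.e., right after a rebuild), I will treat this as the standard setting of $(1-\epsilon)n$ elements and no tombstones; in particular, Corollary~\ref{cor:over} applies with $x = 1/\epsilon$.

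For each position $j$, let $O_j$ denote the overflow at $j$, i.e., the number of elements whose hash is at most $j$ but whose position exceeds $j$. Let $H(P)$ denote the number of elements hashing into $P$, and let $R(P)$ denote the number of elements residing in $P$. A short bookkeeping argument (each element hashing into $P$ either resides in $P$ or overflows past $b$, and each element hashing before $a$ but residing in $P$ is accounted for by $O_{a-1} - O_b$) gives the identity
\[
R(P) \;=\; H(P) + O_{a-1} - O_b.
\]
Since all non-free slots in $P$ hold elements (there are no tombstones), the number of free slots in $P$ is $|P| - R(P) = |P| - H(P) - O_{a-1} + O_b$, and $\E[H(P)] = (1-\epsilon)|P|$, so in expectation this is $\epsilon|P| + \E[O_b - O_{a-1}]$.

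It remains to show the three error terms are all $o(\epsilon|P|)$ with probability $1 - 1/\poly|P|$. For $H(P)$, a Chernoff bound yields $|H(P) - (1-\epsilon)|P|| = O(\sqrt{|P|\log|P|})$ with probability $1 - 1/\poly|P|$. For each of $O_{a-1}$ and $O_b$, I apply Corollary~\ref{cor:over} with $x = 1/\epsilon$ and $k = \Theta(\log |P|)$, obtaining $O_{a-1}, O_b = O(\epsilon^{-1}\log|P|)$ with probability $1 - 1/\poly|P|$. Thus the number of free slots in $P$ is at least
\[
\epsilon|P| - O\!\left(\sqrt{|P|\log|P|}\right) - O\!\left(\epsilon^{-1}\log|P|\right),
\]
and both error terms are $o(\epsilon|P|)$ provided $|P| = \Omega(\epsilon^{-2}\log|P|)$.

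The only subtlety is reconciling the hypothesis $|P| \ge c\,\epsilon^{-2}\log\epsilon^{-1}$ with the condition $|P| \ge \Omega(\epsilon^{-2}\log|P|)$ that the error bounds actually require. The hard part (if it can be called that) is this case split: when $|P| \le \epsilon^{-O(1)}$, we have $\log|P| = O(\log\epsilon^{-1})$, so choosing $c$ large suffices; when $|P|$ is much larger than any polynomial in $\epsilon^{-1}$, the factor $|P|$ dominates $\epsilon^{-2}\log|P|$ outright. Either way, taking $c$ sufficiently large and combining the three tail bounds via a union bound yields at least $\Omega(\epsilon|P|)$ free slots in $P$ with probability $1 - 1/\poly|P|$, completing the proof.
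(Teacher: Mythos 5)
Your proof is correct and follows the same strategy as the paper: bound the number of elements hashing into $P$ via a Chernoff bound, bound the spillover from positions before $P$ via Corollary~\ref{cor:over}, and combine. If anything your accounting is slightly tighter than the paper's: the paper's own derivation bottoms out at a free-slot lower bound of $\frac{1}{4}\sqrt{|P|c\log|P|}$, which is actually \emph{smaller} than $\epsilon|P|$ (by construction $\sqrt{|P|c\log|P|}\le \epsilon|P|$) and hence does not literally dominate $\Omega(\epsilon|P|)$ once $|P|$ exceeds the threshold by more than a constant factor; you instead keep the main term $\epsilon|P|$ explicit, which is the version that yields the stated conclusion for all admissible $|P|$, and you also spell out the $\log|P|$-versus-$\log\epsilon^{-1}$ case split that the paper treats as implicit.
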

\begin{proof}
The expected number of elements that hash into $ P $ initially is $(1 - \epsilon)|P|$, which since $|P| \ge c \, \epsilon^{-2}\log \epsilon^{-1}$, is at most $|P| - \sqrt{|P| c \log |P|}$. 
It follows by a Chernoff bound that, with probability $1 - 1/ \poly(|P|)$, the number of elements that initially hash into $ P $ is
at most  $|P| - \frac{1}{2} \sqrt{|P| c \log |P|}$. On the other hand, by Corollary \ref{cor:over}, and with probability $1 - 1/ \poly(|P|)$,
the number of elements that reside in $ P $ but hash to a position prior to $ P $ is at most $O(\epsilon^{-1} \log P) \le \frac{1}{4} \sqrt{|P| c \log |P|}$.
The total number of elements that reside in $ P $ is therefore at most $|P| - \frac{1}{4} \sqrt{|P| c \log |P|}$, which completes the proof.
\end{proof}

\begin{lemma}
Suppose that the hash table initially has load factor $ 1 -\epsilon $. Consider any interval $P = [a, b] \subseteq[n]$ of size 
$\epsilon^{-2} / c$, where $ c $ is taken to be a sufficiently large constant. With probability $\Omega (1) $,
there are initially no free slots in $ P $.
\label{lem:nofreeslots}
\end{lemma}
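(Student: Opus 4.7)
The plan is to reduce Lemma \ref{lem:nofreeslots} to a random-walk anti-concentration statement and then combine two independent pieces of the walk. Define $Z_j := |\{u : h(u) \le j\}| - j$, and Poissonize the hashing model at $o(1)$ total-variation cost so that the increments $Z_j - Z_{j-1} = B_j - 1$ are i.i.d.\ with mean $-\epsilon$ and variance $\Theta(1)$. The classical random-walk characterization of linear-probing free slots (easily checked by induction on $j$) says that position $j$ is free if and only if $Z_j < \min_{i<j} Z_i$. Consequently $P = [a,b]$ contains no free slot iff $\min_{j\in[a,b]} Z_j \ge \min_{j<a} Z_j$. Setting $\Delta := Z_{a-1} - \min_{j<a} Z_j \ge 0$ and $M' := \min_{j\in[a,b]}(Z_j - Z_{a-1}) \le 0$, the target event becomes $\{M' \ge -\Delta\}$, and the key point is that $\Delta$ is determined by the walk on $[0,a-1]$ while $M'$ is determined by the (independent) walk from $a-1$ to $b$.

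First I will fix a threshold $\tau$ of order $\epsilon^{-1}/\sqrt{c}$ and show $\Pr[|M'| \le \tau] = \Omega(1)$. Each step of $Z$ has mean $-\epsilon$ and variance $\Theta(1)$, so the compensated walk $\tilde{W}_j := (Z_j - Z_{a-1}) + \epsilon(j - a + 1)$ is a mean-zero martingale with $\operatorname{Var}(\tilde{W}_b) = \Theta(|P|) = \Theta(\epsilon^{-2}/c)$. Doob's $L^2$ maximal inequality then gives $\max_{j\in[a,b]}|\tilde{W}_j| = O(\epsilon^{-1}/\sqrt{c})$ with probability at least $3/4$, and since the drift removed over $[a,b]$ is only $\epsilon|P|=\epsilon^{-1}/c$, which is much smaller than $\tau$ for large $c$, this bound transfers to $|M'|$.

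The main obstacle is the matching lower bound $\Pr[\Delta \ge \tau] = \Omega(1)$, which requires anti-concentrating the running maximum of a negatively drifting walk. Time-reversing $Z$ on $[0,a-1]$ produces a walk $Y_k := Z_{a-1} - Z_{a-1-k}$ with i.i.d.\ steps of mean $-\epsilon$ starting at $Y_0 = 0$, and $\Delta = \max_{0 \le k \le a-1} Y_k$. I obtain the bound via an exponential-martingale / gambler's-ruin calculation: pick $\theta = \Theta(\epsilon)$ so that $\E[e^{\theta(B-1)}] = 1$, and apply the optional stopping theorem to the martingale $e^{\theta Y_k}$ stopped at the first hitting time of level $\tau$. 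This yields $\Pr[\Delta \ge y] \ge e^{-\Theta(\epsilon y)}$, and plugging in $y = \tau = K\epsilon^{-1}/\sqrt{c}$ gives $\Pr[\Delta \ge \tau] \ge e^{-\Theta(K/\sqrt{c})} = \Omega(1)$ (in fact tending to $1$ as $c \to \infty$). Equilibration requires $a - 1 \gg \epsilon^{-2}$, which follows from the paper's convention of analyzing intervals in the bulk of the table (with wrap-around invoked if necessary). Combining the two estimates using the independence of $\Delta$ and $M'$,
\[
\Pr[P \text{ initially contains no free slot}] \;\ge\; \Pr[M' \ge -\tau] \cdot \Pr[\Delta \ge \tau] \;=\; \Omega(1),
\]
which proves the lemma.
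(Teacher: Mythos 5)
Your proof is correct in spirit but takes a genuinely different and substantially heavier route than the paper's. The paper dispatches the lemma in three lines by combining two black-box facts it already has in hand: Knuth's computation that the expected run length from any fixed position is $\Theta(\epsilon^{-2})$, and Proposition~\ref{prop:classic}, which gives the exponential tail bound $\Pr[r > k\epsilon^{-2}] \le 2^{-\Omega(k)}$. Together these force $r = \Omega(\epsilon^{-2})$ with probability $\Omega(1)$ (a Paley--Zygmund-style anti-concentration), and a run of length $\ge |P| = \epsilon^{-2}/c$ starting at $a$ is exactly the event in question. You instead re-derive the underlying phenomenon from scratch via the parking-function random walk $Z_j$, factor the ``no free slot'' event into an excursion depth $\Delta$ (past history) and a local minimum $M'$ (the walk across $P$), exploit their independence under Poissonization, and then estimate the two pieces separately with Doob's $L^2$ maximal inequality and an exponential-martingale/gambler's-ruin lower bound. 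This gives a self-contained, mechanistically transparent derivation (it essentially re-proves the $\Theta(\epsilon^{-2})$ run-length fact), at the cost of being much longer and requiring several extra technical steps. The trade-off is clear: the paper leans on existing machinery, you rebuild it.

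Two points in your write-up deserve tightening. First, ``Poissonize the hashing model at $o(1)$ total-variation cost'' is not literally true: the total-variation distance between fixed-$m$ occupancy and its Poissonization is $1-o(1)$ (the total ball count distinguishes them almost surely). What rescues the transfer is that ``no free slot in $P$'' is a monotone increasing event in the ball set, so the standard monotone de-Poissonization inequality $\Pr_{\mathrm{orig}}[E] \ge \Pr_{\mathrm{Pois}}[E] - \Pr[N>m]$ applies; your estimate $\Pr_{\mathrm{Pois}}[E] \ge \tfrac34 \cdot e^{-\Theta(K/\sqrt c)}$ does exceed $\tfrac12 + \Omega(1)$ for suitable constants, so this works, but the justification should be monotonicity rather than TV distance. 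Second, the optional-stopping step needs a word on overshoot and on the finite horizon: optional stopping at the hitting time of $\tau$ gives $1 = \E[e^{\theta Y_T}\mathbf{1}_{T<\infty}]$, and to extract the lower bound on $\Pr[T<\infty]$ you need $\E[e^{\theta(Y_T - \tau)}] = O(1)$ (true here since the overshoot is dominated by a Poisson increment and $\theta = \Theta(\epsilon)$ is small), together with the observation that, conditional on hitting, the hitting time is $O(\tau/\epsilon) = O(\epsilon^{-2}/\sqrt c) \ll a-1$, so truncating at the actual horizon costs only $o(1)$. With these two clarifications your argument is complete.
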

\begin{proof}
Consider the state of the hash table initially, and let $ r $ be the length of the run of non-free slots beginning at position $a$.
Knuth in \cite{Knuth63} established that $\E[r] = \Theta(\epsilon^{-2})$. On the other hand, Proposition \ref{prop:classic} tells us that
$\Pr[r > k\epsilon^{-2}] \le 1 / \poly(k)$ for all $ k $. The only way that these two facts can be consistent is if
$r = \Omega(\epsilon^{-2})$ with probability $\Omega(1)$. Thus the lemma is established. 
\end{proof}

We are now in a position to upper bound the crossing number $ c_j $. 
\begin{proposition}
Suppose that the hash table initially has load factor $ 1 - 1/x $, 
suppose that $|S| = \Omega(n / x)$ and $|S| \le n$, and suppose that $ S $ alternates between insertions and deletions.

There exists a positive constant $d$ such that for any $j \in [n]$ and any 
$$r \ge \sqrt{\frac{|S|}{n} x^{2}} \log^d x,$$
we have $q_j < r$ with probability $1 - 1 / \poly(r)$.
\label{prop:crossingupper}
\end{proposition}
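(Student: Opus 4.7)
The strategy is to combine Lemma \ref{lem:crossingsurplus}, which reduces bounding $c_j$ to bounding the insertion surplus of some interval ending at $j-1$, with Proposition \ref{prop:surplus} (controlling the maximum downward-closed insertion surplus inside any interval) and Lemma \ref{lem:manyfreeslots} (guaranteeing plenty of initial free slots inside any large interval). Write $M_P$ for the maximum insertion surplus of any downward-closed subset of $\{u \in S : h(u) \in P\}$, and $F_P$ for the number of free slots initially in $P$. Lemma \ref{lem:crossingsurplus} gives
\[
c_j \;\le\; \max_{r' \le j-1}\bigl(M_{[r',\, j-1]} - F_{[r',\, j-1]}\bigr),
\]
so it suffices to show that no interval $P = [r', j-1]$ achieves $M_P - F_P \ge r$, with total failure probability $1/\poly(r)$ after a union bound over starting points $r'$.

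I would split the union bound by interval size $|P|$ into three regimes, writing $r_0 := \sqrt{|S|/n}\cdot x$ so that $r \ge r_0 \log^d x$. In the \emph{large} regime $|P| \ge c_1 x^2 \log^{C} x$ (for constants $c_1, C$ chosen below), Lemma \ref{lem:manyfreeslots} gives $F_P = \Omega(|P|/x)$ with probability $1 - 1/\poly(|P|)$, while Proposition \ref{prop:surplus} gives $M_P = \tilde O(\sqrt{|S||P|/n}) \le \tilde O(\sqrt{|P|})$, using $|S| \le n$. For $C$ large enough, the $|P|/x$ term dominates $\sqrt{|P|}\cdot \polylog x$, so $M_P - F_P \le 0 < r$. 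In the \emph{medium} regime $T_{\mathrm{mid}} \le |P| < c_1 x^2 \log^{C} x$, Proposition \ref{prop:surplus} alone gives $M_P \le \tilde O(\sqrt{|S||P|/n}) \le \tilde O(r_0 \log^{C/2} x)$; taking $d$ larger than $C/2$ plus the hidden polylog exponent in Proposition \ref{prop:surplus}'s $\tilde O$ makes this smaller than $r$. In the \emph{small} regime $|P| < T_{\mathrm{mid}}$, where $\mu := |S||P|/(2n)$ is too small for the $1/\poly(\mu)$ failure probability of Proposition \ref{prop:surplus} to survive the union bound, I would instead apply the Chernoff tail $\Pr[\text{insertions in }S_P \ge r] \le (e\mu/r)^r$; choosing $T_{\mathrm{mid}}$ so that $\mu \le r/(2e)$ throughout this regime yields per-interval failure probability at most $2^{-r}$, which survives the union bound over the $O(n)$ possible starting points in the natural parameter range $r = \Omega(\log n)$.

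The main technical obstacle is synchronizing the thresholds $c_1$, $C$, $T_{\mathrm{mid}}$ and the exponent $d$: the constant $d$ must simultaneously absorb the polylog overhead of Proposition \ref{prop:surplus}, cover the $\log^{C}$ factor needed in Lemma \ref{lem:manyfreeslots} to reach the free-slot-dominated regime, and dominate the Chernoff slack from the small regime. A second subtlety is that Proposition \ref{prop:surplus} assumes $S$ alternates between insertions and deletions on distinct keys, matching our hypothesis; meanwhile the downward-closedness of the ``bad'' subsets is exactly what lets a small $M_P$ translate into a small per-interval contribution to $c_j$ via Lemma \ref{lem:crossingsurplus}. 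Once these constants are fixed consistently, summing the per-regime failure probabilities---each of which is $1/\poly(r)$---yields an overall failure probability of $1/\poly(r)$, proving $c_j < r$.
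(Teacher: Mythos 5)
Your proposal is correct in spirit but takes a genuinely different route from the paper's proof, and has one gap you flag yourself that turns out to be avoidable.

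The paper also splits into ``small/medium'' intervals versus ``large'' intervals (it sets a single threshold $K = x^2\log^c x$), and the large-interval argument matches yours: free slots from Lemma~\ref{lem:manyfreeslots} swamp the insertion surplus bound from Proposition~\ref{prop:surplus}, and the per-interval failure probabilities $1/\poly(i)$ sum to $1/\poly(K)$. The key difference is in the small/medium regime. Rather than union-bounding over $O(x^2\polylog x)$ starting points and handling the tiny-$\mu$ regime with a separate Chernoff bound, the paper uses a single structural observation: if $P_i \subseteq P_{R-1} = [j-(R-1), j-1]$, then any downward-closed subset of $S_{P_i}$ is \emph{also} a downward-closed subset of $S_{P_{R-1}}$ (the excluded operations all have strictly smaller hash, so downward-closedness is preserved). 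Thus the event ``$s(P_i) \geq r$ for some $i < R$'' is implied by ``some downward-closed subset of $S_{P_{R-1}}$ has surplus $\geq r$,'' which Proposition~\ref{prop:surplus} handles with a \emph{single} application to the one interval $P_{R-1}$, at failure probability $1/\poly(\mu_{R-1}) = 1/\poly(r)$. No union bound, and no separate Chernoff case. This is cleaner: it collapses your medium and small regimes into one step and avoids the delicate threshold $T_{\mathrm{mid}}$.

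On the gap you flag: you restrict to $r = \Omega(\log n)$ so that the small-regime union bound over ``$O(n)$'' starting points can absorb $2^{-r}$. But the proposition must hold without that assumption, so as written this is a genuine hole. It is, however, fixable even within your framework, and the hypothesis you want to invoke is actually unnecessary: the small regime contains only $|P| < T_{\mathrm{mid}} = \Theta(rn/|S|) = O(rx)$ intervals (using $|S| = \Omega(n/x)$), not $O(n)$, so the union bound costs only a factor $O(rx)$, and $O(rx) \cdot 2^{-r} \leq 1/\poly(r)$ once $r \geq \Omega(\log(rx))$. Since $r \geq \sqrt{|S|/n}\cdot x\log^d x \geq \Omega(\sqrt{x}\log^d x)$, we have $\log(rx) = O(\log r)$, so this requirement is met once $r$ exceeds a universal constant, below which the $1/\poly(r)$ claim is vacuous anyway. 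One further small point: in the small regime you apply a Chernoff tail to ``insertions in $S_P$,'' but the proposition does not assume distinct keys; you should instead Chernoff-bound the number of \emph{distinct keys} hashing into $P$ (each key contributes at most $1$ to the surplus of any downward-closed prefix, since its operations alternate insert/delete), which gives the same tail up to a constant in the mean.
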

\begin{proof}
Define 
$$\mu_i = |S| \cdot i / n$$ to be the expected number of operations in $S$ that hash into the $[j - i, j - 1]$.
Define $$\lambda_i = \sqrt{\mu_i} \polylog \mu_i$$ where the polylogarithmic factor is selected so that Proposition \ref{prop:surplus} 
offers the following guarantee: with probability $1 - 1 / \poly(\mu_i)$, every downward-closed subset of
 $\{u \in S \mid h(u) \in [j - i, j - 1]\}$ has insertion surplus less than $\lambda_i$.
Let $$K = x^{2} \log^c x$$ for some sufficiently large positive constant $ c $.
This results in
\begin{align*}
\lambda_K & = \sqrt{\frac{|S|}{n} x^{2} \log^c x} \polylog \left(\frac{|S|}{n} x^{2} \log^c x\right) \\
          & = \sqrt{\frac{|S|}{n} x^{2}} \polylog x.
\end{align*}
Thus, if we select the constant $d$ in the proposition statement appropriately, then the requirement that $ r \ge \sqrt{\frac{|S|}{n} x^{2}} \log^d x$ implies that $r \ge \lambda_K$, and thus that $r = \lambda_R$ for some $R \ge K$. To prove the proposition, it suffices to establish that for every $ R \ge K$, we have
\begin{equation}
\Pr[c_j \ge \lambda_R] \le 1 / \poly(\lambda_R).
\label{eq:lambda}
\end{equation}
Note that in the parameter regime $R \ge K $, we have that $\poly (\lambda_R) = \poly(\mu_R) = \poly (R) $ (here we are using that $\Omega(n / x) \le |S| \le O(n)$), so we will treat the three as interchangeable throughout the rest of the proof.

Define $P_i = [j - i, j - 1]$ and define $s(P_i)$ to be the insertion surplus of $P_i$. By Lemma \ref{lem:crossingsurplus},
\begin{align*}
\Pr[c_j \ge \lambda_R] & \le \Pr[s(P_i) \ge \lambda_R  \text{ for some } i] \\
                       & \le \Pr[s(P_i) \ge \lambda_R  \text{ for some } i < R] +  \sum_{i \ge R} \Pr[s(P_i) > 0].
\end{align*}

To prove \eqref{eq:lambda}, we begin by bounding $\Pr[s(P_i) \ge \lambda_R  \text{ for some } i < R]$. 
If $s(P_i) \ge \lambda_R$ for some $ i < R $, then there must be a downward-closed subset $ S' $ of
 $\{u \in S \mid h(u) \in [j - (R - 1), j - 1]\}$ such that the insertion surplus of $ S' $ is at least $\lambda_R $. But by Proposition \ref{prop:surplus} 
and by the definition of $\lambda_{ R -1 } $, we know that with probability $1 - 1 / \poly(\mu_{R - 1})$ (and thus also in $\lambda_{R - 1}$), every such $ S' $ has insertion surplus at most
$\lambda_{ R -1 } <\lambda_R $. Thus the probability that $s(P_i) \ge \lambda_R$ for any $ i < R $ is at most $ 1/\poly (\lambda_R)$. 

To complete the proof, it remains to show that
$$\sum_{i \ge R} \Pr[s(P_i) > 0] \le \frac{1}{\poly (\lambda_R)}.$$
We will establish a stronger statement, namely that for every $i \ge K$,
$$\Pr[s(P_i) > 0] \le \frac{1}{\poly(i)}.$$

Since $i > K$, we can apply Lemma \ref{lem:manyfreeslots} to deduce that, with probability $1 - 1 / \poly(i)$, the interval $P_i$ initially contains at least 
$
\Omega(i / x)
$
free slots. We further have that, by Proposition \ref{prop:surplus}, and with probability $ 1-1/\poly (\mu_i) = 1-1/\poly (i)$, 
every downward-closed subset of
 $\{u \in S \mid h(u) \in [j - i, j - 1]\}$ has insertion surplus less than $\lambda_i$. 
It follows that $ s (P_i) $ is that most
$$\max(0, \lambda_i -\Omega(i / x)). $$
In order to establish that $ s (P_i) $ is zero, it suffices to show that
$$\lambda_i = o(i / x).$$
This is simply a matter of calculation:
\begin{align*}
\lambda_i & = \sqrt{\mu_i} \polylog \mu_i  & \text{ (by definition of $\lambda_i$)}\\
          & = \sqrt{\frac{|S| i }{n}} \polylog \left(\frac{|S| i }{n} \right) &\text{(by definition of $\mu_i$)}\\
          & \le  \sqrt{i} \polylog i & \text{(since $|S| \le n$)}\\
          & \le  O\left(\frac{i \polylog i}{\sqrt{K \lg^c (i / K)}}\right) & \text{(since $i > K$)}\\
          &  = O\left(\frac{i \polylog i}{\sqrt{x^2 \log^c x \lg^c (i / K)}}\right) & \text{(since $K = x^2 \lg^c x$)}\\
          &  = O\left(\frac{i \polylog i}{\sqrt{x^2 \log^c K \lg^c (i / K)}}\right) & \text{(since $K = x^2 \lg^c x$)}\\
          &  = O\left(\frac{i \polylog i}{\sqrt{x^2 \log^c i}}\right) & \text{(since $\lg a \lg b \ge \Omega(\lg (ab))$)}\\
          & = \frac{i / x \polylog i}{\sqrt{\log^c i}} &\\
          & = o(i / x) & \text{(since $c$ is a sufficiently large constant).} 
\end{align*}
\end{proof}

\begin{corollary}
Suppose that the hash table initially has load factor $ 1 - 1/x $, 
suppose that $|S| = \Omega(n / x)$ and $|S| \le n$, and suppose that $ S $ alternates between insertions and deletions.

For each $j \in [n]$, 
$$\E[c_j] \le \sqrt{\frac{|S|}{n} x^{2}} \polylog x.$$
\label{cor:crossingupper}
\end{corollary}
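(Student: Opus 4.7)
The plan is to derive the expectation bound directly from the high-probability tail bound in Proposition \ref{prop:crossingupper} via the standard ``integrate the tail'' identity
\[
\E[c_j] = \sum_{r \ge 1} \Pr[c_j \ge r].
\]
Let $L := \sqrt{\frac{|S|}{n} x^{2}} \log^d x$, where $d$ is the constant from Proposition \ref{prop:crossingupper}. Note that $c_j$ is deterministically at most the number of insertions in $S$, hence at most $n$, so the sum above is a finite sum with at most $n$ terms.

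I would split the sum at $r = \lceil L \rceil$:
\[
\E[c_j] \;\le\; L \;+\; \sum_{r = \lceil L \rceil}^{n} \Pr[c_j \ge r].
\]
The first term already matches the target bound $\sqrt{\frac{|S|}{n} x^{2}} \polylog x$, so it suffices to show the tail sum is $O(L)$ (or even $O(1)$). For the tail, Proposition \ref{prop:crossingupper} says $\Pr[c_j \ge r] \le 1/\poly(r)$ for all $r \ge L$, and crucially the polynomial can be chosen to be as large as we like (the proposition gives a polynomial of any desired degree, at the cost of adjusting the constant $d$). Pick the polynomial to be $r^{2}$ (say): then $\sum_{r \ge L} 1/r^{2} = O(1/L) = O(1)$.

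Combining the two parts yields $\E[c_j] \le L + O(1) = O(L)$, which is exactly $\sqrt{\frac{|S|}{n} x^{2}} \polylog x$, as required. The only mild subtlety is making sure that the ``$1/\poly(r)$'' in Proposition \ref{prop:crossingupper} can be strengthened to a decay rate fast enough to make the tail sum converge cheaply; this is immediate from the convention stated in the Notation section that ``with probability $1 - 1/\poly(r)$'' means that for any constant $c_0$ we can achieve exponent $c_0$ (potentially adjusting $d$). No genuinely hard step is involved—this corollary is essentially a bookkeeping consequence of the tail bound already established.
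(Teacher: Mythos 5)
Your proof is correct and is the standard (and surely intended) argument: the paper states this corollary without proof, treating it as an immediate consequence of Proposition \ref{prop:crossingupper}, and integrating the tail bound exactly as you do is how one passes from the high-probability statement to the expectation bound. One small note: you don't actually need to ``adjust the constant $d$'' to strengthen the decay exponent — the paper's $1/\poly(r)$ convention (stated in the Notation section) already means that, for the fixed $d$ of Proposition \ref{prop:crossingupper}, the failure probability can be taken to be $O(1/r^{c_0})$ for any desired constant $c_0$, with the implicit constants depending on $c_0$. Picking $c_0 = 2$ as you do gives a convergent tail and $\E[c_j] \le L + O(1) = O(L)$, which the $\polylog x$ factor in the corollary absorbs.
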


We can also obtain a nearly matching lower bound for $\E [c_j] $.
\begin{proposition}
Suppose that the hash table initially has load factor $ 1 - 1/x $,
suppose that $|S| = \Omega(n / x)$ and $|S| \le n$, and suppose that $ S $ alternates between insertions and deletions.

Further suppose that each operation in $ S $ applies to a different key.
Then for each $ j\in [n] $,
$$\E[c_j] = \Omega\left(\sqrt{\frac{|S|}{n} x^{2}\log\log x^{} }\right) .$$
\label{prop:crossinglower}
\end{proposition}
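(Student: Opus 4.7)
The plan is to mirror the argument for the upper bound (Proposition~\ref{prop:crossingupper}), but to exhibit a single interval witnessing a large crossing number via the lower-bound Proposition~\ref{prop:surpluslower}, combined with Lemma~\ref{lem:crossingsurpluslower}. Specifically, I would fix $P := [j - \lfloor x^2/c\rfloor,\, j-1]$, where $c$ is the constant from Lemma~\ref{lem:nofreeslots}, so that $|P| = \Theta(x^2)$. Set $\mu := |S|\cdot|P|/(2n) = \Theta(|S|x^2/n)$. The hypothesis $|S| \ge \Omega(n/x)$ gives $\mu \ge \Omega(x)$, so $\log\log\mu = \Omega(\log\log x)$, while $|S|\le n$ gives $\mu \le O(x^2)$, so $|P| \ge \sqrt{\mu}$, satisfying the size hypothesis of Proposition~\ref{prop:surpluslower}.

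Next, I would introduce two events. Let $E_1$ be the event that $P$ contains no free slots at the start of $S$; by Lemma~\ref{lem:nofreeslots}, $\Pr[E_1] = \Omega(1)$. Let $E_2$ be the event that some downward-closed subset of $\{u \in S \mid h(u) \in P\}$ has insertion surplus (in the sense of Section~\ref{sec:surplus}) at least $\Omega(\sqrt{\mu\log\log\mu})$; by Proposition~\ref{prop:surpluslower}, applied with the given assumption that the operations of $S$ are on distinct keys, $\Pr[E_2] = 1-o(1)$. Since $E_1$ depends only on the hashes of the elements residing in the table at time zero while $E_2$ depends only on the hashes of the operations in $S$, and (as in the proof of Lemma~\ref{lem:coupling}) the distinct-keys assumption lets us treat the latter hashes as fresh independent values, these two events are independent, so $\Pr[E_1 \cap E_2] = \Omega(1)$.

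Whenever $E_1 \cap E_2$ occurs, the number of free slots initially in $P$ is zero, so the insertion surplus of the interval $P$ in the sense of Section~\ref{sec:cross} coincides with the maximum insertion surplus of any downward-closed subset of $\{u \in S \mid h(u)\in P\}$, which is at least $\Omega(\sqrt{\mu\log\log\mu})$. Applying Lemma~\ref{lem:crossingsurpluslower} then yields $c_j \ge \Omega(\sqrt{\mu\log\log\mu})$. Taking expectations and substituting for $\mu$ gives
\[
\E[c_j] \;\ge\; \Pr[E_1\cap E_2] \cdot \Omega\!\left(\sqrt{\mu\log\log\mu}\right) \;=\; \Omega\!\left(\sqrt{\tfrac{|S|}{n}\,x^{2}\log\log x}\right),
\]
as claimed.

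The main obstacle is calibrating $|P|$ so that \emph{both} (a) Lemma~\ref{lem:nofreeslots} applies (forcing $|P| = \Theta(x^2)$ with a specific constant so that the interval is initially saturated with constant probability) and (b) Proposition~\ref{prop:surpluslower}'s prerequisite $|P| \ge \sqrt{\mu}$ holds; the former caps $|P|$ from above and the latter from below. The assumption $\Omega(n/x) \le |S| \le n$ is precisely what makes this sweet spot feasible and simultaneously ensures $\log\log\mu = \Theta(\log\log x)$, so the final bound is phrased in terms of $\log\log x$ rather than $\log\log\mu$. A secondary subtlety worth checking is the independence of $E_1$ and $E_2$, which hinges on the ``distinct keys'' hypothesis to decouple the initial configuration from the hashes of the operations in $S$.
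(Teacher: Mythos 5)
Your proof follows essentially the same route as the paper's: choose an interval $P$ of size $\Theta(x^2)$ ending at position $j-1$, invoke Lemma~\ref{lem:nofreeslots} to get that $P$ is initially saturated with probability $\Omega(1)$, invoke Proposition~\ref{prop:surpluslower} to get a downward-closed subset of $S_P$ with insertion surplus $\Omega(\sqrt{\mu\log\log\mu})$ with probability $1-o(1)$, and convert to a crossing-number bound via Lemma~\ref{lem:crossingsurpluslower}. Your parameter calibration (checking $|P|\ge\sqrt{\mu}$ and $\log\log\mu = \Theta(\log\log x)$ from $\Omega(n/x) \le |S| \le n$) fills in details the paper leaves implicit; that is a welcome addition.

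One point to fix: your claim that $E_1$ and $E_2$ are independent is not justified and, as stated, is false. Since all operations in $S$ are on distinct keys, a deletion in $S$ cannot delete a key inserted earlier in $S$, and so must delete a key that was already present in the initial table. Consequently the hashes that determine $E_1$ (the initial table configuration near $P$) and those that determine $E_2$ (the hashes of the operations in $S_P$, which include those deleted keys) are \emph{not} disjoint, and the coupling viewpoint of Lemma~\ref{lem:coupling} --- which concerns only the internal distribution of $S$ and not its relationship to the initial table --- does not give you the independence you want. Fortunately, independence is unnecessary here: since $\Pr[E_2] = 1-o(1)$, one has $\Pr[E_1\cap E_2] \ge \Pr[E_1] - \Pr[\overline{E_2}] = \Omega(1) - o(1) = \Omega(1)$ regardless of any correlation between the two events, and this is precisely the (implicit) argument the paper uses when it simply asserts that both events occur simultaneously with probability $\Omega(1)$. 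Replacing your independence claim with this one-line union bound makes your proof complete.
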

\begin{proof}
By Lemma  \ref{lem:crossingsurpluslower}, it suffices to show that there is some interval $ P_i = [j - i, j -1] $ with insertion surplus
$$\Omega\left(\sqrt{\frac{|S|}{n} x^{2} \log\log x^{} }\right).$$
By Lemma \ref{lem:nofreeslots}, there exists a positive constant $ c $ such that, with probability $\Omega (1) $, the interval $ P_{ cx^{2 } } $
in the hash table initially (i.e., at the beginning of the operations $S$) contains no free slots. 
Furthermore, Proposition \ref{prop:surpluslower} tells us that with probability $ 1 - o (1) $,
there exists a downward-closed subset of $\{u \in S \mid h(u) \in P_{cx^2}\}$ with insertion surplus at least
\begin{align*}
& \Omega\left(\sqrt{\frac{|S|}{n} x^{2} \log \log \left(\frac{|S|}{n} x^{2}\right)}\right) \\
& = \Omega\left(\sqrt{\frac{|S|}{n} x^{2} \log\log x^{} }\right).
\end{align*}
With probability $\Omega (1) $, both of the preceding events occur simultaneously. Thus the proposition is proven.
\end{proof}

The previous two propositions both focus on the case in which the workload $ S $ alternates between insertions and deletions.
We conclude this section by considering the case where $ S $ is allowed to perform an arbitrary sequence of insertions and deletions,
subject only to the constraint that the load factor never exceeds $ 1 - 1/x $. 

\begin{proposition}
Suppose that the hash table begins at a load factor of at most $ 1 - 1/x $, and that the sequence of operations $ S $
keeps the load factor at or below $ 1 - 1/x $. Finally, suppose that $|S| \le n / \polylog(x)$. Then for each $ j\in [n] $,
$$\E[ c_j ] = O (x^{}). $$
\label{prop:amortized}
\end{proposition}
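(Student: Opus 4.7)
The plan is to reduce the arbitrary-workload case to the alternating case of Proposition~\ref{prop:crossingupper} via a two-step reorganization: first a reordering that turns $S$ into a structured sequence without decreasing any crossing number, and then a decomposition of the resulting crossing numbers by phase.

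First, I would match each deletion in $S$ with an earlier insertion (for example by FIFO), yielding $m=\min(I,D)$ matched pairs, $k=\max(0,I-D)$ unmatched insertions, and $l=\max(0,D-I)$ unmatched deletions, where $I$ and $D$ count insertions and deletions in $S$. Reorder $S$ into $S^{**}$ consisting of the $k$ unmatched insertions (as a pure-insert prefix), then the $m$ alternating matched pairs, then the $l$ unmatched deletions. The original load-factor invariant forces $m_0+k \le (1-1/x)n$ where $m_0$ is the initial number of elements, so the load in $S^{**}$ stays at or below $1-1/x$ up to $\pm 1$. I would then prove the reordering lemma $c_j(S^{**}) \ge c_j(S)$ for every $j$ via the surplus/grid characterization: by Lemmas~\ref{lem:crossingsurplus} and~\ref{lem:crossingsurpluslower}, $c_j$ equals the maximum insertion surplus over intervals; and because every insertion in $S^{**}$ sits at an earlier time than (or the same time as) it did in $S$, and every deletion at a later time, every monotone path through the blue/red grid of any interval $P$ has weakly larger blue-red differential in $S^{**}$, so the maximum surplus can only go up.

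Each crossing of position $j$ in $S^{**}$ is caused by a unique insertion, so I would split $c_j(S^{**}) = c_j^{(1)} + c_j^{(2)}$, where $c_j^{(1)}$ counts crossings caused by the initial insertions and $c_j^{(2)}$ counts crossings caused by the middle-phase insertions (trailing deletions never cross anything). For $c_j^{(2)}$, I would apply Proposition~\ref{prop:crossingupper} to the middle phase, treating the post-initial-phase state as the ``initial'' state: this state has load $\le 1-1/x$ and no tombstones, so plugging in the middle-phase length $2m \le |S| \le n/\polylog(x)$ gives $\E[c_j^{(2)}] = \sqrt{|S|/n}\cdot x \cdot \polylog(x) = O(x)$, provided the polylogarithmic denominator in the hypothesis on $|S|$ is taken at least twice the polylogarithmic exponent in Proposition~\ref{prop:crossingupper}. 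For $c_j^{(1)}$, I would use the symmetry of uniform hashing to get that $\E[c_j^{(1)}]$ is independent of $j$, hence $\E[c_j^{(1)}] = \tfrac{1}{n}\E\bigl[\sum_u d_u\bigr]$ with the sum over the $k$ initial insertions and $d_u$ the displacement of $u$. The $i$-th initial insertion is performed at load $(m_0+i-1)/n \le 1-1/x$, so by Knuth's classical displacement bound $\E[d_u]=O(1/(1-(m_0+i-1)/n)^2)$; summing over $i\in[k]$ telescopes (using $m_0+k \le (1-1/x)n$) to $\sum_u \E[d_u] = O(nx)$, and hence $\E[c_j^{(1)}]=O(x)$.

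The main obstacle is the reordering lemma in the first step: the grid-picture intuition (blue dots slide down, red dots slide up, so ``below the path'' sets only grow) is clean, but the formal statement requires showing that the specific reshuffle from $S$ into $S^{**}$ decomposes into such directed moves on every interval $P$ simultaneously. A lesser but necessary point is that Proposition~\ref{prop:crossingupper} extends routinely to initial loads at most $1-1/x$ (rather than exactly $1-1/x$), since its proof only uses upper bounds on the initial load; this extension is what lets me invoke it for the middle phase of $S^{**}$.
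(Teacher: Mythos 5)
Your proof follows the same high-level plan as the paper's: reorganize $S$ into a three-phase sequence (pure insertions, then alternating, then pure deletions), argue via the surplus characterization (Lemmas~\ref{lem:crossingsurplus} and~\ref{lem:crossingsurpluslower}) that the reorganization does not decrease any $c_j$, apply Corollary~\ref{cor:crossingupper}/Proposition~\ref{prop:crossingupper} to the alternating phase, and absorb the pure-insertion prefix using Knuth's $O(nx)$ bound averaged by symmetry over $j \in [n]$. The two-phase split and the final accounting are essentially identical to the paper's.

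However, the justification you give for the reordering lemma has a genuine gap. You assert that ``every insertion in $S^{**}$ sits at an earlier time than (or the same time as) it did in $S$, and every deletion at a later time,'' and derive from this that blue dots only slide down and red dots only slide up. That claim is false for your $S^{**}$: whenever an unmatched insertion in $S$ occurs temporally \emph{after} a matched pair, moving it to the pure-insert prefix pushes the matched insertion to a strictly later time. Concretely, take $S = I_1, D_1, I_2$ with $D_1$ matched to $I_1$ and $I_2$ unmatched; then $S^{**} = I_2, I_1, D_1$, and $I_1$ moved from time $1$ to time $2$, i.e., a blue dot slid \emph{up}. So the clean ``blue down, red up'' monotonicity argument you invoke does not apply to your transformation as described; you would need the additional (and nontrivial) fact that permuting two insertions among themselves preserves the maximum blue-red differential of every interval. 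The paper's reorganization is chosen specifically to sidestep this: it only moves the initial-deficit $w$ novel insertions to the front and then repeatedly balances DDI triples into DID, and argues about the effect of those particular moves on insertion surpluses rather than claiming a global time-monotonicity of all blue/red dots.

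A second, smaller issue: by ordering each matched pair as (insert, delete), your middle phase momentarily pushes the load to $m_0 + k + 1$, which can exceed $(1-1/x)n$ by one when the final load of $S$ is exactly at the cap. Proposition~\ref{prop:crossingupper} assumes the load never exceeds $1-1/x$, so this needs either a $\pm 1$-tolerant restatement or the fix the paper uses, which is to have the alternating phase start with a deletion (so it never exceeds the load reached at the end of the prefix). Your FIFO matching is also underspecified when a deletion acts on a key that was present initially (so has no earlier insertion to match with); the paper avoids this by matching implicitly through the triple-balancing step rather than a key-agnostic pairing.
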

\begin{proof}
We begin by constructing an alternative sequence of insertions/deletions $\overline { S } $ such that
the crossing numbers with respect to $\overline { S } $ are guaranteed to be at least as large as the crossing numbers with respect to $ S $.
We will then complete the proof by analyzing the crossing numbers of $\overline { S } $.

Suppose that the hash table initially contains $ (1 - 1/x) n - w $ elements for some $ w $.
We construct $\overline { S } $ through two steps: 
\begin{itemize}
\item Call an insertion \defn{novel} if the key being inserted has not been inserted in the past and was not originally present in the hash table.
The first step is to take each of the first $ w $ novel insertions, and to move them to the front of the operation sequence.\footnote{We can assume without loss of generality that there are at least $ w $ such novel insertions, 
since if there are not, we can artificially add additional insertions to the end of $ S $ and then perform the rest of the proof without modification.}
\item Call a triple of three consecutive operations \defn{unbalanced} if the triple consists of two deletions followed by an insertion.
At least one of the two deletions in such a triple must act on a different key than the insertion acts on. We can \defn{balance} 
the triple by changing the order of operations so that the aforementioned deletion occurs last. The second step in constructing $\overline { S } $
is to repeatedly find and balance any unbalanced triples until there are no such triples left.
\end{itemize}

Observe that the sequence $\overline { S } $ is a valid sequence of operations, since the transformation from $ S $ to $\overline { S } $
never swaps the order of any two operations that act on the same key.

We claim that the crossing numbers with respect to $\overline { S } $ are at least as large as the crossing numbers with respect to $ S $.
The transformation from $ S $ to $\overline { S } $ moves certain insertions to occur earlier than they would have otherwise, and certain deletions to occur
later than they would have otherwise. Importantly, these types of moves cannot decrease the insertion surplus of any interval $ P $ in the hash table.
By Lemmas~\ref{lem:crossingsurplus} and \ref{lem:crossingsurpluslower}, the crossing numbers are completely determined by the insertion surpluses of the intervals $ P \subseteq [n]$.
Since the insertion surpluses for $\overline { S } $ are at least as large as those for $ S $, it follows that the crossing numbers for $\overline { S } $ are also
at least as large as those for $ S $.

Our next claim is that $\overline { S } $ never causes the load factor to exceed $ 1 - 1/x  $.
This can be seen by analyzing each of the two steps of the construction separately. The first step 
modifies only the window of time in which the first $ w $ novel insertions are performed; no rearrangement of the
operations in this window of time can possibly cause the load factor to exceed $ 1 - 1/x  $.
The second step repeatedly performs balancing operations on unbalanced triples; such a balancing operation 
does not change the maximum load factor that is achieved during the triple, however, since that load factor is achieved
prior to the first operation of the triple. Combining the analyses of the two steps, we see that the load factor never exceeds $ 1 - 1/x  $.

Now let us reason about the structure of $\overline { S } $. By design, $\overline { S } $ begins with $ w $ insertions, bringing the load factor up to exactly $ 1 - 1/x $.
Since the load factor never exceeds $ 1 - 1/x $, and since there are never two deletions in a row followed by an insertion, it must be that the remaining insertions in $\overline { S } $
are each preceded by exactly one deletion. In other words, $\overline { S } $ must consist of three parts $\overline { S }_1, \overline { S }_2, \overline { S }_3$
where $\overline { S }_1$ consists only of insertions, $\overline { S }_2 $ alternates between deletions and insertions, and $\overline { S }_3 $ consists only of deletions.

Since $\overline { S }_3 $ consists only of deletions, it does not contribute anything to the crossing numbers.
By Corollary \ref{cor:crossingupper}, the expected contribution of the operations in $\overline { S }_2 $ to each crossing number $c_j$ is at most $O (x^{}) $.

It remains to bound the contribution of $\overline{S}_1$ to the crossing numbers. If an insertion takes time $t $,
then it can contribute at most $t$ to the sum $\sum_j c_j$. 
Knuth showed in \cite{Knuth63} that the total time needed to fill an empty table up to a load factor of $1 - 1/x $ 
is $O(n x) $ in expectation. Thus the expected contribution of $\overline{S}_1$ to $\sum_j c_j$
is $O(n x) $, completing the proof. 
\end{proof}

\section{Relating Crossing Numbers to Running Times}\label{sec:time}

In this section, we give nearly tight bounds on the performance of ordered linear probing. Notably, we find that,
if the size $R$ of each rebuild window is chosen correctly, then 
the amortized time per insertion is guaranteed to be $\tilde{O}(x)$. The key technical component
to the section will be a series of arguments transforming our bounds on crossing numbers (in Section \ref{sec:cross})
into bounds on running times.

Consider an ordered linear probing hash table that uses tombstones for deletions. 
Recall that there are three parameters: the number $ n $ of slots in the table, the number $ R $ of insertions in each time window
between rebuilds, and the maximum load factor $ 1 - 1/x $ that the hash table ever achieves. Based on these parameters, we 
wish to analyze the average running time of the operations being performed on the hash table.

We will be focusing exclusively on the regime in which $ R = \Omega (n / x) $ and $R \le n$. 
Since each rebuild can be implemented in linear time $ O (n) $, the average time spent performing rebuilds per operation is
$ O (n/R) = O (x^{})$ (which for our purposes will be negligible). 
Thus the focus of our analysis will be on analyzing the costs of the operations that occur between consecutive rebuilds.

Before diving into the details, we remark that there are two main technical challenges
that our analysis must overcome. The first challenge is obvious: we must quantify the degree to which
tombstones left behind by deletions improve the performance of subsequent insertions.
The second challenge is a bit more subtle: in order to support large rebuild-window sizes $ R $,
our analysis must be robust to the fact that tombstones can accumulate over time, increasing the effective
load factor of the hash table. 
This latter challenge is exacerbated by the fact that the choice of which tombstones are in the table
at any given moment is a function not only of the sequence of operations being performed, but also of the
randomness in the hash table. This means that, even if the cumulative load factor from the elements and tombstones can be bounded (e.g., by $ 1 -1 / (2x) $), we still cannot apply the classic analysis at that load factor
in order to bound the expected time of queries.

One of the interesting features of our analysis is that we completely circumvent the issue of how fast
tombstones accumulate over time. Rather than focusing on what the effective load factor of the hash table is at each moment in time,
the analysis instead analyzes the state of the hash table at the beginning of the rebuild window, and then analyzes the dynamics of how
the local structure of the hash table changes over time.

In the following lemmas, we will focus on a single window $ W $ of time between two rebuilds.
We begin by defining three quantities that we will be able to express the running times of operations in terms of.

For a given position $ i\in [n] $, define the \defn{positional offset $o_i$} 
to be the quantity $j - i$ where $j \ge i$ is the largest position such that,
at the end of the time window $ W $, all of the positions $[i, j -1]$
contain elements and tombstones whose hashes are smaller than $ i $. Note that,
although the positional offset is defined at the end of the time window $ W $,
if we were to define the same quantity at any other point during the time window,
it would be at most $ o_i $ (that is, the positional offset only increases over time).

For a given position $ i\in [n] $, define the \defn{spillover $ s_i $} to be the largest $k \ge 0$ 
such that if we consider all keys that are either initially present or inserted at some point during $ W $,
at least $ 4k $ of them have hashes in the range $ [i, i + k) $.

For each insertion $ u $, define the  \defn{displacement $ d_u $} of the insertion to be $ p_u - h (u) $, where $ p_u $ is the
peak of the insertion as defined in Section \ref{sec:cross}.

\begin{lemma}
If an insertion $ u $ hashes to a position $ i $, then the insertion takes time at most
$$O(o_i + s_i + d_u + 1).$$
Similarly, if a query/deletion $ u $ hashes to a position $ i $, then the operation takes time at most
$$O(o_i + s_i + 1). $$
\label{lem:runningtimedecomposition}
\end{lemma}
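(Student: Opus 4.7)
The plan is to identify, for each operation, an interval of consecutive positions that the scan (and possible shift) must traverse, and then to decompose the length of that interval into three pieces measured by $o_i$, $s_i$, and $d_u$. In both cases the interval begins at $i$ and ends at the first tombstone or free slot used (for an insertion) or at the first stopping position of the scan (for a query/deletion), and the running time is proportional to its length.

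For the insertion $u$ hashing to $i$, let $q$ denote the position of the tombstone or free slot that $u$ ultimately makes use of, so the running time is $\Theta(q - i + 1)$. By the invariants of ordered linear probing, at the moment the insertion is performed the positions $[i, q-1]$ are all non-empty and their contents are sorted by hash, with every hash at most $p_u = h(u) + d_u$. I would split this range by the threshold $i$ into a prefix $[i, i + k - 1]$ of positions whose contents have hash $< i$, and a suffix $[i + k, q - 1]$ of positions whose contents have hash in $[i, p_u]$. Bounding $k$ by $o_i$ and the suffix length by $O(s_i + d_u + 1)$ then yields $q - i \le O(o_i + s_i + d_u + 1)$, which gives the insertion bound.

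The first bound $k \le o_i$ rests on the subclaim that the positional offset at position $i$ is non-decreasing in time throughout $W$. I would establish this through two observations: (i)~within a window, a non-empty slot never becomes empty, because deletions only install tombstones; and (ii)~the right-shift performed by an insertion only ever replaces the content at a position with earlier content from the same run, whose hash is no larger by the ordered-linear-probing invariant. Together these imply that a position whose content has hash $< i$ at any time still satisfies that property at the end of $W$, so the time-of-insertion offset is at most $o_i$. The second bound is a direct application of the spillover definition: each element or tombstone in $[i+k, q-1]$ corresponds to a distinct entry among the keys that are initially present or inserted during $W$, with hash in the length-$(d_u + 1)$ interval $[i, i + d_u + 1)$; taking $k' := \max(s_i, d_u) + 1 > s_i$ in the definition of $s_i$ yields fewer than $4k' = O(s_i + d_u + 1)$ such entries, bounding the suffix.

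The query/deletion case is analogous but simpler, with no displacement term. Let $T_i$ be the position where the scan stops; the positions $[i, T_i - 1]$ must all be non-empty with hashes at most $i$ (otherwise the scan would have halted earlier), so the same prefix/suffix split applies but now the suffix contains only contents of hash exactly $i$. The prefix is bounded by $o_i$ as above, and applying the spillover definition with $k' = s_i + 1$ to the length-$1$ interval $[i, i+1)$ bounds the suffix length by $O(s_i + 1)$, yielding the target bound $O(o_i + s_i + 1)$. The main obstacle I expect is the monotonicity subclaim for $o_i$: its definition requires a contiguous stretch of non-empty slots all of whose contents have hash $< i$, and convincing oneself that this contiguous structure is preserved under arbitrary sequences of shifts and tombstone installations requires careful case analysis; the spillover and displacement pieces of the decomposition are routine once monotonicity is in hand.
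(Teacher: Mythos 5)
Your proof is correct and follows essentially the same decomposition as the paper's: split the scanned interval at the point where hashes cross from $<i$ to $\ge i$, bound the prefix by $o_i$ using the monotonicity of the positional offset (which the paper asserts as an immediate consequence of the definition but which you justify directly), and bound the suffix via the spillover definition. The only cosmetic difference is that the paper first dispatches the free-slot case of an insertion and then splits on whether $r - o_i = O(t)$, whereas you obtain a uniform bound by taking $k' = \max(s_i, d_u) + 1$ in the spillover definition; both yield the same $O(o_i + s_i + d_u + 1)$ bound.
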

\begin{proof}
Consider an insertion $ u $ that hashes to a position $ i $. If $ u $ uses a free slot in some position $i + r$,
then the time to perform the insertion is $r = d_u + 1$. 
Suppose, on the other hand, that $ u $ uses a tombstone with some hash $ i + t $, and the tombstone is in some position $i + r$.
Then the running time of the insertion is $r$, and we wish to show that 
\begin{equation}
r = O(o_i + s_i + t + 1).
\label{eq:r}
\end{equation}
If $ r - o_i = O(t)$, then \eqref{eq:r} trivially holds and we are done. 
Otherwise, we may assume that $r - o_i \ge 4(t + 1)$.
By the definition of the positional offset $ o_i $, all of the elements/tombstones in positions $[i + o_i, i + r]$ must have 
hashes in the range $[i, i + t]$. Combining this with the fact that $r - o_i \ge 4(t + 1)$, 
it follows that the spillover $s_i$ satisfies $s_i \ge \lfloor (r - o_i) / 4 \rfloor$, hence \eqref{eq:r}.

Next consider a query/deletion $ u $ that hashes to a position $ i $.
The operation takes time at most $ O (o_i + T) $ where $ T $ is the total number of elements
with hash $ i $ that are either present at the beginning of $ W $ or inserted at some point during $ W $.
By the definition of $ s_i $, we have that $ T \le 4s_i + O (1) $.
Thus the operation takes time $O(o_i + s_i + 1). $

\end{proof}

Our next lemma relates $ o_i $, $ s_i $, and $ d_u $ to the crossing numbers $ c_i $ defined in the previous section.
\begin{lemma}
For each $ i\in [n] $, the positional offset $ o_i $ satisfies $o_i \ge c_i$ and
\begin{equation}
\E [o_i] = O(x^{}) +\E [c_i],
\label{eq:oi}
\end{equation}
and the spillover $ s_i $ satisfies
\begin{equation}
\E [s_i] = O(1).
\label{eq:si}
\end{equation}
Finally, if we consider a random insertion $u$ in the time window $ W $, then
\begin{equation}
\E [d_u] = \frac{n}{R} \E [c_i].
\label{eq:dx}
\end{equation}
\label{lem:positionaloffset}
\end{lemma}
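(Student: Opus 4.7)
The plan is to establish the four claims largely independently, with the main technical core being a conservation argument for a potential function that simultaneously controls $o_i$ and links it to $c_i$. Define
\[
N_i(t) := \#\{\text{elements or tombstones residing at positions } \ge i \text{ with hash } < i \text{ at time } t\}.
\]
I claim $N_i$ changes in a very simple way: it increases by exactly $1$ whenever an insertion contributes to $c_i$, and is unchanged by every other operation. To verify this, I would case-split on the insertion position $j$. When $j \ge i$ (with $p_u \ge i$), the insertion places $u$ (hash $< i$) at position $j \ge i$ while the rightward shift merely permutes elements at positions $\ge i$; since the peak slot was either free or a tombstone with hash $\ge i$ (neither contributing to $N_i$ before), the net change is $+1$. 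When $j < i$, the shift carries the element previously at position $i-1$ into position $i$; invoking the linear-probing invariant that any element at position $k$ has hash $\le k$ (elements are placed at positions $\ge$ their hash and only ever shift rightward), this displaced element has hash $\le i-1 < i$, once again giving a net change of $+1$. All remaining operations (deletions, insertions with $h(u) \ge i$, and insertions with $h(u) < i$ and $p_u < i$) either act entirely in positions $<i$ or entirely within the hash-$\ge i$ portion of positions $\ge i$, leaving $N_i$ unchanged. Hence $N_i(\text{end}) = N_i(\text{start}) + c_i$.

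Both sides of the bound on $o_i$ now follow. Since $o_i$ counts a contiguous sub-collection of the slots counted by $N_i$, we have $o_i \le N_i(\text{end})$; and since $N_i(\text{start})$ is exactly the ``overflow'' quantity of Corollary~\ref{cor:over} in a tombstone-free table of load factor $\le 1 - 1/x$, $\E[N_i(\text{start})] = O(x)$, yielding $\E[o_i] \le O(x) + \E[c_i]$. For the lower bound $o_i \ge c_i$, I would argue that $o_i$ is monotonically non-decreasing in time---insertions with $h(u) \ge i$ must by ordering be placed at position $\ge i + o_i$, deletions preserve hashes and positions, and non-crossing insertions with $h(u) < i$ touch only positions $<i$---and that each crossing insertion extends the prefix by at least one, either by placing $u$ within or immediately after the prefix (when $j \ge i$) or by pushing the hash-$<i$ element at position $i-1$ into position $i$ (when $j < i$). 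Thus $o_i(\text{end}) \ge o_i(\text{start}) + c_i \ge c_i$.

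The remaining two claims are straightforward. For $\E[s_i] = O(1)$, let $K_k$ be the number of keys (initially present or inserted during $W$) whose hash lies in $[i, i+k)$; since $|S| \le n$, $\E[K_k] \le 2k$, so Chernoff gives $\Pr[K_k \ge 4k] \le 2^{-\Omega(k)}$, and a union bound over $k' \ge k$ yields $\Pr[s_i \ge k] \le 2^{-\Omega(k)}$, which sums to $\E[s_i] = O(1)$. For $\E[d_u] = \tfrac{n}{R}\E[c_i]$, I would use double counting: an insertion contributes $+1$ to $c_j$ for exactly the $d_u$ indices $j \in (h(u), p_u]$, so $\sum_{u \in W} d_u = \sum_{j \in [n]} c_j$; taking expectations with $u$ uniform over the $R$ insertions in $W$ and $i$ uniform over $[n]$ gives $R\,\E[d_u] = n\,\E[c_i]$. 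The main obstacle will be the case analysis for conservation of $N_i$, particularly the $j < i$ case, which hinges on two linear-probing invariants that are easy to overlook: that a tombstone with hash $\ge i$ always sits at a position $\ge i$ (so case~(a) of $c_i$ really forces the peak past $i$), and that no tombstone can lie in the interior of a shift range (shifts terminate at the first tombstone), before one can cleanly identify the hash of the element that is displaced into position $i$.
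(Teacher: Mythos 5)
Your approach is essentially the same as the paper's, but you fill in details that the paper elides. The paper simply asserts that the positional offset ``increases by one each time that an insertion whose hash is less than $i$ has a peak that is at least $i$,'' invokes Corollary~\ref{cor:over} for the initial value, and moves on; your conservation argument for $N_i(t)$ is exactly the right formalization of that assertion, and the split into a monotonicity argument for the lower bound $o_i \ge c_i$ is consistent with the paper's remark that the positional offset only increases over time. (Indeed, with ordered linear probing $N_i(t)$ equals the ``positional offset at time $t$,'' though you only need the one-sided inequality you state.) Your treatments of the spillover (Chernoff plus geometric union bound over $k' \ge k$, where the paper appeals to Lemma~\ref{lem:saturated}) and of the double-counting identity for $\E[d_u]$ match the paper's in substance.

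One step in your case analysis is stated incorrectly, though the conclusion survives. You claim that insertions with $h(u)<i$ and $p_u<i$ ``act entirely in positions $<i$,'' but this can fail: the insertion may use a tombstone whose \emph{hash} is $<i$ yet whose \emph{position} is $\ge i$, so the rightward shift can run deep into positions $\ge i$, and the slots it touches there have hash $<i$ (by ordering, every element in the shift range has hash at most the tombstone's hash $=p_u<i$). So this case falls into neither branch of your dichotomy. It is still true that $N_i$ is unchanged, but for a different reason: the element pushed from position $i-1$ to position $i$ (if the shift crosses $i$) contributes $+1$, while the consumed tombstone, which had hash $<i$ and position $\ge i$ and was therefore already counted in $N_i$, contributes $-1$. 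You do flag the $j<i$ case as the main obstacle and correctly name the two invariants that make the \emph{crossing} case work (a hash-$\ge i$ tombstone always sits at position $\ge i$, and shifts stop at the first tombstone), but you should add the observation above to handle the non-crossing case when $p_u < i \le$ position of the used tombstone. This is a local fix, not a structural problem.
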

\begin{proof}
Although the positional offset $ o_i $ is only defined at the end of the time window $ W $, let us slightly abuse notation
and think about how the quantity evolves over time (that is, what would happen if we defined the quantity at each point in time in the time window).
By Corollary \ref{cor:over}, the initial positional offset (at the beginning of $W$) has expected value $ O (x^{}) $.
During the time window, the positional offset increases by one each time that an insertion whose hash is less than $ i $
has a peak that is at least $ i $. The number of such insertions is precisely $ c_i $. Since these insertions are the only operations that can change
the positional offset, it follows that  $o_i \ge c_i$ and
$\E [o_i] = O(x^{}) +\E [c_i]$.

Next we consider the spillover $ s_i $. Let $ V $ be the set of all elements that are present at some point during $ W $.
Then $|V| \le 2n$, and the expected number of elements in $V$ that hash to a given position $ j $ is at most $ 2 $.
It follows by Lemma \ref{lem:saturated} that $\Pr[s_i > k] \le \exp(-\Omega(k))$ for all $ k $. This implies
\eqref{eq:si}.

Finally we establish \eqref{eq:dx}. 
Observe that, if an insertion $ u $ has displacement $ d_u $, then the insertion contributes to exactly
$ d_u $ crossing numbers $ c_i $. It follows that
$$\sum_{u } d_u =\sum_{i\in [n] } c_i. $$
If we select a random insertion $ u $ out of the $ R $ insertions that occur in $ W $, then
$$R \cdot \E [d_u] = \sum_{i\in [n] } c_i.$$
Since the $ c_i $'s all of the same expected values, it follows that for a given $ i $,
$$R \cdot \E [d_u] = n  \cdot \E[c_i].$$
This implies \eqref{eq:dx}.
\end{proof}

We are now prepared to prove the main results of the section.
We begin by considering a hovering workload, that is a workload in which queries can be performed at arbitrary times, but insertions and deletions must alternate.
\begin{theorem}
Consider an ordered linear probing hash table that uses tombstones for deletions, and that performs rebuilds every $ R $ insertions.
Suppose that the table is initialized to have capacity $ n $ and load factor $ 1 - 1/x $, where $R = \Omega(n / x)$ and $R \le n$.
 Finally, consider a sequence $S$ of operations that alternates between insertions and deletions (and contains arbitrarily many queries).

Then the expected amortized time $I$ spent per insertion satisfies
\begin{equation}
I \le \tilde{O}\left(x\sqrt{\frac{n}{R}}\right)
\label{eq:f1}
\end{equation}
and, if all insertions/deletions in each rebuild window are on distinct keys, then
\begin{equation}
I \ge \Omega\left(x\sqrt{\frac{n}{R} \log\log x}\right).
\label{eq:f2}
\end{equation}
Moreover, the expected time $Q$ of a given query/deletion satisfies
\begin{equation}
Q \le O(x) + \tilde{O}\left(x\sqrt{\frac{R}{n}}\right)
\label{eq:f3}
\end{equation}
and, if all operations in each rebuild window are on distinct keys, then for any negative query at the end of a rebuild window, we have
\begin{equation}
Q \ge \Omega\left(x +  x \sqrt{\frac{R}{n}\log\log x} \right).
\label{eq:f4}
\end{equation}
\label{thm:hovering}
\end{theorem}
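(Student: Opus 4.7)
The plan is to express the per-operation cost in terms of the three quantities $o_i$, $s_i$, and $d_u$ via Lemma~\ref{lem:runningtimedecomposition}, take expectations using Lemma~\ref{lem:positionaloffset} to reduce everything to a statement about crossing numbers $c_i$, and then plug in Corollary~\ref{cor:crossingupper} for the upper bounds and Proposition~\ref{prop:crossinglower} for the lower bounds. In our setting $|S| = 2R \pm O(1)$ since the window alternates between $R$ insertions and $\approx R$ deletions, and the amortized cost of the single rebuild per window is $O(n/R) = O(x)$ because $R = \Omega(n/x)$, so it is subsumed by the other terms and can be ignored.

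For the upper bounds, Corollary~\ref{cor:crossingupper} with $|S|=\Theta(R)$ yields $\E[c_i] = \tilde{O}(x\sqrt{R/n})$. Substituting into Lemma~\ref{lem:positionaloffset} gives $\E[o_i] = O(x) + \tilde{O}(x\sqrt{R/n})$, $\E[s_i]=O(1)$, and $\E[d_u] = (n/R)\E[c_i] = \tilde{O}(x\sqrt{n/R})$. For a random insertion, Lemma~\ref{lem:runningtimedecomposition} gives
\[
I \;\le\; O(\E[o_i] + \E[s_i] + \E[d_u] + 1) \;=\; O(x) + \tilde{O}(x\sqrt{R/n}) + \tilde{O}(x\sqrt{n/R}),
\]
and since $R \le n$ the $\tilde{O}(x\sqrt{n/R})$ term dominates, giving \eqref{eq:f1}. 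For a query/deletion the displacement term drops out, leaving $Q \le O(\E[o_i] + \E[s_i] + 1) = O(x) + \tilde{O}(x\sqrt{R/n})$, which is \eqref{eq:f3}.

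For the lower bounds, Proposition~\ref{prop:crossinglower} gives $\E[c_i] = \Omega(x\sqrt{(R/n)\log\log x})$ under the distinct-keys assumption. To convert this into a lower bound on insertion time, we observe that the running time of any insertion $u$ is at least $d_u$: if $u$ uses a free slot its running time equals $d_u+1$, and if $u$ uses a tombstone at hash $p_u$, then because ordered linear probing keeps every element at or beyond its hash position, that tombstone resides at some position $\ge p_u$, so the scan distance is still $\ge p_u - h(u) = d_u$. Therefore $I \ge \E[d_u] = (n/R)\E[c_i] = \Omega(x\sqrt{(n/R)\log\log x})$, yielding \eqref{eq:f2}. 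For a negative query at the end of the rebuild window that hashes to position $i$, the scan must traverse every one of the $o_i$ slots in $[i,i+o_i-1]$, since by definition of $o_i$ those slots all hold elements or tombstones with hashes strictly less than $i$, and only upon exiting this block can an ordered-linear-probing query terminate. Thus $Q \ge \E[o_i] \ge \E[c_i] = \Omega(x\sqrt{(R/n)\log\log x})$. The extra additive $\Omega(x)$ term is the classical lower bound for a single query at load factor $1-1/x$: a Chernoff-style dual of Proposition~\ref{prop:classic2} shows that with constant probability there are $\Omega(x)$ elements hashing to positions $\le i$ but residing at position $\ge i$, so the query must scan through them. Combining gives \eqref{eq:f4}.

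The main obstacles are mostly bookkeeping rather than new ideas: being careful that $|S|$ is $\Theta(R)$ so that the parameter $|S|/n$ from the crossing-number bounds becomes $R/n$; verifying that running time is genuinely lower-bounded by displacement (despite the definition of displacement using hashes of tombstones rather than their positions), which relies on the monotonicity property of ordered linear probing; and making sure that the positional offset $o_i$, which is defined at the end of the rebuild window, correctly bounds the scan length of a query performed at exactly that moment, which is why \eqref{eq:f4} is stated for queries at the end of a rebuild window. The nontrivial analytical work has already been done in Sections~\ref{sec:surplus} and~\ref{sec:cross}; here we are just assembling the pieces.
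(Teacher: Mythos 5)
Your proposal is correct and follows essentially the same route as the paper: decompose operation time via Lemma~\ref{lem:runningtimedecomposition}, take expectations with Lemma~\ref{lem:positionaloffset} to reduce to crossing numbers, then plug in Corollary~\ref{cor:crossingupper} for the upper bounds and Proposition~\ref{prop:crossinglower} for the lower bounds, with running time $\ge d_u$ and query time $\ge o_i \ge c_i$ for the lower-bound directions. The only thing you elaborate beyond the paper is the justification that the tombstone used by an insertion resides at a position at least equal to its hash (so running time dominates displacement), and the paper additionally flags a minor independence technicality (that $i=h(u)$ is not independent of $u$'s own hash, but this factors out) that you could mention, though neither changes the substance.
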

\begin{proof}
Consider a random insertion $ u $ with some hash $ i $. By Lemma \ref{lem:runningtimedecomposition}, we have that $ u $ takes time at most
$O(o_i + s_i + d_u + 1).$ By Lemma \ref{lem:positionaloffset}, this has expectation at most
$$O\left(x + \E[c_i] + \frac{n}{R} \E[c_i] + 1\right) = O\left(x + \frac{n}{R} \E[c_i]\right).$$
By Corollary \ref{cor:crossingupper}, this is that most\footnote{There is one technicality that we must be slightly careful about here: the hash $i = h(u)$ is independent
of where every key $v \neq u$ hashes to, but is (trivially) not independent of where key $u$ hashes to. However, since $u$ is only one key, it can easily be factored out of the analysis so that we can treat $i$ as being a random slot (independent of the hash function $h$).}
$$\tilde{O}\left(x + \frac{n}{R} \sqrt{\frac{R}{n} x^{2}}\right) =  \tilde{O}\left(x\sqrt{\frac{n}{R}}\right).$$

On the other hand, the insertion time is necessarily at least $d_u$, which by Lemma \ref{lem:positionaloffset}, has expectation
$$\frac{n}{R - 1} \E[c_j]$$
for each $j \in [n]$.
If we assume that every insertion/deletion in the rebuild window is on a different key, then we can further apply Proposition \ref{prop:crossinglower} to conclude that the insertion time has expected value at least
$$\Omega\left(\frac{n}{R} \sqrt{\frac{R}{n} x^{2} \log \log x}\right) =  \Omega\left(x\sqrt{\frac{n}{R} \log \log x}\right).$$

Now instead consider a query/deletion $u$ that hashes to some position $ i $.  By Lemma \ref{lem:runningtimedecomposition}, we have that $ u $ takes time at most $O(o_i + s_i  + 1).$ By Lemma \ref{lem:positionaloffset}, the expected time that $ u $ takes is therefore at most 
$$O\left(x + \E[c_i] + 1\right).$$
By Corollary \ref{cor:crossingupper}, this is that most
$$\tilde{O}\left(x + \sqrt{\frac{R}{n} x^{2}}\right) =  \tilde{O}\left(x + x\sqrt{\frac{R}{n}}\right).$$

If we assume that the query is a negative query, performed at the end of a rebuild window whose insert/delete operations are all on different keys, then the query time is necessarily at least $o_i$,
which by Lemma \ref{lem:positionaloffset} is at least $c_i$. It follows by Proposition \ref{prop:crossinglower} that the expected query time is at least
$$\Omega\left(x \sqrt{\frac{R}{n}\log\log x} \right).$$
The expected query time is also $\Omega(x)$ trivially, by the standard analysis of linear probing \cite{Knuth63}.

\end{proof}

Theorem \ref{thm:hovering} has several important corollaries. Our first corollary considers the
setting in which rebuilds are performed every $\Theta(n / x)$ insertions.
\begin{corollary}
Suppose $R = \Theta(n / x)$. Then 
$$I \le \tilde{O}(x^{1.5}),$$
and, if all insertions/deletions in each rebuild window are on distinct keys, then 
$$I \ge \Omega(x^{1.5} \sqrt{\log \log x}).$$
Moreover, $Q = \Theta(x)$.
\label{cor:smallrebuild}
\end{corollary}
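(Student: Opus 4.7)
The plan is to derive Corollary~\ref{cor:smallrebuild} as a direct instantiation of Theorem~\ref{thm:hovering} at the particular rebuild-window size $R = \Theta(n/x)$. The only work is to simplify the square-root factors in \eqref{eq:f1}--\eqref{eq:f4} under this choice. Observe that $n/R = \Theta(x)$ and $R/n = \Theta(1/x)$, so $\sqrt{n/R} = \Theta(\sqrt{x})$ and $\sqrt{R/n} = \Theta(1/\sqrt{x})$.

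First, I would plug into \eqref{eq:f1} to get
\[
I \le \tilde{O}\!\left(x \sqrt{n/R}\right) = \tilde{O}\!\left(x \sqrt{x}\right) = \tilde{O}(x^{1.5}),
\]
and, under the distinct-key hypothesis on each rebuild window, plug into \eqref{eq:f2} to get
\[
I \ge \Omega\!\left(x \sqrt{(n/R)\log\log x}\right) = \Omega\!\left(x^{1.5}\sqrt{\log\log x}\right),
\]
which gives both insertion bounds.

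Next, for the query/deletion bound, I would plug into \eqref{eq:f3} to obtain
\[
Q \le O(x) + \tilde{O}\!\left(x\sqrt{R/n}\right) = O(x) + \tilde{O}(\sqrt{x}) = O(x).
\]
For the matching lower bound $Q = \Omega(x)$, the cleanest route is to invoke the classical expected query time of ordered linear probing at load factor $1 - 1/x$ (this is the standard Knuth bound, and is implicit in the analysis used to prove Proposition~\ref{prop:classic2}); alternatively, one could derive it from \eqref{eq:f4}, though only $\Omega(x)$ (not the $\log\log x$ improvement) is needed here. Combining the two directions yields $Q = \Theta(x)$.

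There is essentially no obstacle: the content of the corollary is a bookkeeping consequence of Theorem~\ref{thm:hovering}, and the only thing to check is that the parameter regime $R = \Theta(n/x)$ lies inside the range $\Omega(n/x) \le R \le n$ in which that theorem is stated---which it does, at the lower endpoint.
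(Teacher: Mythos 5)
Your proposal is correct and matches the paper's intent exactly: the paper leaves Corollary~\ref{cor:smallrebuild} as an immediate substitution into Theorem~\ref{thm:hovering} with $R = \Theta(n/x)$, and you carry out precisely that bookkeeping (including observing that \eqref{eq:f4} already yields $Q = \Omega(x)$ in this regime). No gap.
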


Our next corollary considers the setting in which rebuilds are performed every $n / \polylog(x)$ insertions.
In this case, the hash table achieves nearly optimal scaling.
\begin{corollary}
If $R = n / \log^c x$ for a sufficiently large positive constant $ c $, then 
$$I = \tilde{\Theta}(x).$$
Moreover, $Q = \Theta(x)$.
\label{cor:largerebuild}
\end{corollary}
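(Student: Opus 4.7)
The plan is to derive Corollary~\ref{cor:largerebuild} as a direct specialization of Theorem~\ref{thm:hovering}, plugging in $R = n/\log^c x$ and choosing the constant $c$ large enough to dominate any polylogarithmic factor hidden inside the $\tilde{O}(\cdot)$ notation.

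With this choice of $R$, $\sqrt{n/R} = \log^{c/2} x$ and $\sqrt{R/n} = 1/\log^{c/2} x$. Substituting into \eqref{eq:f1} gives $I \le \tilde{O}(x \log^{c/2} x) = \tilde{O}(x)$, since $\tilde{O}$ absorbs any polynomial in $\log x$. The matching lower bound $I \ge \tilde{\Omega}(x)$ is immediate: in a hovering workload at load factor $1-1/x$, the number of free slots plus tombstones is always exactly $n/x$, so the expected displacement of a random insertion (and hence its running time) is at least $\Omega(x)$. Together these give $I = \tilde{\Theta}(x)$.

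For the query bound, substituting into \eqref{eq:f3} gives $Q \le O(x) + \tilde{O}(x/\log^{c/2} x)$. Writing out the polylogarithmic factor hidden inside $\tilde{O}$---which by tracing Corollary~\ref{cor:crossingupper} together with Lemmas~\ref{lem:runningtimedecomposition} and~\ref{lem:positionaloffset} has some explicit constant exponent $d$---the second term equals $O(x \log^{d - c/2} x)$. Picking any $c \ge 2d$ makes this $O(x)$, so $Q \le O(x)$. The matching lower bound $Q \ge \Omega(x)$ is the standard Knuth bound for ordered linear probing at load factor $1 - 1/x$. Finally, the amortized rebuild cost is $O(n/R) = O(\log^c x) = o(x)$ per insertion, which is negligible compared to $\tilde{\Theta}(x)$.

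The only real obstacle is tracking the precise exponent of the polylogarithmic factor inside $\tilde{O}$ in order to commit to a concrete $c$. Since each step in the chain (insertion-surplus bound, crossing-number bound, running-time decomposition) contributes an explicit polylogarithmic factor, this is a routine unfolding rather than a genuine difficulty, and the corollary then follows by the arithmetic substitution above.
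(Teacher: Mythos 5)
Your proposal is correct and takes essentially the same route as the paper: Corollary~\ref{cor:largerebuild} is stated without a separate proof precisely because it is a direct substitution of $R = n/\log^c x$ into Theorem~\ref{thm:hovering}, with $c$ chosen large enough for the explicit $\log^d x$ factor hidden in \eqref{eq:f3}'s $\tilde{O}$ to be dominated, exactly as you describe.

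One small remark on the lower direction $I = \tilde\Omega(x)$: the cleanest source is \eqref{eq:f2} itself, which under the distinct-keys hypothesis already yields $I \ge \Omega\bigl(x \log^{c/2} x \sqrt{\log\log x}\bigr) = \tilde\Omega(x)$. Your alternative argument from ``there are only $n/x$ free slots plus tombstones, so displacement is $\Omega(x)$'' reaches the weaker (but still sufficient) bound; as stated it skips the size-biased-sampling step (the gaps between usable slots need not be uniform, and one must argue $\E[d_u] \ge \Omega(x)$ via a Cauchy--Schwarz/inspection-paradox inequality or, more simply, by observing that an insertion must do at least the work of a query, whose expected cost is $\Omega(x)$ by the classical analysis). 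That gap is cosmetic rather than substantive.
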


Our final corollary considers the question of whether it is possible to select a value of $R$ that allows for both $I$ and $Q$ to be $O(x)$. 
The corollary establishes that no such $ R $ exists.
\begin{corollary}
For every choice of $ R $,  there exists $S$ such that either $I = \omega(x)$ or $Q = \omega(x)$. 
\end{corollary}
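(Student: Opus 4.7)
The plan is to take as workload $S$ a long alternating sequence of insertions and deletions on distinct keys (i.e.\ a hovering workload) with a negative query appended at the end of each rebuild window, and then to invoke the two lower bounds \eqref{eq:f2} and \eqref{eq:f4} from Theorem~\ref{thm:hovering}. The key observation is that these two lower bounds are in tension: making $R$ small forces $I=\omega(x)$, while making $R$ large forces $Q=\omega(x)$, and the crossover happens strictly inside the regime where both bounds exceed $x$ by a $\mathrm{polylog}$--factor.

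Concretely, set the threshold $R^{\star}=n/\sqrt{\log\log x}$. The argument then splits into two cases.

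\textbf{Case 1:} $R\le R^{\star}$. Then $n/R\ge\sqrt{\log\log x}$, so \eqref{eq:f2} yields
\[
I \;\ge\; \Omega\!\left(x\sqrt{\tfrac{n}{R}\,\log\log x}\right)
\;\ge\; \Omega\!\left(x\,(\log\log x)^{3/4}\right)
\;=\;\omega(x).
\]

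\textbf{Case 2:} $R^{\star}< R\le n$. Then $R/n\ge 1/\sqrt{\log\log x}$, so \eqref{eq:f4} applied to the negative query at the end of a rebuild window yields
\[
Q \;\ge\; \Omega\!\left(x\sqrt{\tfrac{R}{n}\,\log\log x}\right)
\;\ge\; \Omega\!\left(x\,(\log\log x)^{1/4}\right)
\;=\;\omega(x).
\]

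Together these two cases cover every $R\in[\Omega(n/x),n]$, which is the regime in which an ordered linear probing table with tombstones and periodic rebuilds is even sensibly defined (outside of it, either rebuilds dominate the cost or tombstones exhaust the $n$ available slots, in which case $I=\Theta(x^{2})=\omega(x)$ by Knuth's classical bound and we are again done). The main ``obstacle'' is really just the bookkeeping of picking the threshold $R^{\star}$ so that the exponents of $\log\log x$ on the two sides of the crossover are both strictly positive; any $R^{\star}=n/f(x)$ with $1\ll f(x)\ll \log\log x$ works, and the symmetric choice $f(x)=\sqrt{\log\log x}$ makes both sides $\omega(x)$ simultaneously. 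Since the workload $S$ used is the same hovering workload in both cases (with a single appended negative query), a single family of workloads witnesses the claim for every $R$, completing the proof.
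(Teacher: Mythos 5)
Your proposal is correct and takes essentially the same approach as the paper: both invoke the two lower bounds \eqref{eq:f2} and \eqref{eq:f4} from Theorem~\ref{thm:hovering} against a hovering workload and split on the size of $n/R$, with your explicit threshold $R^\star = n/\sqrt{\log\log x}$ being a slightly more concrete version of the paper's asymptotic case split ($n/R = \omega(1)$ versus $n/R = o(\log\log x)$).
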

\begin{proof}
If $n / R = \omega(1)$, then \eqref{eq:f2} tells us that there exists a sequence of operations for which $I$ is $\omega(x)$. On the other hand, if $n / R = o(\lg \lg x)$, then \eqref{eq:f4} tells us that there exists a sequence of operations for which $Q$ is $\omega(x)$. 
\end{proof}

Up until now, we have been focusing on a hovering workload. Our final result considers an arbitrary workload of operations,
where the only constraint is that the load factor never exceeds $ 1 - 1/x $.
Notice that if $ R $ is small (i.e., $R = \Theta(n / x)$), then ordered linear probing can potentially perform very poorly in the setting,
since an entire rebuild window can potentially consist of only insertions, none of which are able to make use of tombstones, but all
of which are performed at a load factor of $1 - \Theta(1 / x)$.
Our next theorem establishes, however, that if $ R $ is selected appropriately, then the amortized performance of
ordered linear probing is near the optimal $O(x)$ that one could hope to achieve.

\begin{theorem}
Let $ c $ be a sufficiently large positive constant. 
Consider an ordered linear probing hash table that uses tombstones for deletions, and that performs rebuilds every $ R = n / \log^{c} x$ insertions.
Finally, consider a sequence of operations $ S $ that never brings the load factor above $ 1 - 1/x $.

Then the expected amortized cost of each insertion is $\tilde{O}(x)$ and the expected cost of each query/deletion is $O(x)$.
\label{thm:amortized}
\end{theorem}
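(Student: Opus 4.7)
The plan is to bootstrap from Proposition~\ref{prop:amortized}, which bounds $\E[c_j]$ at $O(x)$ for any load-bounded workload of size $|S|\le n/\polylog(x)$, to a per-operation running-time bound via Lemmas~\ref{lem:runningtimedecomposition} and~\ref{lem:positionaloffset}. I would analyze one rebuild window $W$ at a time. At the start of $W$ the hash table has load factor at most $1-1/x$ (either from initialization or as inherited from the preceding rebuild), and the operations in $W$ preserve this invariant, so assuming the hypothesis of Proposition~\ref{prop:amortized} is satisfied (see the obstacle paragraph below) we get $\E[c_j]=O(x)$ for every slot $j$ in $W$.

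Given this crossing-number bound, Lemma~\ref{lem:positionaloffset} yields $\E[o_i] = O(x) + \E[c_i] = O(x)$ and $\E[s_i]=O(1)$ for every slot $i$, as well as $\E[d_u] = (n/R)\,\E[c_{h(u)}] = \log^{c} x \cdot O(x) = \tilde O(x)$ for a uniformly random insertion $u$ in $W$ (using that the $c_i$'s share a common expectation by symmetry of the hash function). Plugging these into Lemma~\ref{lem:runningtimedecomposition}, any query or deletion hashing to slot $i$ has expected running time $O(\E[o_i] + \E[s_i] + 1) = O(x)$, matching the query/deletion bound of the theorem. Summing the per-insertion bound $O(o_i + s_i + d_u + 1)$ over the $R$ insertions in $W$ and dividing by $R$, the amortized expected cost per insertion is $O(\E[o_j]) + O(\E[s_j]) + O(\E[d_u]) + O(1) = \tilde O(x)$. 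Finally, the $O(n)$-cost of the rebuild itself is charged to the $R=n/\log^{c} x$ insertions in the window, contributing only $O(\log^{c} x) = \tilde O(1)$ per insertion to the amortized cost.

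The main obstacle is verifying the hypothesis $|S|\le n/\polylog(x)$ of Proposition~\ref{prop:amortized} on the sequence of operations inside a single rebuild window. By construction the window contains $R = n/\log^{c} x$ insertions, but an adversarial schedule could interleave far more deletions into the same window while still respecting the load-factor cap. The cleanest fix is to trigger a rebuild once the number of insertion-or-deletion operations since the last rebuild reaches $R$; this only changes constants and forces $|S|\le R$ within every window by definition, so the hypothesis applies directly. Alternatively, one can partition each window's operations into blocks of size $n/\polylog(x)$, apply Proposition~\ref{prop:amortized} to each block (the load-factor invariant is inherited at the start of each block), and combine the resulting crossing-number bounds additively. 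The latter route would weaken the query bound to $\tilde O(x)$ unless Proposition~\ref{prop:amortized} is first refined to accommodate slightly larger windows, so I would favor the first option.
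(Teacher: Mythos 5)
Your approach matches the paper's: the stated proof of Theorem~\ref{thm:amortized} is exactly to repeat the argument of Theorem~\ref{thm:hovering} with Proposition~\ref{prop:amortized} substituted for Corollary~\ref{cor:crossingupper}, and your chain through Lemmas~\ref{lem:runningtimedecomposition} and~\ref{lem:positionaloffset} (giving $\E[o_i]=O(x)$, $\E[s_i]=O(1)$, $\E[d_u]=(n/R)\E[c_i]=\tilde O(x)$, plus the $O(n/R)=O(\log^c x)$ amortized rebuild cost) is the same calculation the paper performs.

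The obstacle you flag about $|S|\le n/\polylog(x)$ is a genuine imprecision in how Proposition~\ref{prop:amortized} is stated, but neither of your proposed fixes is the intended resolution. Option~(a), triggering rebuilds by total operation count, proves a theorem about a different data structure than the one in the statement. Option~(b), chunking each window into sub-blocks, does not go through as written: the proof of Proposition~\ref{prop:amortized} leans on Lemma~\ref{lem:manyfreeslots} and Corollary~\ref{cor:over}, both of which assume the table starts the window with \emph{no tombstones}, and a sub-block that begins mid-window inherits an unknown accumulation of tombstones. (This is precisely the effective-load-factor difficulty the paper flags at the start of Section~\ref{sec:stonger-primary-clustering} and works around by analyzing only from a fresh rebuild.)

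The cleaner resolution is to look at what Proposition~\ref{prop:amortized}'s proof actually consumes. After rearranging $S$ into $\overline{S}_1\overline{S}_2\overline{S}_3$, the trailing all-deletion suffix $\overline{S}_3$ contributes nothing to any crossing number, $\overline{S}_1$'s contribution is amortized by Knuth's $O(nx)$ bound on insertion-only fill-up, and Corollary~\ref{cor:crossingupper} is applied only to the alternating core $\overline{S}_2$, whose length is at most twice the number of \emph{insertions} in $S$. Within a rebuild window that number is exactly $R = n/\log^c x$ by definition, regardless of how many deletions the adversary interleaves, so $|\overline{S}_2|\le 2n/\log^c x$ and Corollary~\ref{cor:crossingupper} yields $\E[c_j] \le \sqrt{(2/\log^c x)\,x^2}\,\polylog x = O(x)$ once $c$ is chosen large enough. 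In other words, the hypothesis of Proposition~\ref{prop:amortized} should be read as a bound on the number of insertions in $S$ rather than on $|S|$ itself; with that reading the argument closes without modifying the data structure.
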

\begin{proof}
This follows from the same proof as Theorem~\ref{thm:hovering}, except using
Proposition \ref{prop:amortized} instead of Corollary~\ref{cor:crossingupper}.
\end{proof}

\begin{remark}
The proofs of Theorems~\ref{thm:hovering} and~\ref{thm:amortized} assume a fully random hash function, but it turns out this assumption is not needed. In particular, one can instead use tabulation hashing, and modify the proofs in the preceding sections as follows: analogues of Lemmas \ref{lem:saturated} and \ref{lem:bolus} follow directly from Theorem 8 of \cite{PatrascuTh12}, and then all of the Chernoff bounds throughout the paper can be re-created using Theorem~1 of \cite{PatrascuTh12}. Note that each application of Theorem~8 and Theorem~1 of \cite{PatrascuTh12} introduces a $1/\poly(n)$ failure probability, but this is easily absorbed into the analysis.
\end{remark}

\section{Graveyard Hashing}\label{sec:graveyard}

In this section, we describe and analyze a new variant of linear probing, which we call \defn{graveyard hashing}.
Graveyard hashing takes advantage of the key insight in this paper, which is that tombstones have the ability to significantly improve
insertion performance. 

\paragraph{Description of graveyard hashing.}
Graveyard hashing uses different rebuild window sizes, depending on the load factor. If a rebuild is performed at a load factor of $1 - 1/x$,
then the next rebuild will be performed $n / (4x)$ operations later.\footnote{Note that graveyard hashing counts both insertions and deletions as part of
 the length of a rebuild window.}

Whenever the hash table is rebuilt, Graveyard hashing first removes all of the tombstones that are currently present. It then spreads
\emph{new tombstones} uniformly across the table. If the current load factor is $ 1-1/x $, then $ n/(2x) $
tombstones are created, with one tombstone assigned to each of the hashes $\{2ix \mid  i \in [n / (2x)]\}$.
The purpose of these tombstones is to help all of the up to $n / (4x)$ insertions that occur between the current rebuild
 and the next rebuild.

The insertion of tombstones during rebuilds is the \emph{only difference} between graveyard hashing and standard ordered linear probing.
Thus insertions, queries, and deletions are implemented exactly as in a traditional ordered linear probing hash table.

If desired, one can implement graveyard hashing so that each rebuild also dynamically resizes the table, ensuring that the load factor is always
$1 - \Theta(1 / x)$ for some fixed parameter $x$. Note that, when resizing the table, the elements of the table will need to be assigned to new hashes, and thus will need to be permuted. In the RAM model, this can easily be done in linear time (and in place) using an in-place radix sort. In the external-memory model, resizing can be implemented in $O(n / B)$ block transfers (where $B$ is the external memory block size) using Larson's 
block-transfer efficient scheme for performing partial expansions/contractions on a hash table \cite{Larson82} (this technique has also been used in past work on external-memory hashing, see \cite{JensenPa08, PaghWeYi14}).

\paragraph{Analysis of graveyard hashing.}
To perform the analysis, we will need one last balls-and-bins lemma:

\begin{lemma}
Suppose that $\mu n$ balls are placed into $n$ bins at random.
Let $x > 1$, $ k \ge 1 $, and $j \in [n]$. With probability $ 1-2 ^ {-\Omega (k) } $, for every interval $I \subseteq [n]$ that contains $j$,
the number of balls in the bins $I$ is at most $ (1 + 1/x) |I|\mu + xk$.
\label{lem:saturated3}
\end{lemma}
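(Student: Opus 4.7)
The plan is to reduce to Lemma \ref{lem:bolus} by splitting any interval containing $j$ into two one-sided pieces anchored at $j$. Specifically, any interval $I \subseteq [n]$ with $j \in I$ can be written as $I = [a, j-1] \cup [j, b]$ where $a \le j \le b$, so the ball count in $I$ is the sum of the ball counts in a leftward extension from $j-1$ and a rightward extension from $j$. Each of these extensions is, up to relabeling the bins, exactly a prefix of the kind Lemma \ref{lem:bolus} controls: by the symmetry of uniform ball placement, we may apply Lemma \ref{lem:bolus} under the ordering $j, j+1, \ldots, j-1$ (cyclic) to control rightward extensions, and under the reverse ordering $j-1, j-2, \ldots, j$ to control leftward extensions.

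Concretely, I would apply Lemma \ref{lem:bolus} twice with parameter $k$ (and with $x$ unchanged), once in each direction. The first application yields that, with probability $1 - 2^{-\Omega(k)}$, for every $b \ge j$ the number of balls in $[j, b]$ is at most $(1 + 1/x)(b - j + 1)\mu + \tfrac{1}{2}xk$. The second application yields that, with probability $1 - 2^{-\Omega(k)}$, for every $a \le j$ the number of balls in $[a, j-1]$ is at most $(1 + 1/x)(j - a)\mu + \tfrac{1}{2}xk$. (Here I am applying Lemma \ref{lem:bolus} with constant $k/2$ and absorbing the factor of $2$ into the hidden $\Omega$, which is harmless.) A union bound lets us assume both events hold simultaneously with probability $1 - 2^{-\Omega(k)}$.

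Under these two simultaneous bounds, summing gives, for every interval $I = [a, b]$ containing $j$,
\[
  \#\{\text{balls in } I\} \le (1 + 1/x)\bigl((b - j + 1) + (j - a)\bigr)\mu + xk = (1 + 1/x)|I|\mu + xk,
\]
which is the claim. The only mildly subtle point is justifying that Lemma \ref{lem:bolus} can be applied under an ordering that does not start at bin $1$, but this is immediate from the fact that, for a uniform random placement of balls into the $n$ bins, the joint distribution of the bin-load sequence is invariant under permutations (in particular under cyclic shifts and reversals). I do not expect any serious obstacle; the argument is structurally identical to the proofs of Propositions \ref{prop:classic} and \ref{prop:classic2}, where a two-sided count around a query position is likewise controlled by splitting at the query.
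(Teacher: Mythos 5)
Your proposal is correct and is essentially the same argument as the paper's: the paper also splits an arbitrary interval containing $j$ at the anchor point $j$ into a leftward and a rightward one-sided piece, applies Lemma~\ref{lem:bolus} (with the excess threshold halved) to each direction separately, and takes a union bound — your formulation simply states this in the contrapositive (``if both halves are good then every interval is good'') rather than the paper's direct form (``if some interval is bad then at least one half has excess $\ge xk/2$''). Your explicit remark about relabeling bins so that Lemma~\ref{lem:bolus}'s prefix formulation applies to intervals anchored at $j$ is a justification the paper leaves implicit, so it is a welcome addition rather than a deviation.
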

\begin{proof}
Suppose there is some interval $ I $ satisfying $i \in I$ such that the number of balls in the interval $ I $
is greater than $(1 + 1/x) |I|\mu  + xk$. 
 If $I = [j_0, j_1] $ for some $j_0, j_1$, then we can break $ I $ into two sub-intervals $I_1 = [j_0, j] $ and $I_2 = (j, j_1].$ Since $(1 + 1/x) |I|\mu  + xk$, at least one of the two subintervals $I_k \in \{I_1, I_2\}$ must contain at least 
$$(1 + 1/x)|I_k|\mu + xk/2$$
balls. However, by Lemma \ref{lem:bolus}, the probability of any such subinterval $I_k$ existing is at most $2^{-\Omega(k)}$. 
\end{proof}
\begin{corollary}
Suppose that $\mu n$ balls are placed into $n$ bins at random.
Let $x > 1$, $ k \ge 1 $, and $j \in [n]$. With probability $ 1-2 ^ {-\Omega (k) } $, for every interval $I \subseteq [n]$ that contains $j$ and has size $ |I| \ge x^{2 } k $,
the number of balls in the bins $I$ is at most $ (1 + 1/x) |I|\mu $.
\label{cor:saturated3}
\end{corollary}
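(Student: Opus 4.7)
The plan is to derive Corollary \ref{cor:saturated3} directly from Lemma \ref{lem:saturated3}, using nothing more than a reparameterization trick: apply the lemma with $x$ replaced by $2x$ so that the resulting multiplicative factor carries some slack, and then absorb the additive error into that slack whenever $|I|$ is sufficiently large.

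More concretely, I would first apply Lemma \ref{lem:saturated3} with parameters $(2x, k)$ in place of $(x, k)$. This gives that with probability $1 - 2^{-\Omega(k)}$, every interval $I \subseteq [n]$ containing $j$ satisfies
\[
\#\{\text{balls in } I\} \;\le\; \left(1 + \tfrac{1}{2x}\right)|I|\mu + 2xk.
\]
The remaining step is purely arithmetic: rewrite the right-hand side as $(1 + 1/x)|I|\mu - (1/(2x))|I|\mu + 2xk$, and observe that if $|I| \ge x^2 k$ with a sufficiently large implicit constant (and treating $\mu = \Theta(1)$, as is the case in every downstream application in the hash-table analysis), then $(1/(2x))|I|\mu \ge 2xk$, so the bound collapses to $(1 + 1/x)|I|\mu$, which is exactly the claim. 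The bound $|I| \ge x^2 k$ appears in the lemma exactly because it is the threshold at which the $\Theta(1/x)$ multiplicative slack dominates the $\Theta(xk)$ additive slack.

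There is essentially no obstacle: the probabilistic content is already packaged in Lemma \ref{lem:saturated3}, and the only thing to get right is the choice of constants so that the inequality between the slack term and the additive term goes through cleanly. If one wants to avoid any implicit dependence on $\mu$, one can instead reparameterize Lemma \ref{lem:saturated3} with $k$ rescaled by a factor depending on $\mu$, which still preserves the $2^{-\Omega(k)}$ failure probability since $\mu$ is a fixed constant in the regime of interest.
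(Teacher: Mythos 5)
Your proposal is correct and takes essentially the same route as the paper: derive the corollary from Lemma~\ref{lem:saturated3} by a pure reparametrization that creates extra multiplicative slack and then absorbs the additive $O(xk)$ term once $|I|\ge x^2k$ (using that $\mu=\Theta(1)$, which is the standing assumption of the balls-and-bins section). One thing worth flagging: the paper's one-liner states $x'=x/2$ (with $k'=kx/2$), which goes in the wrong direction---it makes the multiplicative factor $1+2/x$, which cannot then be reduced to $1+1/x$---and is evidently a typo for a choice like $x'=2x$ with $k'=\Theta(\mu k)$, exactly the scaling you identified.
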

\begin{proof}
This follows by applying Lemma \ref{lem:saturated3} with $x' = x / 2$ and $k' = kx / 2$. 
\end{proof}

We now turn our attention to analyzing graveyard hashing. As in the previous sections, it will be easier to analyze the displacement of an insertion rather than directly analyzing the running time of each insertion. 
Recall that the displacement $d_u$ of an insertion $ u $ is $i - h(u)$ where $i$ is the hash of the tombstone that $u$ uses, if $ u $ uses
a tombstone, and $i$ the position of the free slot that $ u $ uses, if $ u $ uses a free slot.

The next lemma bounds the difference between displacement and running time. The fact that the rebuild windows for graveyard hashing are so small
(only $n / (4x)$ operations) ends up allowing for an especially simple argument.

\begin{lemma}
Consider the insertion of an element $ u $. Let $ d$ denote the displacement of the insertion, and $ t $ denote the running time.
Then, for any $r \ge 1$,
$$\Pr[t - d - 1 \ge r x] \le \exp(-\Omega(r)).$$
\label{lem:displacementerror}
\end{lemma}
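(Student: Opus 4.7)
The plan is to split into cases based on what the insertion $u$ consumes, and then to reduce the problem to a tail bound on an overflow quantity that can be controlled by a Chernoff-style argument. First I would dispatch the trivial case: if $u$ consumes a free slot at position $p_u$, then the number of positions examined is $t = p_u - h(u) + 1 = d + 1$, so $t - d - 1 = 0$ and the conclusion holds for every $r \ge 1$. For the remainder, I assume $u$ consumes a tombstone, sitting at some position $j$ with hash $i = p_u$. Then $t = j - h(u) + 1$ and $d = i - h(u)$, so $t - d - 1 = j - i$, the positional offset of the tombstone. It therefore suffices to show that this positional offset exceeds $rx$ with probability at most $\exp(-\Omega(r))$.

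Next I would convert positional offset into overflow. Because ordered linear probing never moves an item to the left of its hash, the quantity $j - i$ is bounded above by the number of items currently in the table whose hash is strictly less than $i$ but whose position is at least $i$. A standard accounting of items against available positions gives
\[
j - i \;\le\; \max_{j' \le i}\,\bigl(B(j', i) - (i - j')\bigr),
\]
where $B(j', i)$ denotes the number of items present at the moment of insertion whose hash lies in $[j', i)$; so it suffices to show that this maximum is at most $rx$ with the stated probability.

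Then I would exploit the very small rebuild window of graveyard hashing. Each window begins with $(1-1/x)n$ elements and $n/(2x)$ planted tombstones, and at most $n/(4x)$ operations occur before the next rebuild; consequently the total number of items (elements plus tombstones of either kind) never exceeds $(1 - 1/(4x))n$ inside the window, and the set $U$ of distinct elements ever present in the window satisfies $|U| \le (1 - 3/(4x))n$. I would split $B(j', i)$ as a sum of two contributions: a random part counting active elements and active deletion tombstones, each corresponding to a unique element of $U$ with an independent uniform hash; and a deterministic part counting the surviving planted tombstones, which sit at fixed positions of the form $2kx$. The deterministic part contributes at most $(i - j')/(2x) + 1$ on any interval. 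For the random part, I would apply Lemma~\ref{lem:bolus} to $U$ (with the bins reversed so that its prefix statement becomes a statement about intervals ending at $i$), choosing the lemma's inner parameter to be $4x$ and its error parameter to be $r$; this yields, uniformly over $j' \le i$ and with probability $1 - 2^{-\Omega(r)}$, a random-part count of at most $(1 + 1/(4x))(1 - 3/(4x))(i - j') + 4rx \le (1 - 1/(2x))(i - j') + 4rx$. Adding the deterministic part gives $B(j', i) - (i - j') \le 4rx + 1$, whence $j - i \le O(rx)$ and, after absorbing the constant into $r$, the desired bound $\exp(-\Omega(r))$ follows.

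The main obstacle is the delicate parameter tuning in the last step: the planted-tombstone density is exactly $1/(2x)$, which would completely eat up the $1/(2x)$ headroom left by a worst-case random-item density of $1 - 1/(2x)$, so the Chernoff slack would have nowhere to go. The argument closes only because the graveyard-hashing invariants permit the sharper split in which the random items sit at density at most $1 - 3/(4x)$, leaving an extra $1/(4x)$ of headroom in which to absorb the multiplicative $(1 + 1/(4x))$ term produced by Lemma~\ref{lem:bolus}; this is ultimately why the rebuild window must be as short as $n/(4x)$.
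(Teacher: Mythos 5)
Your proof follows the same basic strategy as the paper's: dispatch the free-slot case trivially, reduce $t - d - 1$ to the placement error $j - i = (\text{position of }v) - h(v)$ of the consumed tombstone $v$, bound that placement error by the maximum overflow (balls minus slots) of an interval ending near $h(v)$, and then control that maximum via one of the Section 4 prefix balls-and-bins lemmas. One genuine improvement you make over the paper's write-up is that you explicitly separate the graveyard tombstones (which sit at deterministic hashes $2kx$ and thus are not ``random balls'') from the genuinely random items (live elements and deletion tombstones), and you verify that the random part has density at most $1 - 3/(4x)$ while the deterministic part contributes exactly $|I|/(2x) + O(1)$; the paper compresses all of this into the single remark that the expected count in an interval is at most $|I|$ and applies Lemma~\ref{lem:saturated3} without comment, which silently glosses over the fact that half of the ``balls'' are not random. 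Your version makes this step actually checkable.

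There is, however, one technical point you should fix: you cannot anchor the uniform tail bound at $i = h(v)$, because $v$ (and hence $i$) is itself determined by the hash function, so a statement that holds ``uniformly over $j' \le i$ with probability $1-2^{-\Omega(r)}$'' for each fixed $i$ does not immediately transfer to the random $i$. The correct fixed anchor is $h(u)$, which is independent of the rest of the table's randomness, and the inequality $p \le h(u) \le h(v)$ (where $p$ is the start of $u$'s run) guarantees that the overflow interval $[p, h(v)]$ always contains $h(u)$. This is exactly the role of the paper's quantity $\beta = \max\{e_{v'} : v' \text{ in } u\text{'s run},\ h(v') \ge h(u)\}$ and of Lemma~\ref{lem:saturated3}, which bounds all intervals containing a prescribed position $j$; your invocation of Lemma~\ref{lem:bolus} with ``bins reversed to end at $i$'' only controls prefixes ending at the random point $i$ and so does not quite close the argument. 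Replacing Lemma~\ref{lem:bolus} by Lemma~\ref{lem:saturated3} (or Corollary~\ref{cor:saturated3}) anchored at $h(u)$, and taking the max over both endpoints of the interval rather than just the left one, repairs this. You should also replace the half-open interval $[j', i)$ by the closed interval $[j', i]$: items residing in positions $[i, j)$ can have hash exactly $i$ (and hence are not ``strictly less than $i$''), so the half-open count can undershoot $j - i$ by the multiplicity of the bin $i$; using the closed interval makes the inequality $N([p,i]) \ge (i - p + 1) + (j - i)$ exact, as in the paper.
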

\begin{proof}
We can assume without loss of generality that $ u $ makes use of some tombstone $v$ (rather than a free slot), since otherwise the lemma is trivial. 
The displacement of $u$ is therefore given by $d = h(v) - h(u)$ and the running time of $u$ is given by $t = k - h(u) + 1$, where $ k $
is the position in which $ v $ resides. Thus
$$\Pr[t - d - 1 \ge r x] = \Pr[k - h(v) \ge r x],$$
which means that we want to show that
$$\Pr[k - h(v) \ge r x]\le \exp(-\Omega(r)).$$

For each element/tombstone $v'$ in the run containing position $h(u)$, define the \defn{placement-error} $e_{v'}$ 
to be $k' - h(v')$, where $ k' $ is the position in which $ v' $ resides (at the moment of time prior to the insertion $ u $).

We wish to show that $ e_{ v } < rx$, but we must be careful about the fact that $ v $ is partially a function of the randomness
of the hash table. In order to bound $e_v$, we assume that $ v $ is selected adversarially, and instead bound the quantity
$$\beta = \max\{ e_{v'} \mid v'\text { is in the same run as } u,\text { and } h (v')\ge h (u)\}. $$
We cannot afford to simply union bound over the different options for $ v' $ here; instead we must make use of the fact that
the values of $e_{v'}$ are closely correlated for different keys $v'$ in the same run.

Let $v'$ be the element for which $\beta = e_{v'}$. Let $ p $ be the position of the left-most element in $u$'s run.
All of the elements/tombstones that reside in positions $p, \ldots, h(v') + e_{v'}$ must have hashes in $[p, h(v')]$. 
The number of elements/tombstones (at the time prior to $u$'s insertion) that hash to the interval $I = [p, h(v')]$ must therefore
be at least $|I| + e_{v'}$.
In contrast, the expected number of elements/tombstones that hash into the interval $I$ is that most $|I|$. 
Thus there are at least $e_{v'}$ more element/tombstones that hash into $I$ then are expected.
By Lemma \ref{lem:saturated3}, it follows that $\Pr[e_{v'} \ge r x] \le \exp(-\Omega(r))$.
\end{proof}

Graveyard hashing is designed so that there are always copious tombstones for insertions to make use of. Note, in particular, that each rebuild window begins with
$n / (2x)$ tombstones but only contains at most $n / (4x)$ insertions. This allows for the following bound on displacement.
\begin{lemma}
Consider an insertion $ u$. The displacement $ d $ of $ u $ satisfies
$$\Pr [d \ge r x] \le \exp(-\Omega(r))$$
for all $ r >1 $.
\label{lem:graveyardbound}
\end{lemma}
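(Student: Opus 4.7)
The plan is to show that at the time just before $u$'s insertion, the hash interval $I = [h(u), h(u) + rx)$ contains at least one ``usable stopper'' --- an alive tombstone with hash in $I$, or a free slot with position in $I$ --- with probability $1 - \exp(-\Omega(r))$. By the ordered-linear-probing rule, the existence of such a stopper forces $u$'s insertion to use one of them, giving $d_u < rx$.

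The argument is a counting argument comparing the initial stoppers in $I$ to the total consumption during the rebuild window. On one hand, graveyard hashing's deterministic tombstone placement at hashes $\{2ix\}_i$ puts $\lfloor r/2 \rfloor - 1$ tombstones in $I$, and a Chernoff bound (using that the post-rebuild effective load factor is $1 - 1/(2x)$) guarantees $\Omega(r)$ free slots in $I$ except with probability $\exp(-\Omega(r))$. On the other hand, every prior insertion consumes at most one stopper, and any insertion consuming a stopper in $I$ must have hash at most $h(u) + rx - 1$ with peak in $I$.

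I would split the consumed-stopper count into two groups. Group (i): insertions whose hash lies in $I$. Because the rebuild window contains $R \le n/(4x)$ insertions, this count is a binomial with mean at most $r/4$, so it is at most $r/3$ except with probability $\exp(-\Omega(r))$ by a Chernoff bound. Group (ii): insertions with hash below $h(u)$ whose peak nevertheless lands in $I$. Each such insertion $v$ has displacement at least $h(u) - h(v)$, so I would partition the hashes below $h(u)$ into blocks of width $x$ and apply the present lemma inductively (at strictly smaller values of $r$) to bound the probability that a given insertion in each block spills into $I$. Summing the per-block contributions yields an expected spillover count of $O(1)$ and, after one more Chernoff-type tail bound, a spillover count exceeding $r/3$ with probability at most $\exp(-\Omega(r))$.

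Combining the two bounds: on the intersection of these high-probability events the initial stopper count in $I$ strictly exceeds the total consumption, so at least one usable stopper survives for $u$. The main obstacle is the recursive step in group (ii), since the spillover bound appeals to displacement bounds for earlier insertions, which is what the lemma itself is asserting. This is resolved by inducting on $r$, with a trivial constant-probability base case for $r = O(1)$ (where $\exp(-\Omega(r))$ is itself a constant, so the bound holds automatically by tuning the constant hidden in $\Omega(\cdot)$), and using the inductive hypothesis only at values strictly smaller than the current $r$.
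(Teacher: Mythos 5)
Your proposal and the paper's proof diverge in a way that matters: the paper does not work with the fixed interval $I=[h(u),h(u)+rx)$ at all. It instead anchors the analysis at the nearest \emph{surviving primitive tombstone} to the left of $h(u)$. Concretely, letting $T$ be the set of graveyard tombstones (from the last rebuild) still present when $u$ is inserted, the paper sets $j_0=\max\{h(v):v\in T,\ h(v)<h(u)\}$ and $j_1=\min\{h(v):v\in T,\ h(v)\ge h(u)\}$, and bounds $d\le j_1-j_0$. Because the tombstone at hash $j_0$ is still alive, the ordered-linear-probing rule guarantees that \emph{no insertion with hash $\le j_0$ could have consumed a tombstone with hash $>j_0$} --- such an insertion would have stopped at $j_0$ first. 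This single structural observation makes your ``group~(ii)'' vanish: every insertion that used one of the $\approx(j_1-j_0)/(2x)$ tombstones strictly between $j_0$ and $j_1$ must itself hash into $(j_0,j_1)$. Then it is a pure counting argument: there were $\approx(j_1-j_0)/(2x)$ tombstones consumed, but only $\approx(j_1-j_0)/(4x)$ insertions hashing into that interval are expected, which is a constant-factor overload, and Corollary~\ref{cor:saturated3} (a union over all intervals containing a fixed point) gives the $\exp(-\Omega(r))$ tail.

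Your proposal, as written, has two genuine gaps. First, the claim that $I$ contains $\Omega(r)$ free slots except with probability $\exp(-\Omega(r))$ is false: at load factor $1-1/x$, Lemma~\ref{lem:nofreeslots} shows that an interval of length up to $\Theta(x^2)$ has \emph{zero} free slots with constant probability, so for $r$ up to $\Theta(x)$ your free-slot bound fails badly. (This is survivable, since the deterministic graveyard tombstones alone give $\approx r/2$ stoppers, but it should be dropped.) Second, and more seriously, your group~(ii) analysis does not close. The induction is circular: insertions in block $B_k$ with $k-1\ge r$ require the lemma at $r'=k-1\ge r$, which is not covered by ``strictly smaller values of $r$'', yet those blocks contribute to the sum you are taking. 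Worse, even granting the expectation bound $\E[\text{spillover}]=O(1)$, the step from there to a $\exp(-\Omega(r))$ tail on the spillover count is not a Chernoff bound: the indicator events ``insertion $v$ has peak $\ge h(u)$'' for different $v$ are not independent --- they are \emph{positively} correlated through the shared hash function (a single long cluster to the left of $h(u)$ makes many of them true simultaneously), so the tail can be much heavier than independence would suggest. The paper's anchoring trick is precisely what eliminates the need for any such tail bound on a dependent sum.
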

\begin{proof}
Call a tombstone \defn{primitive} if it was inserted during the rebuild prior to the current rebuild window.
Let $ T $ be the set of primitive tombstones present when $ u $ is inserted. Let
$$ j_0 =\max\{ h (v)\mid v\in T, h (v) < h (u)\}$$
and 
$$ j_1 =\min\{ h (v)\mid v\in T, h (v)\ge h (u)\}.$$ 
The displacement $ d $ is at most $ j_1 - h (u) \le j_1 - j_0$.
To complete the proof, we will bound the probability that 
$ j_1  - j_0 \ge rx $. 

At the beginning of the rebuild window, there were 
$\frac{j_1 - j_0}{2x} - 1 $ primitive tombstones with hashes in the range
$ I = (j_0, j_1)$; denote the set of these tombstones by $L$. By the time $ u $ is inserted,
all of the tombstones $L$ have been used by insertions (this is by the definition of $j_0$ and $j_1$). Since there is still a primitive tombstone 
with hash $ j_0 $, the insertions that used up $ L $ must have all had hashes at least $ j_0 + 1 $.
Thus, during the current rebuild window, there have been at least $|L| = \frac{j_1 - j_0}{2x} - 1 $ insertions that hashed into
the interval $I = (j_0, j_1)$. 

Recall, however, that each rebuild window consists of only $n / (4x)$ operations. The expected number of insertions that hash into $I$
 is therefore a most $\frac{j_1 - j_0}{4x}$. 

In summary, the only way have $j_1 - j_0 \ge rx$ is if (1) there is an interval $I$ containing $h(u)$ that satisfies $|I| \ge rx - 2$,
and (2) the number $q$ of insertions (during the current rebuild window) that hash into $ I $ is a constant factor larger than the expected
number $\E[q]$ of such insertions. To bound the probability of such an $ I $ existing, we partition the slots of the hash table into ``bins'' of size $x$,
and treat keys inserted during the rebuild window as balls that each hash to a bin. In order for $ I $ to exist, there must be a contiguous subsequence of $\Theta(r)$ bins such that the interval of hashes covered by the bins contains $h(u)$, and such that the bins contain a constant factor more balls than expected. By Corollary \ref{cor:saturated3}, the probability 
of such a subsequence of bins existing is that most $\exp(-\Omega(r))$.
\end{proof} 

Combining the previous lemmas, we can analyze the running time of graveyard hashing.
\begin{theorem}
Consider a graveyard hash table. For each insertion/query/deletion, if the 
operation is performed at some load factor of $1 - 1/x$ then the operation takes
expected time $O(x)$, and incurs amortized time $O(x)$ for rebuilds.
\label{thm:graveyard}
\end{theorem}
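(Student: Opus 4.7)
The plan is to bound the expected running time by $O(x)$ separately for insertions and for queries/deletions, and to observe that rebuilds trivially amortize to $O(x)$ per operation. For insertions, Lemmas~\ref{lem:displacementerror} and~\ref{lem:graveyardbound} already do essentially all the work. For queries and deletions, the decomposition from Section~\ref{sec:time} goes through with minor modifications, and the crucial improvement over standard ordered linear probing is that $\E[c_i]=O(1)$ (rather than $\omega(x)$) for graveyard hashing, thanks to the artificial tombstones that cap displacements at $O(x)$.

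For insertions, let $u$ be an insertion with displacement $d$ and running time $t$. A union bound over Lemmas~\ref{lem:graveyardbound} and~\ref{lem:displacementerror} gives
\[
\Pr[t \ge 2rx + 1] \;\le\; \Pr[d \ge rx] + \Pr[t - d - 1 \ge rx] \;\le\; \exp(-\Omega(r))
\]
for every $r \ge 1$, and integrating this tail yields $\E[t] = O(x)$.

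For queries and deletions hashing to position $i$, I would apply Lemma~\ref{lem:runningtimedecomposition} to bound the running time by $O(o_i + s_i + 1)$ and then bound each of $\E[o_i]$ and $\E[s_i]$. The bound $\E[s_i] = O(1)$ from Lemma~\ref{lem:positionaloffset} transfers directly, since its proof only uses that the total number of keys ever present is $O(n)$. To bound $\E[o_i]$, I would mirror the proof of Lemma~\ref{lem:positionaloffset}: right after the rebuild, $o_i$ equals the number of elements and tombstones with hash less than $i$ residing at position at least $i$, and since the table contains $(1-1/x)n$ random-hashed elements together with $n/(2x)$ deterministic-hashed tombstones, the Chernoff argument underlying Corollary~\ref{cor:over}, applied at effective load factor $1 - 1/(2x)$, gives $\E[o_i^{\text{initial}}] = O(x)$. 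During the window, $o_i$ grows by exactly the crossing number $c_i$, which Lemma~\ref{lem:graveyardbound} controls: an insertion hashing to some $j < i$ contributes to $c_i$ only if its displacement is at least $i - j$, an event of probability $\exp(-\Omega((i-j)/x))$ by Lemma~\ref{lem:graveyardbound}; summing over $j$ and multiplying by the per-slot insertion rate at most $1/(4x)$ (since the rebuild window contains at most $n/(4x)$ insertions) yields $\E[c_i] = O(1)$. Combining gives $\E[o_i] = O(x)$, and hence the query/deletion time is $O(x)$ in expectation.

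Finally, each rebuild performs $O(n)$ work, and the next rebuild is triggered $n/(4x)$ operations later, so the amortized rebuild cost per operation is $O(x)$. The main place that requires care is the $\E[o_i^{\text{initial}}] = O(x)$ bound: Corollary~\ref{cor:over} is stated for tables with no tombstones, so one must re-examine its Chernoff-style proof and verify that adding the $n/(2x)$ deterministic-hash tombstones only shifts the mean number of occupants in each interval by an exact additive term and does not inflate the variance. This is straightforward because the only randomness in the overflow count still comes from the elements' hashes, but it is the one spot where the graveyard analysis genuinely departs from the arguments in Section~\ref{sec:time}.
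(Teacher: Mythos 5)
Your insertion argument and rebuild-amortization argument match the paper's proof exactly. For queries and deletions, however, you take a genuinely different and more involved route than the paper does. The paper's proof simply observes that a query can only slow down when it passes elements or tombstones whose hashes are at most the query's hash; if one ignores the deletions in the rebuild window, the contents of the table are dominated by an insertion-only table at effective load factor at most $1 - 1/(4x)$ (the $(1-1/x)n$ original elements, the $n/(2x)$ graveyard tombstones, and the at most $n/(4x)$ inserted keys), so Proposition~\ref{prop:classic2} directly gives an $O(x)$ bound. Your proof instead re-imports the $o_i + s_i$ decomposition from Lemma~\ref{lem:runningtimedecomposition}, bounds $\E[s_i]=O(1)$ and $\E[o_i^{\text{initial}}]=O(x)$, and then proves a fresh bound $\E[c_i]=O(1)$ by summing the displacement tail from Lemma~\ref{lem:graveyardbound} over positions $j<i$ and multiplying by the per-slot insertion rate $1/(4x)$. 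This computation is correct, and it is actually informative: it shows explicitly that graveyard hashing drives the crossing numbers down to $\E[c_i]=O(1)$, whereas for ordered linear probing they are $\tilde\Theta(x\sqrt{R/n})$. The tradeoff is that your route requires re-verifying that the Section~\ref{sec:time} machinery (and Corollary~\ref{cor:over}) survives the presence of deterministic-hash graveyard tombstones at the start of each rebuild window, a subtlety you correctly flag; the paper's proof sidesteps this entirely by reducing to the tombstone-free insertion-only bound of Proposition~\ref{prop:classic2}. So the paper's argument is the cleaner one-liner, but your version exposes more of the structure and makes the connection between graveyard hashing and the crossing-number analysis explicit.
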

\begin{proof}
Since graveyard hashing uses small rebuild windows (i.e., of size $n / (4x)$), we can analyze queries by ignoring the
deletions in the rebuild window, and applying the classic $O(x)$ bound for query time in 
an insertion-only table (Proposition \ref{prop:classic2}). Deletions of keys $ u $ take the same amount of time that a query for that key would have, so deletions also take expected time $ O (x) $. To analyze insertions, we can apply Lemmas \ref{lem:displacementerror} and \ref{lem:graveyardbound}, which together bound the
expected time by $O(x)$.

Finally, we analyze the amortized cost of rebuilds per operation. If a rebuild window starts at a load factor of $1 - 1/x$, then the next rebuild is performed after $\Theta(n/x)$ operations, and all of those operations are performed at load factors $1 - \Theta(1/x)$. The rebuild
can be performed in time $\Theta(n)$ and thus the amortized cost per operation is $\Theta(x)$.
\end{proof}

\begin{remark}
The proof of Theorem \ref{thm:graveyard} assumes a fully random hash function, but this assumption is not necessary. The theorem continues to hold if we use either tabulation hashing or 5-independent hashing. In particular, one can use Equation (23) from \cite{PatrascuTh12} as a substitute for Lemma \ref{lem:saturated3}, and then re-create all of the proofs above without modification. 
\end{remark}

We conclude the section by analyzing graveyard hashing in the external-memory model \cite{AggarwalVi88}.
\begin{theorem}
Consider the external memory model with $B = rx$ for some $r \ge 1$ and $M = \Omega(B)$. 
Graveyard hashing can be implemented to offer the following guarantee on any sequence of operations. The load factor of 
the table is maintained to be $1 - 1 / \Theta(x)$ at all times, and each operation incurs
$$1 + O(1 / r)$$
block transfers in expectation. Furthermore, the amortized block-transfer cost (per operation) of rebuilds is $ O(1 / r)$. 
\end{theorem}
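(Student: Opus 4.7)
The plan is to bootstrap Theorem~\ref{thm:graveyard} into an external-memory bound by carefully converting the expected running time $O(x)$ into an expected block-transfer count of $1 + O(1/r)$. First I would observe that when an operation on a key $u$ runs for $t$ steps, it examines the $t$ consecutive slots $h(u), h(u)+1, \ldots, h(u)+t-1$, and therefore touches at most $1 + \lceil (t-1)/B \rceil$ distinct blocks. So the expected number of blocks touched is bounded by $1 + \E[t]/B$, and the target $1 + O(1/r)$ reduces to showing $\E[t] = O(x) = O(B/r)$. For queries and deletions, this is immediate from the classical $O(x)$-time bound (Proposition~\ref{prop:classic2}), which applies in graveyard's small rebuild window because the deletions can be ignored. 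For insertions, we get $\E[t]=O(x)$ by combining Lemma~\ref{lem:displacementerror} (bounding $t - d$ in expectation) with Lemma~\ref{lem:graveyardbound} (bounding the displacement $d$).

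Next I would handle the load-factor and rebuild parts. The load factor is kept in $1 - \Theta(1/x)$ by construction: each rebuild window has length $\Theta(n/x)$, and since the hash table is resized whenever $x$ changes by a constant factor, the load factor stays within a constant factor of $1 - 1/x$ throughout. For rebuilds, I would bound the block-transfer cost of a single rebuild by $O(n/B)$: clearing tombstones and then stamping new tombstones at the evenly spaced hashes $\{2ix\}$ are both linear sequential scans, and if the table is being dynamically resized, Larson's partial-expansion scheme~\cite{Larson82} (already used in \cite{JensenPa08,PaghWeYi14}) also achieves $O(n/B)$. Amortizing this cost across the $\Theta(n/x)$ operations in the rebuild window yields $O((n/B)/(n/x)) = O(x/B) = O(1/r)$ per operation.

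The main obstacle I anticipate is the tightness of the ``$1 + $'' in $1 + O(1/r)$. A naive bound might give $O(1 + 1/r) = O(1)$, but here we need the leading constant to be exactly $1$. This forces us to account carefully for block-boundary straddles: the count $1 + \lceil (t-1)/B \rceil$ above uses that the initial block $\lfloor h(u)/B \rfloor$ is always touched, and any additional block is only touched when the walk of length $t-1$ crosses a boundary, which contributes at most $(t-1)/B$ blocks in expectation (this is deterministic given $t$, no averaging over $h(u) \bmod B$ is needed). Combined with the exponential tail on $t$ from Lemmas~\ref{lem:displacementerror}--\ref{lem:graveyardbound}, this gives the clean $1 + \E[t-1]/B = 1 + O(1/r)$ bound. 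The assumption $M = \Omega(B)$ is used only to ensure that individual operations and the sequential scans during rebuild have somewhere to execute.
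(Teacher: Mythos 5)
Your overall strategy matches the paper's (apply Theorem~\ref{thm:graveyard}, convert $\E[t]=O(x)$ into $O(x/B)=O(1/r)$ expected boundary crossings, amortize rebuilds as $O(n/B)$ over $\Theta(n/x)$ operations), and the load-factor and rebuild parts are handled correctly. However, the parenthetical in your last paragraph is wrong and creates a genuine gap. The bound ``$1+\lceil (t-1)/B\rceil$ blocks'' is a valid \emph{deterministic} upper bound given $t$ and $h(u)$, but the number of boundary crossings is \emph{not} bounded by $(t-1)/B$ for every $h(u)$: if $h(u)\bmod B = B-1$, even $t=2$ crosses one boundary. More to the point, $\E[\lceil (t-1)/B\rceil]\ge \Pr[t\ge 2]$, which is typically $\Omega(1)$ rather than $O(1/r)$, so if you really only used the ceiling bound without averaging you would get $2+O(1/r)$, not $1+O(1/r)$.

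To recover the ``$1+$'' you must, in fact, average over $h(u)\bmod B$ — exactly what you declared unnecessary. Since $h$ is uniform and (with wraparound and $B\mid n$) the configuration is translation-invariant, $h(u)\bmod B$ is uniform and independent of $t$, so the expected number of multiples of $B$ in $(h(u),\,h(u)+t-1]$ is exactly $(t-1)/B$. Taking expectations over $t$ then gives $1+\E[t-1]/B = 1+O(x/B)=1+O(1/r)$. With that correction the argument goes through and is the same as the paper's; the paper's own proof is terse on this point but is implicitly relying on the uniformity of $h(u)$ for the very same reason.
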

\begin{proof}
By Theorem~\ref{thm:graveyard}, the expected time taken by a given operation is $O(x)$. It follows that the expected number of block boundaries that are crossed by the operation is $O(x / B) = O(1/r)$. Thus the expected number of block transfers incurred is $O(1/r)$.

Next we analyze the cost of rebuilds. Each rebuild window contains $\Theta(n / x)$ operations at a load factor of $1 - \Theta(1 / x)$, and, as discussed earlier, the rebuild at the end of the window can be implemented with $O(n / B)$ block transfers. This implies the desired bound of $ O(1 / r)$ on amortized rebuild cost.
\end{proof}

\begin{corollary}
If $x = o(B)$, then the amortized expected number of block transfers per operation is $1 + o(1)$.
\end{corollary}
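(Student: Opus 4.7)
The plan is to derive the corollary as an essentially immediate consequence of the preceding theorem, with the only real content being a translation of the asymptotic condition $x = o(B)$ into the parameter regime required by the theorem. First I would set $r := B/x$, which is a valid choice in the theorem since the hypothesis $x = o(B)$ guarantees $r \ge 1$ for sufficiently large inputs (and in fact $r \to \infty$).

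Next I would invoke the theorem with this choice of $r$ to obtain two bounds simultaneously: the expected per-operation block-transfer cost is $1 + O(1/r)$, and the amortized per-operation block-transfer cost for rebuilds is $O(1/r)$. Summing these two contributions, the total amortized expected number of block transfers per operation is $1 + O(1/r)$.

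Finally, I would translate the hypothesis: $x = o(B)$ means $B/x \to \infty$, i.e.\ $r = \omega(1)$, so $1/r = o(1)$, and hence the amortized expected per-operation cost is $1 + o(1)$, as claimed. There is no obstacle here to speak of; the only thing to be slightly careful about is ensuring that the ``per-operation'' guarantee of the theorem genuinely absorbs the rebuild cost into the same $1 + o(1)$ bound rather than appearing as a separate additive term, which it does since $O(1/r) + O(1/r) = O(1/r) = o(1)$.
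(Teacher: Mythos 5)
Your proof is correct and matches the paper's intent exactly: the paper leaves this corollary without a written proof precisely because it is the immediate instantiation $r = B/x \to \infty$ of the preceding theorem, and your summing of the $1 + O(1/r)$ operation cost with the $O(1/r)$ amortized rebuild cost is the right way to absorb both into $1 + o(1)$.
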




\section{Related Work}
\label{sec:related}

\paragraph{Alternative probing methods.}
Beginning in the late 1960s, there was a flurry of work on alteratives to linear probing.  
A central question has been whether one can have probe sequences that benefit from the data locality of traditional linear probing while also eliminating primary clustering.

At one extreme is uniform probing, where each probe is to a random location, thus sacrificing locality in the probe sequence~\cite{Peterson57}. There has been intensive work in analyzing variations on uniform probing, including in the presence of deletions~\cite{CelisFr92,JimenezMa18,Mitzenmacher16-robinhood,Mitzenmacher16-doublehashing,VanWykVi86,KenyonVi91,Larson83,Ramakrishna89}.

Double hashing~\cite{Knuth98Vol3,Brent73,GuibasSz78,WikipediaDoubleHashing,Mitzenmacher16-doublehashing,LuekerMo88,LuekerMo93} is a classic alternative to uniform probing, in which a primary hash function determines the first probe and a secondary hash function determines jump size between indices in the probe sequence. Double hashing has been shown to have short probe sequences similar to that of uniform hashing, but like uniform probing, these short probe sequences come with a corresponding loss in locality.

In 1968, Maurer~\cite{Maurer68,HopgoodDa72} introduced quadratic probing,
which remains a widely used solution today (see, e.g., \cite{Abseil17, BronsonSh19}). Although Maurer's original scheme used a probing sequence
that cycled after $n / 2$ iterations, subsequent work~\cite{HopgoodDa72}
has shown how to construct quadratic probing sequences that hit every position in the table. Quadratic probing can be viewed as a hybrid of linear probing and double hashing, maintaining some of the spatial locality of the former, while empirically obtaining probe complexity similar to the latter \cite{Maurer68}. 



\paragraph{Cuckoo hashing.}
In addition to probing and chaining, another form of hash table that has become widely used (see, e.g.,~\cite{LiAnKa14,SunHuJi17,PontarelliRe16}) is cuckoo hashing, 
which was introduced in 2004 by Pagh and Rodler~\cite{PaghRo01, PaghRo04}. Cuckoo hashing guarantees that every record
$u$ is in one of two positions $h_1(u)$ or $h_2(u)$ in the hash table. The result is that, even in the worst case,
queries take time $O(1)$. Although originally cuckoo hashing
only supported load factors smaller than $ 1/2 $, subsequent work has shown how to support higher load factors by
either (1)~using $d > 2$ hash functions \cite{FotakisPaSa05}, or (2)~hashing records $u$ to bins $h_1(u)$ and $h_2(u)$ (rather than to slots)
that each have some capacity $c > 1$ \cite{DietzfelbingerWe07}.

A drawback of cuckoo hashing is that it is less I/O efficient than linear probing. Negative queries always require
$d \ge 2$ I/Os, and insertions at high load factors require $\omega(1)$ I/Os~\cite{FotakisPaSa05,DietzfelbingerWe07}.
In contrast, we show that linear probing can offer $1 + o(1)$ I/Os per operation, as long as the memory block size is $\omega(x)$. 

\paragraph{Other work on hashing.}
Broadly speaking, work on hashing can be categorized into three major categories. The first, discussed above, has been to understand and try to improve the behavior of three core hash-table designs: probing, cuckoo hashing, and, to a lesser extent, chaining. The second direction has been to study what types of additional features are possible to achieve in a hash table (or, more generally, a dictionary). And the third direction has been to construct explicit families of hash functions that can be used in place of full independence.

As an example of a major result in the second category, Dietzfelbinger and Meyer auf der Heide~\cite{DietzfelbingerHe90} showed that it is possible to achieve
\emph{worst case} constant time operations (with high probability), building on prior work by Fredman et al.~\cite{FredmanKoSz82} 
and by Dietzfelbinger et al.~\cite{DietzfelbingerKaMe88}. Subsequent work has pushed this guarantee even further, showing that it is also possible to have a $1 - o(1)$ load factor \cite{DemaineMePa06, ArbitmanNaSe09, LiuYiYu20},
as well as a sub-polynomial failure probability~\cite{GoodrichHiMi12, GoodrichHiMi11}.
There has also been a series of work on upper and lower bounds for deterministic dictionaries~\cite{Sundar91,HagerupMiPa01, Ruzic08,Pagh00,PatrascuTh14} and external-memory hashing~\cite{JensenPa08,IaconoPa12,ConwayFaSh18,PaghWeYi14,PaghWeYi10},
as well as work on achieving security guarantees such as history independence~\cite{BlellochGo07,NaorSeWi08,NaorTe01,BenderBeJo16} and protection against an adversary that can see where in memory is being accessed~\cite{ChanGuLi17,YuqunVeCa02}.

In the third category, that is, the study of explicit hash function families, 
there are now some quite universal results known, including families of hash functions \cite{naor1999construction, luby1988construct, pagh2008uniform, KaplanNaRe09, DietzfelbingerWo03} that can be made compatible with essentially any hash table (see, for example, the usage in \cite{ArbitmanNaSe10, LiuYiYu20}). 

Additionally, there has been quite a bit of work on families of hash functions for specific hash tables, especially linear probing \cite{PaghPaRu07, MitzenmacherVa08, PatrascuTh10, PatrascuTh12, PaghPaRu11, Thorup09} and cuckoo hashing \cite{DietzfelbingerSc09, AumullerDiWo16, AumullerDiWo14, PatrascuTh12, Thorup17}. Linear probing, in particular, has been shown to be compatible with several especially simple families: Pagh et al. \cite{PaghPaRu07, PaghPaRu11}  showed that \emph{any} 5-independent family suffices, and P\v{a}tra\c{s}cu And Thorup \cite{PatrascuTh12} showed that the family of simple tabulation hash functions also suffices.
(In Section~\ref{sec:graveyard}, we describe how these results can be applied to graveyard hashing as well.) Although linear probing is, in general, \emph{not} compatible
with all 4-independent families of hash functions \cite{PatrascuTh10}, one of the most surprising results in the area is that of Mitzenmacher and Vadhan \cite{MitzenmacherVa08}, which establishes that, in any workload with sufficient entropy, even 2-independence suffices.

\paragraph{Relationship with filters.}
One of the most widely used applications of hash tables at high load factors is for the construction of \defn{filters}, 
which are compact dictionaries that supports some false-positive rate $\epsilon$.
The classic filter is the Bloom filter \cite{Bloom70}, which has inspired numerous variants~\cite{QiaoLiCh14, PutzeSaSi07,
LuDeDu11, CanimMiBh10, DebnathSeLi11, AlmeidaBaPr07, FanCaAl00, BonomiMiPa06,BroderMi03}.
Whereas the Bloom filter supports only a limited set of operations (no deletions) and no resizing, 
modern filters have shown how to adapt space-efficient hash tables 
in order to construct practical space-efficient filters that support a richer set of operations. Quotient filters and variants (e.g., counting and vector)~\cite{BenderFaJo12,PandeyBeJo17,BenderFaJo12,EinzigerFr16,GeilFO18,PandeyCoDu21} are built on the idea of storing small
fingerprints via ordered linear probing~\cite{AmbleKn74}; and cuckoo~\cite{FanAnKa14} and Morton filters~\cite{BreslowJ18} are based on the idea of storing these fingerprints via cuckoo hashing~\cite{PaghRo01, PaghRo04}.
There has also been an effort to push forward the theoretical frontiers of
what guarantees a filter can offer (see, e.g.,  Pagh's  
single-hash filter~\cite{PaghRaRa05} as well as more recent results~\cite{BenderFaGo18, LiuYiYu20}).

The fact that many filters are implemented on top of hash-table designs means that improvements to hash table performance
directly results in improvements to filter performance. For example, our techniques for improving linear probing can immediately be applied to linear-probing based filters~\cite{BenderFaJo12,PandeyBeJo17,EinzigerFr16,GeilFO18,PandeyCoDu21}.

\paragraph{Relationship with other data structures.}
One of the interesting features of our results in that it reveals an unexpected connection between the linear probing and several other problems in data structures (e.g., list labeling \cite{BulanekKoSa12}, file maintenance \cite{ItaiKoRo81,Willard86,Willard82,Willard92,BenderFiGi17,ItaiKa07}, cache-oblivious and locality-preserving B-trees \cite{BenderDeFa00,BenderDuIa04,BrodalFaJa02}, and even sorting \cite{BenderFaMo06}). Solutions to these problems all share a commonality, which is that they strategically leave extra space between elements of a data structure in order to enable fast modifications. 
One  interesting aspect of linear probing is that this ``extra space'' (tombstones and free slots) \emph{already} naturally occurs spread throughout the table, but that different forms of extra space (i.e., tombstones versus free slots) end up interacting very differently with operations of the hash table.

\markeverypar{\the\everypar\looseness=0 }

\newcommand{\bibcacm}{Communications of the ACM}
\newcommand{\bibinfsci}{Information Sciences}
\newcommand{\bibtheoreticalcs}{Theoretical Computer Science}

\bibliographystyle{plainurl}

\bibliography{hashing,bibliography,pma}

\appendix
\section{Linear Probing in Textbooks and Courses}\label{sec:books}

\begin{center}
\begin{tabular}{|l|c|c|l|}
\multicolumn{4}{c}{Textbooks' Stances on Linear Probing}\\
\hline
Source                                    & Teaches     & Teaches  & Recommends \\
                                          & primary     & Knuth's  & \multicolumn{1}{r|}{(QP=quadratic probing)}\\
                                          & clustering  & formulae & \multicolumn{1}{r|}{(DH=double hashing)}\\
\hline
Cormen, Leiserson, Rivest, and Stein \cite{CormenLeRi09}    & yes & no  & QP or DH \\
Dasgupta,  Papadimitriou, and Vazirani\cite{DasguptaPaVa06} & no & no & not applicable \\
Drozdek and Simon \cite{DrozdekSi95}      & yes & yes & QP or DH \\
Goodrich and Tamassia \cite{GoodrichTa15} & yes & no  & $\le 50\%$ load factor \\
Kleinberg and Tardos \cite{KleinbergTa06} & no  & no  & not applicable \\
Kruse \cite{Kruse84}                      & yes & yes & chaining \\
Lewis and Denenberg \cite{LewisDe91}      & yes & yes & DH with ordered probing \\
Main and Savitch \cite{MainSa01}          & yes & yes & DH \\
McMillan \cite{McMillan14}                & yes & no  & chaining \\
Sedgewick \cite{Sedgewick83,Sedgewick90}  & yes & yes & chaining or DH \\
Standish \cite{Standish95}                & yes & yes & not prescriptive \\
Tremblay and Sorenson \cite{TremblaySo84} &     & yes & DH \\
Weiss \cite{Weiss00}                      & yes & yes & QP \\
Wengrow \cite{Wengrow17}                  & no  & no  & $\le 70\%$ load factor \\
Wikipedia \cite{WikipediaLinearProbing}   & yes & yes & QP or DH \\
Wirth \cite{Wirth76,Wirth86}              & yes & partly & search trees \\

Wirth \cite{Wirth76,Wirth86}              & yes & partly & search trees \\
\hline
\multicolumn{4}{c}{}\\
\multicolumn{4}{c}{Some Course Notes' Stances on Linear Probing}\\
\hline
Source                     & Teaches     & Teaches  & Recommends \\
                                          & primary     & Knuth's  & \\
                                          & clustering  & formulae & \\
\hline
CMU Systems \cite{Kesden07}               & yes & no & QP or DH \\
CalPoly Fundamentals of CS \cite{Fisher01} & yes & no & QP or DH \\
Columbia Data Structs in Java \cite{Bauer15} & yes & no & QP or DH \\
Cornell Prog.~and Data Structs \cite{GriesJa14} & partly & partly & QP \\
Harvard Intro.~to CS \cite{Sullivan21} & yes & no & QP or DH \\
MIT Advanced Data Structs \cite{Demaine12-6897-lecture10} & yes & no & low load factor \\
MIT Intro. to Algorithms \cite{DemaineLe05} & yes & no & DH \\
Stanford Data Structs \cite{Schwarz21}  & yes & yes & chaining or low load factor \\
UIUC Algorithms \cite{Erickson17} & yes & no & binary probing \\
UMD Data Structs \cite{Mount19} & yes & yes & DH \\
UT Software Design \cite{Mitra21}    & yes & no & $\le 2/3$ load factor \\
UW Data Structs and Algorithms \cite{deGreef17} & yes & yes & QP or DH \\
\hline
\end{tabular}
\end{center}




\newcommand{\punt}[1]{}
\punt{
\paragraph{Punt this: Some citations (in chronological order)}
\begin{itemize}
  \item Wirth \cite{Wirth76} describes chaining ("has the disadvantage that secondary lists must be maintained and that each entry must provide space for a pointer"), linear probing ("has the disadvantage that entries have a tendency to cluster around the primary keys"), and double hashing ("too costly"), and quadratic probing "a comprise by being simple to compute and still superior to [linear probing]", "essentially avoids primary clustering", "slight disadvantage is that not all table entries are searched" (with a proof that quadratic numbers only hit half of $Z/p$). Fails to distinguish between the cost of a successful search and an unsuccessful search, and so says that linear probing costs $(1-\alpha/2)/(1-\alpha)$ probes (which looks right for successful searches).  He thinks linear probing, "the poorest method of collision handling" is so good that there is "a temptation to regard ... hashing .. as the panacea".  He then proceeds to recommend search trees because "the size of the [hash] table is fixed and cannot be adjusted" and that "deletion is extremely cumbersome unless direct chaining ... is used", and because you need to sort the output anyway, so hashing hasn't really saved you anything.)  Wirth hates arrays anyway: he thinks that the length of the array is part of its type (e.g., in his merge sort, the size of the array is a compile-time constant.)  Wirth's confusion persisted unchanged in his 1986 textbook \cite{Wirth86}.
  
  \item Kruse \cite{Kruse84} explains linear probing (``as the table becomes about half full, there is a tendency toward \textit{clustering}; that is, records start to appear in long strings of adjacent positions with gaps between the strings.  Thus the sequential searches needed to find an empty position become longer and longer.'') and quadratic probing."  "The problem of clustering is thus one of instability; if a few keys happen randomly to be near each other, then it becomes more and more likely that other keys will join them, and the distribution will become progressively more unbalanced" (this analysis seems to overemphasize winner-keeps winning".   Explains quadratic probing, double hashing.  Recommends not to delete from hash tables because "deletions are indeed awkward".  
  
  \item Lewis and Denenberg \cite{LewisDe91} points out that primary clustering is caused both by winner-keeps-winning and by globbing small runs together into big runs.  Gives Knuth's formulae for successful and successful search.   (And concludes that double hashing is a big win.)  Points out that double hashing with ordered probing gives an history-independent data structure, and concludes that successful and unsuccessful searches both cost $(1/\alpha) ln 1/(1-\alpha)$   Describes tombstones and complains that they slow down operations (but doesn't analyze it) and recommends rebuilding if there are too many.  No mention of quadratic probing.

  \item McMillan \cite{McMillan14} recommends linear probing when the load factor of the table is less than $1/2$, and otherwise recommends chaining.
  
  \item Weiss \cite{Weiss00} says "linear probing is not a terrible strategy" (for the case where the load factor is 0.5), pointing out that something else can only save half a probe on average.  Then proceeds to recommend quadratic probing (which he says requires that the load $\le 0.5$.  "Quadratic probing has not yet been mathematically analyzed, although we know that it eliminates primary clustering".  Doesn't explain it as ``winner keeps winning'', but punts on an explanation.
  
  \item Wengrow \cite{Wengrow17} describes a form of chaining in which each linked list is instead a dynamically-sized array, and recommends a load factor of 0.7.
  
 \item \cite{MainSa01} doesn't make the claim that it's ``winner keeps winning''.  They understand it as ``the clusters tend to merge into larger and larger clusters''.

 \item \cite{Standish95} mentions clusters merging as well as winner keeps winning as being the causes for primary clustering.

\end{itemize}

Secondary clustering, which is a low-order-term is discussed by \cite{TremblaySo84}.

Secondary clustering caused by a bad hash function: \cite{WikipediaPrimaryClustering}
}

\end{document}